\documentclass[pdflatex,sn-mathphys-num]{sn-jnl}
 
\usepackage{amssymb,amsmath,amsfonts,mathrsfs,amsthm}
\usepackage[utf8]{inputenc}
\usepackage[T1]{fontenc}
\usepackage[english]{babel}
\usepackage{graphicx}
\usepackage[title]{appendix}
\usepackage{color}
\usepackage{doi}
\usepackage{commath}
\usepackage{array}
\usepackage[normalem]{ulem}
\usepackage{xcolor}%
\usepackage{textcomp}%
\usepackage{manyfoot}%
\usepackage{booktabs}%
\usepackage{dsfont}
\usepackage{algorithm}
\usepackage{algpseudocode}
\usepackage{anyfontsize}

\usepackage{thmtools}
\usepackage{thm-restate}
\usepackage{hyperref}
\usepackage{varioref}
\usepackage{cleveref}

\theoremstyle{plain}
\newtheorem{theorem}{Theorem}%  meant for continuous numbers
%%\newtheorem{theorem}{Theorem}[section]% meant for sectionwise numbers
%% optional argument [theorem] produces theorem numbering sequence instead of independent numbers for Proposition
\newtheorem{lemma}{Lemma}% 
\newtheorem{corollary}{Corollary}
\newtheorem{proposition}{Proposition}% to get separate numbers for theorem and proposition etc.

\theoremstyle{definition}%
\newtheorem{definition}{Definition}%

\theoremstyle{remark}%
\newtheorem{remark}{Remark}%

\raggedbottom
%%\unnumbered% uncomment this for unnumbered level heads

\DeclareMathOperator{\CC}{\mathbb{C}}

\DeclareMathOperator{\RR}{\mathbb{R}}

\DeclareMathOperator{\QQ}{\mathbb{Q}}

\DeclareMathOperator{\ZZ}{\mathbb{Z}}

\DeclareMathOperator{\BB}{\mathbb{B}}

\DeclareMathOperator{\CCQ}{\CC\!\QQ}

\DeclareMathOperator{\GC}{\mathcal{G}}
\DeclareMathOperator{\RC}{\mathcal{R}}
\DeclareMathOperator{\BC}{\mathcal{B}}

\DeclareMathOperator{\WC}{\mathcal{W}}

\DeclareMathOperator{\PC}{\mathcal{P}}

\DeclareMathOperator{\SC}{\mathcal{S}}

\DeclareMathOperator{\LC}{\mathcal{L}}
\DeclareMathOperator{\HC}{\mathcal{H}}
\DeclareMathOperator{\MC}{\mathcal{M}}
\DeclareMathOperator{\AC}{\mathcal{A}}
\DeclareMathOperator{\FC}{\mathcal{F}}

\DeclareMathOperator{\JC}{\mathcal{J}}
\DeclareMathOperator{\IC}{\mathcal{I}}
\DeclareMathOperator{\CCal}{\mathcal{C}}
\DeclareMathOperator{\QC}{\mathcal{Q}}

\DeclareMathOperator{\NotBC}{\overline{\BC}}

\DeclareMathOperator{\rank}{rank}

\DeclareMathOperator{\size}{size}
\DeclareMathOperator{\poly}{poly}
\DeclareMathOperator{\polylog}{polylog}

\DeclareMathOperator{\const}{const}

\DeclareMathOperator{\vol}{vol}

\DeclareMathOperator{\disc}{disc}
\DeclareMathOperator{\herdisc}{herdisc}

\DeclareMathOperator{\detlb}{detlb}

\DeclareMathOperator{\BZero}{\mathbf 0}

\DeclareMathOperator{\prob}{\Pr}

\DeclareMathOperator{\LP}{LP}

\newcommand*{\intint}[2][1]{\{#1, \dots, #2\}}

\DeclareMathOperator{\DP}{D\!P}

\DeclareMathOperator{\Hom}{Hom}

\newcommand\restr[2]{{% we make the whole thing an ordinary symbol
  \left.\kern-\nulldelimiterspace % automatically resize the bar with \right
  #1 % the function
  \vphantom{\big|} % pretend it's a little taller at normal size
  \right|_{#2} % this is the delimiter
  }}

\DeclareMathOperator{\ILPSF}{ILP-SF}
\DeclareMathOperator{\ILPCF}{ILP-CF}
\DeclareMathOperator{\GenILPSF}{Generalized-ILP-SF}

\DeclareMathOperator{\GroupMin}{Group-Min}
\DeclareMathOperator{\LocalILPSF}{Local-ILP-SF}
\DeclareMathOperator{\LocalILPCF}{Local-ILP-CF}
\DeclareMathOperator{\ModularILPSF}{Modular-ILP-SF}

\newenvironment{myproof}[1][Proof]{\begin{proof}[#1]}{\end{proof}}

\newcommand{\TABLE}[3]{
\caption{#1 {(#3)}}
#2
}

% \spnewtheorem*{unnum_problem}{Problem}{\bf}{\rm}

\setcounter{tocdepth}{2}
\setcounter{secnumdepth}{2}

\pdfstringdefDisableCommands{%
  \def\Delta{Delta}
  \def\rho{rho}
  \def\eta{eta}
  \def\delta{delta}
  \def\({}
  \def\){}
  \def\texttt#1{<#1>}%
}

\begin{document}

\title[Delta-modular Problems of Bounded Codimension \ldots]{Delta-modular ILP Problems of Bounded Codimension, Discrepancy, and Convolution
%\thanks{Section \ref{main_result_sec} was prepared within the framework of the Basic Research Program at the National Research University Higher School of Economics (HSE). Subsection \ref{standard_ILP_subs} and Sections \ref{comb_prob_sec}, \ref{exp_ILP_complexity_sec} were prepared under financial support of Russian Science Foundation grant No ?????????.}
}
%
%\titlerunning{Abbreviated paper title}
% If the paper title is too long for the running head, you can set
% an abbreviated paper title here
%
\author[1]{\fnm{Mikhail} \sur{Cherniavskii}}\email{cherniavskii.miu@phystech.edu}

\author*[1]{\fnm{Dmitry V.} \sur{Gribanov}}\email{dimitry.gribanov@gmail.com}

\author[2]{\fnm{Dmitry S.} \sur{Malyshev}}\email{dsmalyshev@rambler.ru}

\author[3]{\fnm{Panos M.} \sur{Pardalos}}\email{panos.pardalos@gmail.com}

% \author[3]{\fnm{Panos M.} \sur{Pardalos}}\email{panos.pardalos@gmail.com}

% \author[2]{\fnm{Nikolai Yu.} \sur{Zolotykh}}\email{nikolai.zolotykh@itmm.unn.ru}

\affil*[1]{\orgdiv{Laboratory of Discrete and Combinatorial Optimization}, \orgname{Moscow Institute of Physics and Technology}, \orgaddress{\street{Institutsky lane 9}, \city{Dolgoprudny, Moscow region}, \postcode{141700}, \country{Russian Federation}}}

\affil[2]{\orgdiv{Laboratory of Algorithms and Technologies for Network Analysis}, \orgname{HSE University}, \orgaddress{\street{136 Rodionova Ulitsa}, \city{Nizhny Novgorod}, \postcode{603093}, \country{Russian Federation}}}

\affil[3]{\orgdiv{Department of Industrial and Systems Engineering}, \orgname{University of Florida}, \orgaddress{\street{401 Weil Hall}, \city{Gainesville}, \postcode{116595}, \state{Florida}, \country{USA}}}

% \affil[2]{\orgname{Lobachevsky State University of Nizhny Novgorod}, \orgaddress{\street{23 Gagarina Avenue}, \city{Nizhny Novgorod}, \postcode{603950}, \country{Russian Federation}}}

% \affil[3]{\orgdiv{Department of Industrial and Systems Engineering}, \orgname{University of Florida}, \orgaddress{\street{401 Weil Hall}, \city{Gainesville}, \postcode{116595}, \state{Florida}, \country{USA}}}

\abstract{
For integers $k,n \geq 0$ and a cost vector $c \in \ZZ^n$, we study two fundamental integer linear programming (ILP) problems:
\begin{gather*}
    \text{(Standard Form)} \quad \max\bigl\{c^\top x \colon Ax = b,\ x \in \ZZ^n_{\geq 0}\bigr\} \text{ with } A \in \ZZ^{k \times n}, \rank(A) = k, b \in \ZZ^k, \\
    \text{(Canonical Form)} \quad \max\bigl\{c^\top x \colon Ax \leq b,\ x \in \ZZ^n\bigr\} \text{ with } A \in \ZZ^{(n+k) \times n}, \rank(A) = n, b \in \ZZ^{n+k}.
\end{gather*}
We present improved algorithms for both problems and their feasibility versions, parameterized by $k$ and $\Delta$, where $\Delta$ denotes the maximum absolute value of $\rank(A) \times \rank(A)$ subdeterminants of $A$. Our main complexity results, stated in terms of required arithmetic operations, are:
\begin{itemize}
    \item Optimization:
    $O(\log k)^{2k} \cdot \Delta^2 / 2^{\Omega(\sqrt{\log \Delta})} + 2^{O(k)} \cdot \poly(\varphi)$,
    
    \item Feasibility:
    $O(\log k)^k \cdot \Delta \cdot (\log \Delta)^3 + 2^{O(k)} \cdot \poly(\varphi)$,
\end{itemize}
where $\varphi$ represents the input size measured by the bit-encoding length of $(A,b,c)$.

We also examine several special cases when $k \in \{0,1\}$, which have applications in: expected computational complexity of ILP with varying right-hand side $b$; ILP problems with generic constraint matrices; ILP problems on simplices. Our results yield improved complexity bounds for these specific scenarios. 

As independent contributions, we present: an $n^2/2^{\Omega(\sqrt{\log n})}$-time algorithm for the tropical convolution problem on sequences indexed by elements of a finite Abelian group of order $n$; a complete and self-contained error analysis of the generalized DFT over Abelian groups in the Word-RAM model.

}%abstract

\maketitle              % typeset the header of the contribution

\keywords{Bounded Sub-determinants, Bounded Codimension, Integer Programming, Convolution, Group Ring, Discrepancy}

\tableofcontents

\section{Introduction}\label{intro_sec}

Let us first define two ILP problems of our interest:
\begin{definition}
Let $A \in \ZZ^{k\times n}$, $\rank(A) = k$, $c \in \ZZ^n$, $b \in \ZZ^{k}$. 
Assume additionally that $k \times k$ sub-determinants of $A$ are co-prime, we will clarify this assumption later (see Remark \ref{ILPSF_GCD_assumption_rm}).
\emph{The ILP problem in the standard form of the co-dimension $k$} is formulated as follows:
\begin{align}
    &c^\top x \to \max \notag\\
    &\begin{cases}
    A x = b\\
    x \in \ZZ_{\geq 0}^n.
    \end{cases}\label{ILP-SF}\tag{\(\ILPSF\)}
\end{align}
For simplicity, we assume that $\dim(\PC) = n-k$ for the corresponding polyhedra $\PC = \{x \in \RR_{\geq 0}^n \colon A x = b\}$.
\end{definition}

\begin{definition}
Let $A \in \ZZ^{(n + k)\times n}$, $\rank(A) = n$, $c \in \ZZ^n$, $b \in \ZZ^{n+k}$. \emph{The ILP problem in the canonical form with $n+k$ constraints} is formulated as follows:
\begin{align}
    &c^\top x \to \max \notag\\
    &\begin{cases}
    A x \leq b\\
    x \in \ZZ^n.
    \end{cases}\label{ILP-CF}\tag{\(\ILPCF\)}
\end{align}
Again, for simplicity, we assume that $\dim(\PC) = n$ for the corresponding polyhedra $\PC = \{x \in \RR^n \colon A x \leq b\}$.
\end{definition}

We study the computational complexity of these problems with respect to $k$ and the absolute values of sub-determinants of $A$. Values of sub-determinants are controlled, using
\begin{definition}
For a matrix $A \in \ZZ^{k \times n}$ and $j \in \intint{k}$, by $$
\Delta_j(A) = \max\left\{\abs{\det (A_{\IC \JC})} \colon \IC \subseteq \intint k,\, \JC \subseteq \intint n,\, \abs{\IC} = \abs{\JC} = j\right\},
$$ we denote the maximum absolute value of determinants of all the $j \times j$ sub-matrices of $A$. 
By $\Delta_{\gcd}(A,j)$, we denote the greatest common divisor of determinants of all the $j \times j$ sub-matrices of $A$.
% By $\Delta_{\gcd}(A,j)$ and $\Delta_{\lcm}(A,j)$, we denote the greatest common divisor and the least common multiplier of nonzero determinants of all the $j \times j$ sub-matrices of $A$, respectively. 
Additionally, let $\Delta(A) = \Delta_{\rank(A)}(A)$ and $\Delta_{\gcd}(A) = \Delta_{\gcd}(A,\rank(A))$. A matrix $A$ with $\Delta(A) \leq \Delta$, for some $\Delta > 0$, is called \emph{$\Delta$-modular}. Note that $\Delta_1(A) = \|A\|_{\max}$. 
% An $\Delta$-modular matrix $A$ with totally non-zero $\rank(A) \times \rank(A)$ sub-determinants is called the \emph{strongly $\Delta$-modular}. 
\end{definition}

\begin{remark}\label{WhuTwoFormulations_rm}
    Why do we consider both formulations in this work? While Problems~\ref{ILP-SF} and~\ref{ILP-CF} are mutually transformable, all known trivial transformations alter at least one of the key parameters $\bigl(k, d, \Delta(A)\bigr)$. The existence of a parameter-preserving transformation is a nontrivial question, resolved by Gribanov et al. \cite{OnCanonicalProblems_Grib}.

    Interestingly, Problem~\ref{ILP-CF} proves to be strictly more general than Problem~\ref{ILP-SF}. For exact equivalence between these formulations, one must augment Problem~\ref{ILP-SF} with additional Abelian group constraints (see Definition \ref{ModularILPSF_def}). Section~\ref{connection_sec} provides detailed discussion of this relationship.

    This duality motivates our use of both formulations: while Problem~\ref{ILP-CF} offers greater generality, Problem~\ref{ILP-SF} remains more prevalent in the ILP literature.
\end{remark}

For the sake of simplicity, we will use the shorthand notations $\Delta := \Delta(A)$, $\Delta_1 := \Delta_1(A)$, and $\Delta_{\gcd} := \Delta_{\gcd}(A)$ in the further text. Additionally, we use the notation $d$ to denote the dimension of a corresponding polyhedron and $\phi$ to denote the input size of the corresponding problems. For the problem \ref{ILP-SF}, we can assume that $\phi = O\bigl(k \cdot n \cdot \log \alpha \bigr)$, where $\alpha$ is the maximum absolute value of elements of $A$, $b$, and $c$. Similarly, for the problem \ref{ILP-CF}, we can assume that $\phi = O\bigl(n \cdot (n + k) \cdot \log \alpha \bigr)$. 

The paper \cite{DiscConvILP} of K.~Jansen \& L.~Rohwedder states that the problem \ref{ILP-SF} can be solved with 
\begin{equation}\label{JR_complexity_eq}
    O(k)^{k} \cdot \Delta_1^{2k} / 2^{\Omega\bigl(\sqrt{\log \Delta_1}\bigr)} + T_{\LP}
\end{equation}
operations, where $T_{\LP}$ denotes the computational complexity to solve the relaxed LP problem. Putting $k = 1$, it leads to an $O\bigl( n+ \Delta_1^2 / 2^{\Omega(\sqrt{\log \Delta_1})}\bigr)$-time algorithm for the \emph{unbounded knapsack problem}. For the feasibility variant of the problem \ref{ILP-SF}, the paper of K.~Jansen \& L.~Rohwedder presents an algorithm, which runs with
\begin{equation}\label{JR_feasibility_complexity_eq}
    O\bigl(\sqrt{k} \cdot \Delta_1\bigr)^{k} \cdot (\log \Delta_1)^2 + T_{\LP}
\end{equation}
operations. 
% Due to T.~Chan \cite{fixed_m_LP_Chan}, $T_{\LP} = O(k)^{k/2} \cdot \log^{3k}(k) \cdot n$, so, the former complexity bounds can be rewritten in a simpler form
% \begin{gather*}
%     O(k)^{k} \cdot \Delta_1^{2k} / 2^{\Omega\bigl(\sqrt{k \cdot \log(k \Delta_1 )}\bigr)} \\
%     O\bigl(\sqrt{k} \cdot \Delta_1\bigr)^{k + o(k)}.
% \end{gather*}
Putting $k=1$, it leads to an $O\bigl(n + \Delta_1 \cdot \log^2 (\Delta_1)\bigr)$-time algorithm for the \emph{unbounded subset-sum problem}. The paper of K.~Jansen \& L.~Rohwedder uses a new dynamic programming technique, composed with the seminal upper bound on the \emph{hereditary discrepancy of $A$} by J.~Spencer\cite{SixDeviations_Spencer}. Additional speedup is achieved, using the \emph{fast tropical convolution} algorithm, due to R.~Williams \cite{APSPViaCircuitComplexity} and D.~Bremner et al. \cite{NecklacesConv}, for the optimization variant of the problem, and using a \emph{fast FFT-based Boolean convolution} algorithm for the feasibility variant of the problem, respectively. Additionally, K.~Jansen \& L.~Rohwedder proved that their computational complexity bounds are optimal with respect to the parameter $\Delta_1$, providing conditional lower bounds.

It is interesting to estimate computational complexity of the considered problems with respect to $\Delta$ instead of $\Delta_1$. Due to the Hadamard's inequality, it seems that such computational complexity bounds could be more general than the computational complexity bounds with respect to $\Delta_1$. Following to this line of research, the paper \cite{OnCanonicalProblems_Grib}, due to D.~Gribanov, I.~Shumilov, D.~Malyshev \& P.~Pardalos, presents an algorithm for the problem \ref{ILP-SF}, which is parameterized by $\Delta$. Actually, the paper \cite{OnCanonicalProblems_Grib} gives an algorithm for some generalized problem, which, in turn, is equivalent to the problem \ref{ILP-CF}.
\begin{theorem}[Corollary~9, Gribanov et al. \cite{OnCanonicalProblems_Grib}]\label{GribanovCanonicalWork_compplexity}
    Any of the problems \ref{ILP-CF} and \ref{ILP-SF} can be solved with 
    $$
    O(\log k)^{2k^2 + o(k^2)} \cdot \Delta^2 \cdot \log^2(\Delta) + \poly(\phi) \quad{operations}.
    $$
\end{theorem}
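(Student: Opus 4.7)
The plan is to reduce both problems to a common $\ILPSF$ instance and then apply the Jansen--Rohwedder algorithm after preprocessing the constraint matrix so that its maximum absolute entry becomes small, effectively trading the parameter $\Delta$ for the parameter $\Delta_1$. First, in the promised Section~\ref{connection_sec}, the equivalence between $\ILPSF$ and $\ILPCF$ of the same co-dimension $k$ is spelled out: any integer solution of $A x = b$ can be parametrized as $x = x_0 + W y$ for an integer basis $W \in \ZZ^{n \times (n-k)}$ of the kernel of $A$, whereupon the non-negativity constraints become an $\ILPCF$ of $n$ inequalities in $n - k$ variables; the classical identity relating $k \times k$ minors of $A$ to $(n-k) \times (n-k)$ minors of $W$ shows that this transformation preserves $\Delta$ up to the factor $\Delta_{\gcd}$. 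Hence it suffices to prove the bound for $\ILPSF$ with $A \in \ZZ^{k \times n}$.

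The decisive step is to find a unimodular $U \in \ZZ^{k \times k}$ such that the equivalent instance $(UA, Ub)$ satisfies
\[
\Delta_1(UA) \leq (\log k)^{O(k)} \cdot \Delta^{1/k}.
\]
Since $\Delta(UA) = |\det U| \cdot \Delta = \Delta$ and the rows of $UA$ form a basis of the rank-$k$ lattice $L \subseteq \ZZ^n$ spanned by the rows of $A$, this amounts to a short-basis problem for $L$. An LLL-type reduction of $L$ yields a basis of $\ell_\infty$-norm at most $\poly(k) \cdot (\det L)^{1/k}$, and Cauchy--Binet gives $\det L \leq \binom{n}{k}^{1/2} \cdot \Delta$; the naive estimate therefore carries an unwanted $\sqrt{n/k}$ factor. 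The refinement to a polylogarithmic-in-$k$ factor independent of $n$ is obtained from the hereditary discrepancy bound of Spencer and Lov\'asz--Spencer--Vesztergombi, applied to compress the column support of $A$ down to $\poly(k, \log \Delta)$ essential columns, followed by a Hermite Normal Form step that packages the remaining information into a $k \times k$ block on which the Minkowski bound applies cleanly.

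Feeding $\Delta_1(UA) \leq (\log k)^{O(k)} \cdot \Delta^{1/k}$ into the Jansen--Rohwedder running time
\[
O(k)^{k} \cdot \Delta_1^{2k} / 2^{\Omega\bigl(\sqrt{\log \Delta_1}\bigr)} + T_{\LP}
\]
produces
\[
(\log k)^{2k^2 + o(k^2)} \cdot \Delta^2 \cdot \log^2(\Delta) + \poly(\phi),
\]
where we use $O(k)^{k} = (\log k)^{o(k^2)}$ (since $k \log k = o(k^2 \log \log k)$), the $\log^2 \Delta$ factor absorbs the $\polylog(\Delta)$ arithmetic overhead of the preprocessing (HNF plus basis reduction on integers of bit-length $O(\log \Delta)$), and $T_{\LP} \leq \poly(\phi)$. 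Combined with Step~1 this yields the claimed complexity for both $\ILPSF$ and $\ILPCF$.

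I expect the principal obstacle to be the $n$-independent preprocessing of the second step. Plain LLL on the row lattice only yields the inadequate bound carrying $\binom{n}{k}^{1/(2k)}$, and stripping this down to a pure $(\log k)^{O(k)}$ factor requires combining a hereditary discrepancy sparsification with the HNF-based compression while verifying compatibility with the coordinate-wise non-negativity constraints of $\ILPSF$ — discrepancy-type column rotations are designed for equality systems and interact subtly with the sign constraints, so tracking that feasibility (rather than just solvability modulo the equality system) is preserved throughout the compression is the technical heart of the argument.
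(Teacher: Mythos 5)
This theorem is not proved in the paper at all: it is quoted verbatim from Corollary~9 of Gribanov et al.\ \cite{OnCanonicalProblems_Grib}, and the proof there (like the machinery of Section~\ref{genILP_main_th_proof} here) is a group-based dynamic program, not a reduction to the Jansen--Rohwedder $\Delta_1$-parameterized algorithm. Your argument hinges on a key claim that is false: there is in general \emph{no} unimodular $U \in \ZZ^{k \times k}$ with $\Delta_1(UA) \leq (\log k)^{O(k)} \cdot \Delta^{1/k}$. Take $A = \bigl(\diag(1,\dots,1,\Delta)\; \BZero\bigr) \in \ZZ^{k \times n}$, so that $\Delta(A) = \Delta$. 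For any unimodular $U$, the $k$-th column of $UA$ equals $\Delta \cdot U_{* k} \neq \BZero$, hence $\Delta_1(UA) \geq \Delta$ no matter what $U$ is; a row-lattice of shape $\ZZ^{k-1} \times \Delta \ZZ$ simply cannot be balanced by row operations, and column operations are unavailable because they destroy the constraints $x \geq \BZero$. Feeding $\Delta_1 = \Delta$ into the Jansen--Rohwedder bound gives $O(k)^k \cdot \Delta^{2k}$, exponentially worse in $k$ than the claimed $\Delta^2$-type bound, so the strategy of trading $\Delta$ for $\Delta_1 \approx \Delta^{1/k}$ cannot be repaired by a better basis-reduction or sparsification step; the obstacle is not the $n$-dependence you flag but the existence of the transformation itself. (Your hereditary-discrepancy idea is on the right track, but in the actual proof it is applied to $M = A_{\BC}^{-1} A$ for a basis $\BC$ of (approximately) maximal $\abs{\det(A_{\BC})}$ — so that all minors of $M$ are bounded and $\herdisc(M)$ is small — to control the number of DP levels, while the DP state space is a group of order roughly $2^{O(k)} \cdot \herdisc(M)^k \cdot \abs{\det(A_{\BC})}$; the quadratic dependence on $\Delta$ then comes from convolution over that group, not from $\Delta_1^{2k}$.)

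A secondary issue: your Step~1 reduces \ref{ILP-SF} to \ref{ILP-CF} (via the kernel parametrization $x = x_0 + Wy$), yet you conclude that it suffices to treat \ref{ILP-SF}. The reverse reduction, \ref{ILP-CF} to a standard-form problem preserving $k$, $\Delta$, and the dimension, does not land in plain \ref{ILP-SF} but in the standard form \emph{with modular constraints} (\ref{MOD-ILP-SF}); as noted in Section~\ref{connection_sec}, some \ref{ILP-CF} instances provably cannot be written as plain \ref{ILP-SF} with the same parameters. So even with a correct algorithm for \ref{ILP-SF}, the case \ref{ILP-CF} would still need the group-constrained generalization.
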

Unfortunately, the paper \cite{OnCanonicalProblems_Grib} does not propose faster computational complexity bounds for feasibility variants of the problems \ref{ILP-CF} and \ref{ILP-SF}, because of non-triviality of tropical and Boolean convolution-type problems, arising in the computational complexity analysis with respect to the parameter $\Delta$. In the current paper, we resolve these difficulties and provide refined computational complexity bounds for the problems \ref{ILP-CF} and \ref{ILP-SF}, and their feasibility variants. More precisely, we prove
\begin{restatable}{theorem}{mainILPTh}
\label{main_ILP_th}
    Any of the problems \ref{ILP-CF} and \ref{ILP-SF} can be solved with
    $$
    O(\log k)^{2k} \cdot \Delta^2 / 2^{\Omega\bigl(\sqrt{\log \Delta}\bigr)} + 2^{O(k)} \cdot \poly(\phi)\quad\text{operations.}
    $$
    The feasibility variant of the problem can be solved with\footnote{Considering only the problem \ref{ILP-SF}, the multiplicative term $\log^3 \Delta$ can be replaced by $\log^2 \Delta$.}
    $$
    O(\log k)^k \cdot \Delta \cdot \log^3 \Delta + 2^{O(k)} \cdot \poly(\phi) \quad\text{operations.}
    $$
\end{restatable}
% \begin{theorem}\label{main_ILP_th}
%     Any of the problems \ref{ILP-CF} and \ref{ILP-SF} can be solved with
%     $$
%     O(\log k)^{2k} \cdot \Delta^2 / 2^{\Omega\bigl(\sqrt{\log \Delta}\bigr)} + 2^{O(k)} \cdot \poly(\phi)\quad\text{operations.}
%     $$
%     % $$
%     % 2^{O(k)} \cdot \left( (f_k \cdot \Delta)^2 / 2^{\Omega\bigl(\sqrt{\log(f_k \cdot \Delta)}\bigr)} + \poly(\phi) \right)
%     % $$ operations, where $f_k = (\log k)^k$.
%     % , where
%     % $$
%     % f_{k,d} = \min \left\{ \begin{aligned}
%     % &k^{k/2},\\
%     % &\bigl(\log k \cdot \log (k+d)\bigr)^{k/2}\\
%     % \end{aligned}\right.
%     % $$ and $d$ is the dimension of the corresponding polyhedron ($d = n$ for \ref{ILP-CF} and $d = n-k$ for \ref{ILP-SF}).
%     The feasibility variant of the problem can be solved with 
%     $$
%     O(\log k)^k \cdot \Delta \cdot \log^3 \Delta + 2^{O(k)} \cdot \poly(\phi) \quad\text{operations.}
%     $$
%     % $$
%     % 2^{O(k)} \cdot \left( f_{k} \cdot  \Delta \cdot \log^3 \Delta + \poly(\phi) \right) \quad\text{operations.}
%     % $$
% \end{theorem}
The proof is given in \Vref{main_ILP_th_proof}. The proposed computational complexity bounds are much better than the bound of \Cref{GribanovCanonicalWork_compplexity}, for all values of the parameters $\Delta$ and $k$, and, especially, for the feasibility-type problems. We further note that by applying Hadamard's inequality to the complexity bounds of \Cref{main_ILP_th}, we obtain complexity bounds
\begin{gather*}
    O(k)^{k + o(k)} \cdot \Delta_1^{2k} + 2^{O(k)} \cdot \poly(\phi), \quad\text{and}\\
    O(k)^{k/2 + o(k)} \cdot \Delta_1^k \cdot (\log \Delta_1)^3 + 2^{O(k)} \cdot \poly(\phi),
\end{gather*}
which are close to the complexity bounds \eqref{JR_complexity_eq} and \eqref{JR_feasibility_complexity_eq} of the work \cite{DiscConvILP}. Thus, in a certain sense, \Cref{main_ILP_th} generalizes \eqref{JR_complexity_eq} and \eqref{JR_feasibility_complexity_eq}.
The complexity bounds of \Cref{main_ILP_th} are conditionally optimal with respect to the parameter $\Delta$, see \Cref{CLB_subs}.

\subsection{Summary of the Obtained Results}\label{summary_subs}

\begin{itemize}
    \item We propose new algorithms for the Problems \ref{ILP-CF}, \ref{ILP-SF}, and their feasibility variants, parametrized by $k$ and $\Delta$. All details could be found in Theorem \ref{main_ILP_th}, a comparison with previous works could be found in Tables \ref{opt_results_tb} and \ref{fea_results_tb}. Table \ref{fea_results_tb} considers the feasibility variants of the problems \ref{ILP-CF} and \ref{ILP-SF}. 
    Additionally, we show that the obtained complexity bounds are conditionally optimal with respect to the parameter $\Delta$, see Subsection \ref{CLB_subs}.

    \item We apply the results above to different problems, which include ILP formulations with $k=0$ and $k=1$:
    \begin{itemize}
        % \item{\bf Gomory's group minimization problem.} We show that Gomory's group minimization problem can be solved in $O\bigl(n + \Delta^2/2^{\Omega(\sqrt{\Delta})}\bigr)$ operations, where $n$ denotes the number of elements and $\Delta$ is the group order. The new computational complexity bound is better than the known bound $O(n \Delta)$ for the cases, when $n = \Omega\bigl(\Delta / \polylog(\Delta)\bigr)$;
        
        \item The expected ILP computational complexity with respect to a varying right-hand side $b$;
        % {\bf The expected ILP computational complexity with respect to a varying right-hand side $b$.} In this direction, we show that Gomory's group minimization problem can be solved in $O\bigl(\Delta^2/2^{\Omega(\sqrt{\Delta})} + \poly(\varphi)\bigr)$ operations. The new complexity As a consequence, assuming that 

        \item ILP problems with generic constraint matrices;

        \item ILP problems on simplices.
    \end{itemize} 
    % \begin{itemize}
    %     \item The expected ILP computational complexity of the problem \ref{ILP-CF} and \ref{ILP-SF} with respect to a varying right-hand side $b$. In this direction we propose 
    % \end{itemize}
    These applications with the corresponding refined complexity bounds are discussed in Section \ref{app_sec}.

    \item As a result of independent interest, we propose an $n^2/2^{\Omega(\sqrt{\log n})}$-operations algorithm for the generalized tropical convolution with respect to Abelian groups of order $n$. The formal definition of the problem could be found in Section \ref{conv_sec}, the corresponding result is formulated in Theorem \ref{group_min_conv_th}. 
    
    \item As a second result of independent interest, we present a complete, self-contained error analysis of the generalized Discrete Fourier Transform for Abelian groups with respect to the Word-RAM computational model. As a consequence, we show that generalized Boolean convolution with respect to Abelian groups of order $n$ can be solved in $O( n \log n )$ operations with rational numbers if size $O(\log n)$. We were unable to locate this result in the literature. The formal definition of the problem could be found in Subsection \ref{generalized_DFT_subs}, the corresponding result is formulated in Theorem \ref{group_DFT_th}.
\end{itemize}

\begin{table}[h!]
    \TABLE
    {The computational complexity bounds for the problems \ref{ILP-CF} and \ref{ILP-SF}\label{opt_results_tb}}
    {\begin{tabular}{||m{12em}|m{14em}|m{12em}||}
    \hline
    \hline
    Reference: & Time: & Remark  \\
    \hline
    \hline
    Jansen \& Rohwedder \cite{DiscConvILP} & $O(k)^{k} \cdot \Delta_1^{2k} / 2^{\Omega\bigl(\sqrt{\log \Delta_1 }\bigr)}$ & only for \ref{ILP-SF} \\
    \hline
    Gribanov et al. \cite{OnCanonicalProblems_Grib} & $O(\log k)^{2k^2 + o(k^2)} \cdot \Delta^2 \cdot \log^2(\Delta)$ & \\
    \hline
    {\color{red} this work} & $O(\log k)^{2k} \cdot \Delta^2 / 2^{\Omega\bigl(\sqrt{\log \Delta}\bigr)}$ & \\
    % \hline
    % {\color{red} this work} & $2^{O(k)} \cdot (f_{k} \cdot \Delta)^2 / 2^{\Omega\bigl(\sqrt{\log(f_{k} \cdot \Delta)}\bigr)},$ & the additive term $2^{O(k)} \cdot \poly(\phi)$ is skipped \\
    % & where $f_{k} = (\log k)^k$ &  \\
    % \hline
    % {\color{red} this work} & $2^{O(k)} \cdot (g_{k,\Delta} \cdot \Delta)^2 / 2^{\Omega\bigl(\sqrt{\log(g_{k,\Delta} \cdot \Delta)}\bigr)}$ & $g_{k,\Delta} = \bigl(\log k \cdot \log (k \Delta)\bigr)^{k/2}$ \\
    \hline
    \hline
    \end{tabular}}{the additive term $2^{O(k)} \cdot \poly(\phi)$ is skipped}
    % {The additive factor $2^{O(k)} \cdot \poly(\phi)$ is skipped}
\end{table}

\begin{table}[h!]
    \TABLE
    {The computational complexity bounds for the feasibility variants of the problems \ref{ILP-CF} and \ref{ILP-SF}\label{fea_results_tb}}
    {\begin{tabular}{||m{12em}|m{14em}|m{12em}||}
    \hline
    \hline
    Reference: & Time: & Remark  \\
    \hline
    \hline
    Jansen \& Rohwedder \cite{DiscConvILP} & $O\bigl(\sqrt{k} \cdot \Delta_1\bigr)^{k} \cdot (\log \Delta_1)^2$ & only for \ref{ILP-SF} \\
    \hline
    Gribanov et al. \cite{OnCanonicalProblems_Grib} & $O(\log k)^{2k^2 + o(k^2)} \cdot \Delta^2 \cdot \log^2(\Delta)$ & \\
    \hline
    {\color{red} this work} & $O(\log k)^{k} \cdot \Delta \cdot \log^3 \Delta$  & \\
    % \hline
    % {\color{red} this work} & $2^{O(k)} \cdot (g_{k,\Delta} \cdot  \Delta) \cdot \log^3(g_{k,\Delta} \cdot  \Delta)$ & $g_{k,\Delta} = \bigl(\log k \cdot \log (k \Delta)\bigr)^{k/2}$ \\
    \hline
    \hline
    \end{tabular}}{the additive term $2^{O(k)} \cdot \poly(\phi)$ is skipped}
    % {The additive factor $2^{O(k)} \cdot \poly(\phi)$ is skipped}
\end{table}

% The summary of our main results, detailed in Theorem \ref{main_ILP_th}, 
% % and \ref{main_ILP_logD_th}, 
% could be found in Tables \ref{opt_results_tb} and \ref{fea_results_tb}.

% We apply these results to different partial cases with $k=0$ and $k=1$: the expected ILP computational complexity of the problem \ref{ILP-CF} and \ref{ILP-SF} with respect to a varying right-hand side $b$, ILP problems with generic constraint matrices, ILP problems on simplices. These applications with the corresponding refined complexity bounds are discussed in Section \ref{app_sec}.

% Finally, as a result of independent interest, we propose an $n^2/2^{\Omega(\sqrt{\log n})}$-time algorithm for the tropical convolution problem on sequences, indexed by an Abelian group of the order $n$. The formal definition of the problem could be found in Section \ref{conv_sec}, the corresponding result is formulated in Theorem \ref{group_min_conv_th}. Additionally, we present a complete, self-contained error analysis of the generalized Discrete Fourier Transform for Abelian groups with respect to the Word-RAM computational model. The formal definition of the problem could be found in Subsection \ref{generalized_DFT_subs}, the corresponding result is formulated in Theorem \ref{group_DFT_th}.

\subsection{Conditional Lower Bounds}\label{CLB_subs}
It turns out that the computational complexity bounds of Theorem \ref{main_ILP_th} can not be significantly reduced with respect to the parameter $\Delta$, assuming the correctness of some strong hypothesis from the fine-grained complexity theory. This fact is a straightforward consequence of the Hadamard's inequality and following Theorems, due to K.~Jansen \& L.~Rohwedder \cite{DiscConvILP}.

\begin{theorem}[K.~Jansen \& L.~Rohwedder \cite{DiscConvILP}]
    Let $k \in \ZZ_{\geq 1}$. For any $\varepsilon > 0$ and any computable function $f$, there is not an algorithm that solves the problem \ref{ILP-SF} in time $f(k) \cdot \bigl( n^{2-\varepsilon} + (\Delta_1 + \norm{b}_{\infty})^{2k-\varepsilon} \bigr)$, unless there exists a truly sub-quadratic algorithm for the tropical convolution.
\end{theorem}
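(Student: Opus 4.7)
The plan is to prove the theorem by contrapositive: I assume an ILP-SF algorithm of the forbidden running time and construct a truly sub-quadratic algorithm for tropical (min-plus) convolution, contradicting the standard MinPlusConvolution conjecture of fine-grained complexity.

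First, I would reduce a length-$N$ tropical convolution on sequences $u, v \in \mathbb{Z}^N$ to ILP-SF instances of codimension $k$. Set $M := \lceil (2N)^{1/k} \rceil$ and expand each index $i \in \{0, \ldots, N-1\}$ in base $2M$ as a digit vector $(i_1, \ldots, i_k) \in \{0, \ldots, 2M-1\}^k$. Introduce $2N$ nonnegative integer variables, $x_i$ meaning ``pick index $i$ from $u$'' and $y_j$ meaning ``pick index $j$ from $v$''. In row $\ell$ of the constraint matrix place $M + i_\ell$ in each $x_i$-column and $3M + j_\ell$ in each $y_j$-column, and set the right-hand side of row $\ell$ to $4M + t_\ell$ for a chosen grid target $(t_1, \ldots, t_k)$. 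The shifts $M$ and $3M$ force exactly one $x$-variable and one $y$-variable to equal $1$ in any feasible solution, after which the surviving digit condition $i_\ell + j_\ell = t_\ell$ is carry-free because each digit sum stays below the next place value. Maximising $-\sum_i u_i x_i - \sum_j v_j y_j$ returns minus one entry of the $k$-dimensional grid min-plus convolution of $u$ and $v$.

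Second, I would convert grid answers back to the 1D convolution and account parameters. Each 1D target $t \in \{0, \ldots, 2N-2\}$ admits $O(M^{k-1})$ digit decompositions and $c_t$ is the minimum of the grid values at these decompositions, so aggregating all 1D outputs costs $O(N \cdot M^{k-1}) = O(N^{2-1/k})$, truly sub-quadratic. Each ILP-SF instance has $n = \Theta(N)$, $\Delta_1 = \Theta(M) = \Theta(N^{1/k})$, and $\|b\|_\infty = \Theta(M)$, so the hypothetical algorithm handles one instance in $f(k) \cdot (N^{2-\varepsilon} + N^{(2k-\varepsilon)/k}) = f(k) \cdot N^{2 - \varepsilon/k}$.

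The main obstacle is that the grid has $\Theta(N)$ targets, so naively one call per target costs $\Theta(N^{3 - \varepsilon/k})$ and destroys sub-quadraticity. I would resolve this exactly as in the classical reduction from min-plus convolution to unbounded knapsack: the hypothetical algorithm, by the natural dynamic programming structure of linear Diophantine systems with codimension $k$, must fill a state space of size $\Theta((\Delta_1 + \|b\|_\infty)^k) = \Theta(N)$ inside its claimed running time, just as a single unbounded-knapsack call with capacity $C$ tabulates optima for every capacity $\leq C$. Reading all grid entries off a single invocation and post-processing gives a tropical convolution algorithm in $f(k) \cdot N^{2 - \varepsilon/k} + O(N^{2 - 1/k})$, truly sub-quadratic in $N$ and contradicting the MinPlusConvolution conjecture. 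For $k = 1$ the construction specialises to the known Cygan--Mucha--W\k{e}grzycki--W\l{}odarczyk-style reduction for unbounded knapsack, and the argument proceeds verbatim.
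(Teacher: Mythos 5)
Your gadget does not do what you claim. With unbounded nonnegative integer variables, the shifts $M$ and $3M$ do not force ``exactly one $x$-variable and one $y$-variable equal to $1$'': the row sums only constrain the total multiplicities $a=\sum_i x_i$, $b=\sum_j y_j$ through $aM+3bM+(\text{digit terms})=4M+t_\ell$, and many unintended patterns satisfy this. Concretely, for the target with digit vector $(4,0,\dots,0)$, setting $x_1=4$ and all other variables to $0$ gives row sums $4(M+1)=4M+4$ and $4M$ in the remaining rows, so it is feasible, and its objective value $4u_1$ has nothing to do with $\min_{i+j=t}(u_i+v_j)$; similarly $(a,b)=(2,0),(3,0),(2,1),(0,1),\dots$ all admit feasible supports for suitable digits. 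Repairing this requires extra cardinality rows (which raises $k$ and breaks the exponent bookkeeping that gives $(\Delta_1+\norm{b}_{\infty})^{2k-\varepsilon}\approx N^{2-\varepsilon/k}$) or a genuinely different choice of offsets, so the claimed per-call correctness of the reduction fails as written.

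The deeper gap is your resolution of the ``$\Theta(N)$ targets'' obstacle. The theorem quantifies over \emph{all} algorithms with the stated running time; in a fine-grained reduction the hypothetical solver must be used as a black box whose only observable is its output. Your argument that it ``must fill a state space of size $\Theta((\Delta_1+\norm{b}_{\infty})^k)$'' and that one can ``read all grid entries off a single invocation'' assumes a dynamic-programming internal structure that an arbitrary algorithm need not have, so it proves nothing; and without it, you need $\Theta(N)$ oracle calls, giving $\Theta(N^{3-\varepsilon/k})$, which is not subquadratic. The known MinConv-hardness results you invoke (for unbounded knapsack, and Jansen--Rohwedder's own lower bound, which this paper only cites without proof) do not peek inside the solver: they route the reduction through an intermediate problem (e.g.\ superadditivity testing, or a single carefully constructed instance whose one output certifies all targets simultaneously), and that encoding is precisely the missing ingredient here. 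Until you supply a mechanism that extracts all $2N-1$ convolution values from $N^{o(1)}$ black-box calls on instances with $n=\Theta(N)$ and $\Delta_1+\norm{b}_{\infty}=\Theta(N^{1/k})$, the core of the proof is absent.
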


\begin{theorem}[K.~Jansen \& L.~Rohwedder \cite{DiscConvILP}]
    Let $k \in \ZZ_{\geq 1}$. If the SETH holds, then, for every $\varepsilon > 0$ and every computable function $f$, there is not an algorithm that solves the feasibility variant of the problem \ref{ILP-SF} in time $n^{f(k)} \cdot (\Delta_1 + \norm{b}_{\infty})^{k - \varepsilon}$.
\end{theorem}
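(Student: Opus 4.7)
The plan is to reduce $q$-CNF-SAT on $N$ variables (for a constant $q$ to be chosen) to the feasibility variant of \ref{ILP-SF} of codimension $k$, producing an instance with $n = \poly(N)$ and $M := \Delta_1 + \|b\|_\infty = 2^{(1 + o(1))\, N/k}$. Under SETH, for every $\delta > 0$ there exists $q$ such that $q$-SAT admits no $2^{(1-\delta) N}$-time algorithm; fix $\delta = \varepsilon/(2k)$ and the corresponding $q$. A hypothetical algorithm running in time $n^{f(k)} \cdot M^{k - \varepsilon}$ for the feasibility variant would then decide the resulting $q$-SAT instances in time
$$
\poly(N)^{f(k)} \cdot 2^{(k-\varepsilon)(1 + o(1)) N/k} \;=\; 2^{(1 - \varepsilon/k + o(1))\, N},
$$
contradicting SETH.

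The core construction is a $k$-dimensional analogue of the classical Karp SAT-to-Subset-Sum encoding. First, apply the Impagliazzo--Paturi--Zane sparsification lemma so that, at the cost of a $2^{\mu N}$ multiplicative overhead (for arbitrarily small $\mu = \mu(\varepsilon, k) > 0$), one may assume the $q$-SAT instance has $m = O_{\mu}(N)$ clauses. Next, build $O(N)$ nonnegative integer ``selector'' columns in a fixed base $B > q + 1$, with one digit per variable and one per clause, so that the $q$-SAT instance is satisfiable iff a $\{0,1\}$-subset of columns sums to a fixed integer target; the choice $B > q+1$ prevents carries within any single digit. Finally, partition the $N + m = O(N)$ digit positions into $k$ contiguous blocks of equal width, and interpret the integer obtained from the $i$-th block as the $i$-th coordinate of $b \in \ZZ^k$ and the $i$-th row of $A \in \ZZ^{k \times n}$. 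Because $B$ is constant and the column supports are local (see below), $\|A\|_{\max}$ and $\|b\|_\infty$ are both bounded by $B^{\lceil (N+m)/k \rceil} = 2^{O(N/k)}$, while $n = \poly(N)$. The coprimality condition on the $k \times k$ subdeterminants of $A$ imposed by the definition of \ref{ILP-SF} can be enforced by a standard unimodular preprocessing that affects the parameters only up to absolute constants.

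The main obstacle is to drive the hidden constant in $M = 2^{O(N/k)}$ down to $1 + o(1)$, which requires that every selector column be supported inside a single block of the digit partition. The only columns that naively violate this are those of clauses whose literals span two different blocks; these are handled either by re-ordering variables so that each clause's literal-digits and its clause-digit fall into one block (a block-respecting permutation, which exists for $q$-SAT because each clause has only $q$ literals and $q$ is constant), or by splitting each offending column into a pair of columns linked via $O(1)$ additional low-order digits per link. Both routes introduce at most $\poly(N)$ auxiliary variables and no extra $k \times k$ block beyond the original $k$. Once the encoding constant is reduced to $1 + o(1)$, the parameter computation of the first paragraph—combined with a sufficiently small choice of $\mu$ in the sparsification step so that $\mu + (1 + o(1))(1 - \varepsilon/k) < 1$—yields the promised sub-SETH algorithm for $q$-SAT, completing the proof.
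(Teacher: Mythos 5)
This theorem is not proved in the paper at all --- it is imported verbatim from Jansen \& Rohwedder \cite{DiscConvILP} --- so your attempt has to be judged against the argument actually needed, and there it has a genuine gap. The fatal step is the claim that a Karp-style digit encoding yields $M := \Delta_1 + \norm{b}_\infty = 2^{(1+o(1))N/k}$. Your construction uses $N + m$ digit positions in a base $B > q+1$, so even if every column is supported in a single block, the block magnitude is $B^{(N+m)/k} = 2^{C N/k}$ with $C = (1 + m/N)\log_2 B$, a constant strictly larger than $1$; worse, after sparsification $m \leq c(q,\mu)\,N$ where $c(q,\mu) \to \infty$ as $\mu \to 0$, so shrinking the sparsification overhead inflates $C$ rather than leaving it at $1+o(1)$. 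Your proposed repairs (block-respecting variable orderings, splitting columns that straddle blocks) only address \emph{locality} of column supports; they do nothing about the \emph{density} of the encoding, which is the actual obstruction. What the reduction delivers is therefore only ``no $M^{k/C-\varepsilon}$ algorithm'' for some $C>1$, which is strictly weaker than the stated $M^{k-\varepsilon}$ bound.

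Obtaining the tight exponent requires packing essentially all $2^{(1-o(1))N}$ assignments into a $k$-dimensional target box of volume $M^k = 2^{(1+o(1))N}$, and this is exactly why the SETH-hardness of Subset Sum was open until Abboud, Bringmann, Hermelin and Shabtay resolved it with a substantially more intricate layered construction; Jansen \& Rohwedder's proof of the present theorem builds on that machinery rather than on Karp's reduction. A secondary issue you gloss over: \ref{ILP-SF} has unbounded nonnegative variables, so you must also argue that multiplicities larger than one cannot create carries that merge digits --- routine, but it needs to be said. As written, the proposal does not establish the theorem.
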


The following corollaries give the desired conditional lower bounds with respect to the parameter $\Delta$.
\begin{corollary}\label{optLB_cor}
    Let $k \in \ZZ_{\geq 1}$. For any $\varepsilon > 0$ and any computable function $f$, there is not an algorithm that solves any of the problems \ref{ILP-SF} and \ref{ILP-CF} in time $f(k) \cdot ( n^{2-\varepsilon} + \Delta^{2-\varepsilon} )$, unless there exists a truly sub-quadratic algorithm for the tropical convolution.
\end{corollary}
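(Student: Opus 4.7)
The plan is to derive the corollary directly from the first Jansen--Rohwedder hardness theorem by using Hadamard's inequality to convert the $\Delta_1$-based lower bound into a $\Delta$-based one, and to leverage the equivalence between \ref{ILP-SF} and \ref{ILP-CF} discussed in Section~\ref{connection_sec} so that a single argument handles both problems.

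First, I would assume for contradiction that some algorithm $\mathcal A$ solves \ref{ILP-SF} in time $f(k) \cdot \bigl(n^{2-\varepsilon} + \Delta^{2-\varepsilon}\bigr)$ for some computable $f$ and some $\varepsilon > 0$. Applying Hadamard's inequality to every $k \times k$ sub-matrix $B$ of $A$ yields $\abs{\det B} \leq k^{k/2} \cdot \norm{B}_{\max}^k$, and hence
\[
\Delta \leq k^{k/2} \cdot \Delta_1^k \leq k^{k/2} \cdot \bigl(\Delta_1 + \norm{b}_\infty\bigr)^k.
\]
Raising to the power $2-\varepsilon$ and absorbing the resulting factor $k^{(2-\varepsilon)k/2}$ into a new computable function $g(k) := f(k) \cdot k^{(2-\varepsilon)k/2}$, the running time of $\mathcal A$ is bounded above by $g(k) \cdot \bigl(n^{2-\varepsilon} + (\Delta_1 + \norm{b}_\infty)^{2k - k\varepsilon}\bigr)$. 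Since $k \geq 1$ and $\Delta_1 + \norm{b}_\infty \geq 1$, we have $2k - k\varepsilon \leq 2k - \varepsilon$, so this is in turn at most $g(k) \cdot \bigl(n^{2-\varepsilon} + (\Delta_1 + \norm{b}_\infty)^{2k - \varepsilon}\bigr)$, which directly contradicts the first Jansen--Rohwedder lower bound applied to $\mathcal A$, unless tropical convolution admits a truly sub-quadratic algorithm.

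To extend the bound to \ref{ILP-CF}, I would use the standard reduction $\{Ax = b,\, x \geq 0\} \Longleftrightarrow \{Ax \leq b,\, -Ax \leq -b,\, -x \leq 0\}$, which converts an \ref{ILP-SF} instance of codimension $k$ into an \ref{ILP-CF} instance of codimension $2k$ with the same number of variables $n$. By Laplace expansion along the rows coming from $-I_n$, every $n \times n$ sub-determinant of the new constraint matrix equals, up to sign, some $j \times j$ sub-determinant of $A$ with $j \leq k$, whose absolute value is again at most $k^{k/2} \cdot \Delta_1^k$ by Hadamard. Hence the same bookkeeping pushes any hypothetical fast algorithm for \ref{ILP-CF} back to a contradicting algorithm for \ref{ILP-SF}. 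The main delicate point is the exponent matching after the Hadamard substitution: an exponent of $2 - \varepsilon$ on $\Delta$ becomes $(2-\varepsilon)k = 2k - k\varepsilon$ on $\Delta_1 + \norm{b}_\infty$, which is of the Jansen--Rohwedder form $2k - \delta$ with $\delta = k\varepsilon > 0$ as required; beyond this, the argument is a routine reduction.
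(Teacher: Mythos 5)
Your argument is correct, and for the \ref{ILP-SF} half it coincides with the paper's: the intended proof is exactly the Hadamard substitution $\Delta \leq k^{k/2}\Delta_1^k \leq k^{k/2}(\Delta_1+\norm{b}_\infty)^k$, absorbing $k^{(2-\varepsilon)k/2}$ into a new computable function and landing inside the Jansen--Rohwedder forbidden form (your exponent bookkeeping $2k-k\varepsilon \leq 2k-\varepsilon$ for $k\geq 1$, $\Delta_1+\norm{b}_\infty \geq 1$ is the right way to see it). Where you diverge is the \ref{ILP-CF} half: the paper transfers hardness through its parameter-preserving reductions of Section~\ref{connection_sec} (Lemmas \ref{ILPCF_to_ILPSF_lm} and \ref{ILPSF_to_ILPCF_lm}, via \ref{MOD-ILP-SF}), which keep $k$, $\Delta$ and the dimension exactly, whereas you use the trivial reduction $\{Ax=b,\,x\geq 0\}\to\{Ax\leq b,\,-Ax\leq -b,\,-x\leq 0\}$, which doubles the codimension and only bounds the new rank-order minors (they are $\pm$ the $j\times j$ minors of $A$ for $j\leq k$, so $\Delta$ of the new matrix can exceed $\Delta(A)$, though it stays below $k^{k/2}\Delta_1^k$ as you note). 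For this corollary that looseness is harmless, since $f$ is an arbitrary computable function of $k$ and only the exponent on $\Delta$ matters, so your route is a perfectly valid and more elementary self-contained proof; what the paper's reductions buy is the exact correspondence of $k$, $\Delta$ and dimension (and full-dimensionality of the resulting \ref{ILP-CF} instance, which your reduced instance lacks, brushing against the ``w.l.o.g.\ $\dim(\PC)=n$'' convention without affecting the hardness conclusion), which is what makes the same lemmas reusable for the upper-bound transfer in Theorem \ref{main_ILP_th} and not just for this lower bound.
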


\begin{corollary}\label{feaLB_cor}
    Let $k \in \ZZ_{\geq 1}$. If the SETH holds, then, for every $\varepsilon > 0$ and every computable function $f$, there is not an algorithm that solves the feasibility variant of any the problems \ref{ILP-SF} and \ref{ILP-CF} in time $n^{f(k)} \cdot \Delta^{1 - \varepsilon}$.
\end{corollary}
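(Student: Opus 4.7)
The plan is to lift the Jansen--Rohwedder SETH-based lower bound, stated in terms of $\Delta_1 + \|b\|_\infty$, to a lower bound in terms of $\Delta$, using Hadamard's inequality as the bridge between the two scales.

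First, I would assume, for contradiction, that for some $\varepsilon > 0$ and some computable function $f$, there is an algorithm $\mathcal{A}$ that solves the feasibility variant of \ref{ILP-SF} in time $n^{f(k)} \cdot \Delta^{1 - \varepsilon}$. Hadamard's inequality, applied to any integer $k \times k$ sub-matrix of $A$ (whose entries are bounded in absolute value by $\Delta_1$), yields $\Delta \leq k^{k/2} \cdot \Delta_1^k$, and a fortiori $\Delta \leq k^{k/2} \cdot (\Delta_1 + \|b\|_\infty)^k$. Substituting this estimate into the running time of $\mathcal{A}$ gives
$$
n^{f(k)} \cdot \Delta^{1 - \varepsilon} \;\leq\; k^{k(1 - \varepsilon)/2} \cdot n^{f(k)} \cdot (\Delta_1 + \|b\|_\infty)^{k - k\varepsilon}.
$$
The factor $k^{k(1 - \varepsilon)/2}$ depends only on $k$ (and on the fixed $\varepsilon$), so, assuming as usual that the non-trivial instances have $n \geq 2$, it can be absorbed into $n^{\tilde{f}(k)}$ for a new computable function $\tilde{f}(k) := f(k) + \lceil (k \log_2 k)/2 \rceil$. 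Setting $\varepsilon' := k\varepsilon > 0$, the running time is thus at most $n^{\tilde{f}(k)} \cdot (\Delta_1 + \|b\|_\infty)^{k - \varepsilon'}$, which directly contradicts the Jansen--Rohwedder theorem quoted just above the corollary. This settles the case of \ref{ILP-SF}.

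For \ref{ILP-CF}, I would invoke the polynomial equivalence between the two formulations announced in Section \ref{connection_sec}, which preserves the parameters $n$, $k$, and $\Delta$ up to polynomial factors that can again be absorbed into $\tilde{f}$. The only step requiring genuine care is the Hadamard estimate together with the bookkeeping of the $k$- and $\varepsilon$-dependent constants into the allowed computable function; there is no deeper obstacle, since the argument is essentially a black-box reduction to the already-established Jansen--Rohwedder SETH-based lower bound.
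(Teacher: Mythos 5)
Your argument is correct and matches the paper's intended proof: the paper likewise derives this corollary as a straightforward consequence of Hadamard's inequality (giving $\Delta \leq k^{k/2}\Delta_1^k$, so a $\Delta$-parameterized algorithm would violate the Jansen--Rohwedder SETH lower bound stated in terms of $\Delta_1 + \|b\|_\infty$) together with the $k$- and $\Delta$-preserving reductions of Section \ref{connection_sec} to transfer the bound between \ref{ILP-SF} and \ref{ILP-CF}. Your bookkeeping of the $k^{k(1-\varepsilon)/2}$ factor into the computable exponent and the substitution $\varepsilon' = k\varepsilon$ fill in exactly the details the paper leaves implicit.
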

As it was already noted, the proof is a straightforward consequence of the Hadamard's inequality, Theorems above, and the reductions between the problems \ref{ILP-CF} and  \ref{ILP-SF} (see Section \ref{connection_sec}).

\subsection{Proof Sketch of Main \Cref{main_ILP_th}}\label{proof_outline_sec}

To simplify our proof of \Cref{main_ILP_th} and to generalize our results, we reduce both the \ref{ILP-CF} and \ref{ILP-SF} problems to a single formulation, which constitutes the formulation \ref{ILP-SF} with an additional Abelian group constraint:
\begin{align}
    &c^\top x \to \max\notag\\
    &\begin{cases}
        A x = b\\
        \sum\limits_{i=1}^n x_i \cdot g_i = g_0\\
        x \in \ZZ_{\geq 0}^n.
    \end{cases}\tag{\(\GenILPSF\)}\label{GEN-ILP-SF-Duplicate}
\end{align}
Here $g_0,g_1, \dots, g_n$ are elements of a finite Abelian group.
% We then apply \Cref{genILP_main_th} to work with this problem. 
A connection between these problems is discussed in \Cref{connection_sec}, where the corresponding generalized analogue of \Cref{main_ILP_th} is also presented as \Cref{genILP_main_th}.

However, the exposition style we have chosen might obscure the main idea behind the proof of \Cref{main_ILP_th}. To provide more clarity, we now briefly outline key steps of the proof of \Cref{main_ILP_th} for the following special case, involving only the \ref{ILP-SF} problem
\begin{align}
    &c^\top x \to \max\notag\\
    &\begin{cases}
    A x = b\\
    x \in \ZZ_{\geq 0}^n.
    \end{cases}\tag{\(\ILPSF\)}\label{ILP-SF-Duplicate}
\end{align}

We employ a modification of the approach from Jansen \& Rohwedder \cite{DiscConvILP}, which combines dynamic programming principles with solution rounding techniques, based on discrepancy theory. To present our algorithm and compare it with the approach of \cite{DiscConvILP}, we first briefly outline the required background from discrepancy theory. Recall that the discrepancy and hereditary discrepancy of a matrix $A$ are defined as follows:
\begin{gather*}
\disc(A) = \min_{z \in \{-1,\, 1\}^n} \left\| A z  \right\|_\infty,\\
\herdisc(A) = \max_{\IC \subset \intint n} \disc(A_{* \IC}).
\end{gather*}
We will need the following result from Jiang \& Reis \cite{TightDiscDetBound}, which together with \cite[Corollary 13.3.3]{AlonSpencerBook} states that:
\begin{gather}
\disc(A) = O\Bigl( \log k \cdot \detlb(A) \Bigr), \quad\text{where}\label{DiscDetBound_Duplicate_eq}\\
\detlb(A) = \max\limits_{t \in \intint k} \sqrt[t]{\Delta_t(A)}.\notag
\end{gather} 
A more detailed exposition of the discrepancy theory results we employ in this work can be found in \Cref{prediscrep_sec}.

The difference between our approach and the approach of \cite{DiscConvILP} is that \cite{DiscConvILP} utilizes discrepancy upper bounds for the matrix $A$. In our work, we construct a certain regularization factor $B \in \ZZ^{k \times k}$ and use a discrepancy upper bound for the matrix $B^{-1} \cdot A$ instead of $A$ alone. This leads to the following theorem. 
\begin{theorem}\label{DP_th_Duplicate}
    Let $\rho \in \ZZ_{>0}$ be a given value, such that $\norm{z^*} \leq (6/5)^\rho$, for some optimal integer solution of the problem \ref{ILP-SF-Duplicate}. Additionally, for a given nondegenerate $B \in \ZZ^{k \times k}$, let $M = B^{-1} \cdot A$, $\eta \in \ZZ_{> 0}$ be an upper bound on $\herdisc(B^{-1} \cdot A)$, and $\delta = \abs{ \det B}$. Then the problem \ref{ILP-SF-Duplicate} can be solved in 
    \begin{equation*}
        \rho \cdot \tau^2 / 2^{\Omega\bigl(\sqrt{\log \tau}\bigr)} + O(n)
    \end{equation*}
    operations with vectors from $\ZZ^k$, where $\tau = O(\eta)^k \cdot \delta$. The feasibility variant of the problem can be solved in 
    \begin{equation*}
        O(\rho \cdot \tau \cdot \log \tau + n) \quad\text{operations.}
    \end{equation*}
\end{theorem}
We omit its proof since a substantially more general \Cref{DP_th} will be proved later in this paper. Briefly, the proof utilizes a dynamic program $\DP(i,b')$, parameterized by $i$ and $b'$ from the following sets:
\begin{gather*}
    i \in \intint[0]{\rho},\\
    b' \in \ZZ^k \cap \left( 2^{i-\rho} b + (\gamma B) \cdot [-1,1]^k\right),
\end{gather*}
where $\gamma = O(\eta)$.
% and $[-1,1]^k$ denotes the unit $\ell_\infty$-ball. 
% As it was already noted, the work \cite{DiscConvILP} does not use a regularization factor $B$. 
% In other words, in the context of this work one could simply set $B = I$.
In the context of the work \cite{DiscConvILP}, one could simply set $B$ equal to the identity matrix.
To enumerate values of $b'$, we use an algorithmic variant of the following well known fact.
% (see, for example, Seb\H{o} \cite{IntroEmptyLatticeSimplicies}).
\begin{proposition}
    For an arbitrary nondegenerate $B \in \ZZ^{k \times k}$, we have
    \begin{equation*}
        \abs{\ZZ^k \cap B \cdot [-1,1]^k} = \vol\bigl(B \cdot [-1,1]^k\bigr) = \abs{\det B}.
    \end{equation*}
\end{proposition}
To compute the values of $\DP(i,b')$, the authors of \cite{DiscConvILP} proposed to use algorithms for fast tropical and Boolean sequence convolution. However, for handling values of $b'$ from our generalized sets, standard convolution algorithms are unsuitable, and we require generalized tropical and Boolean convolution algorithms, relative to Abelian groups. These algorithms are described in \Cref{conv_sec}. In turn, \Cref{tiling_group_sec} focuses on an algorithmic construction of the corresponding Abelian groups that will serve a basement for the generalized convolution algorithms.

Assuming now that \Cref{DP_th_Duplicate} holds, we explain how to properly bound the parameters $\rho$, $\delta$, and $\eta$, appearing in its statement. The parameter $\rho$ can be bounded, using a standard technique, based on the proximity result, stating that the optimal integer solution $z^*$ of \ref{ILP-SF-Duplicate} is close to the optimal solution $x^*$ of the corresponding relaxed linear program. In our case, we use the result from Lee et al. \cite{ProximityViaSparsity}, which guarantees the existence of such $z^*$, satisfying $\norm{x^* - z^*}_1 = O(k^2 \cdot \Delta^{1 + 1/k})$. The latter allows us to set $\rho = O(\log(k \Delta))$. For the generalized problem formulation \ref{GEN-ILP-SF-Duplicate}, we use an extension of this result from Gribanov et al. \cite{OnCanonicalProblems_Grib}.

The selection of parameters $\eta$ and $\delta$ is a more challenging task that essentially reduces to choosing a regularization factor $B$ that minimizes $\tau = O(\eta)^k \cdot \delta$. Recall that $\delta = \abs{\det B}$ and $\eta$ is an upper bound on $\herdisc(B^{-1} \cdot A)$. A good choice for $B$ is a square $k \times k$ submatrix of $A$ such that $\abs{\det B} = \Delta(A)$. Indeed, it is easy to see that in this case, for any $i \in \intint k$, we have $\Delta_i(B^{-1} \cdot A) \leq 1$. In other words, the matrix $B^{-1} \cdot A$ is "totally unimodular". Here, the term "totally unimodular" is in quotes because $B^{-1} \cdot A$ might not be integer-valued. In this case, according to the hereditary discrepancy bound \eqref{DiscDetBound_Duplicate_eq}, $\herdisc(B^{-1} \cdot A) = O(\log k)$, and we can set $\eta = O(\log k)$.

Thus, setting $\rho = O(\log(k \Delta))$, $\delta = \Delta$, and $\eta = O(\log k)$, we obtain $\tau = O(\log k)^k \cdot \Delta$. The main result (\Cref{main_ILP_th} for \ref{ILP-SF-Duplicate} problems) then follows from \Cref{DP_th_Duplicate} by substituting the value of $\tau$.

However, there remains one unresolved caveat: finding a $k \times k$ submatrix $B$ of $A$ satisfying $\abs{\det B} = \Delta(A)$ is an NP-hard problem. A viable workaround is to relax the condition $\abs{\det B} = \Delta(A)$. In this work, we employ a result from Nikolov \cite{LargestSimplex_Nikolov}, which states that a submatrix $B$ with $\abs{\det B} \geq e^{-k} \cdot \Delta(A)$ can be computed by a polynomial-time algorithm. By choosing $B$ this way, we ensure that $\Delta(B^{-1} \cdot A) \leq e^k$. This property of matrix $B^{-1} \cdot A$ can be extended to all its submatrices by the cost of additional $2^k \cdot \poly(\phi)$ operations, where $\phi$ is the input length (we omit these technical details here). The hereditary discrepancy bound \eqref{DiscDetBound_Duplicate_eq} still guarantees $\herdisc(B^{-1} \cdot A) = O(\log k)$ in this case, which suffices to complete our proof sketch.

\subsection{Complexity Model and Other Assumptions}\label{assumptions_subs}
All the algorithms that are considered in our work correspond to the \emph{Word-RAM} computational model. In other words, we assume that additions, subtractions, multiplications, and divisions with rational numbers of the specified size, which is called the \emph{word size}, can be done in $O(1)$-time. In our work, we chose the word size to be equal to some fixed polynomial on $\lceil \log n \rceil + \lceil \log k \rceil + \lceil \log \alpha \rceil$, where $\alpha$ is the maximum absolute value of elements of $A$, $b$, and $c$ in the problem formulations. 
Sometimes, when it is important to specify the exact size of variables, we do this explicitly. Following \cite[Section~4.1]{KorteBook}, we define the bit-encoding size of an integer number $z \in \ZZ$ by the formula $\size(z) = 1 + \bigl\lceil\log_2\bigl(\,\abs{z}+1\bigr)\bigr\rceil$. For a rational number $r = p/q$, where $p$ and $q$ are coprime integers, the size is defined by the formula $\size(r) = \size(p) + \size(q)$.  

\begin{remark}\label{ILPSF_GCD_assumption_rm}
    Let us clarify the assumption $\Delta_{\gcd}(A) = 1$, which was done in the \ref{ILP-SF} problems' definition. Assume that $\Delta_{\gcd}(A) = d > 1$, and let us show that the original problem can be reduced to an equivalent new problem with $\Delta_{\gcd}(A') = 1$, using a polynomial-time reduction. 

    Let $A = P \cdot \bigl(S\,\BZero\bigr) \cdot Q$, where $\bigl(S\,\BZero\bigr) \in \ZZ^{k \times n}$, be the \emph{Smith Normal Form (the SNF, for short)} of $A$ and $P \in \ZZ^{k \times k}$, $Q \in \ZZ^{n \times n}$ be unimodular matrices. We multiply rows of the original system $A x = b,\, x \geq \BZero$ by the matrix $(P S)^{-1}$. After this step, the original system is transformed to the equivalent system $\bigl(I_{k \times k}\,\BZero\bigr) \cdot Q\,x = b^\prime$, $x \geq \BZero$. In the last formula, $b^\prime$ is integer, because in the opposite case the original system is integrally infeasible. Clearly, the matrix $\bigl(I_{k \times k}\,\BZero\bigr)$ is the SNF of $\bigl(I_{k \times k}\,\BZero\bigr)\,Q$, so its $\Delta_{\gcd}(\cdot)$ is equal to $1$. Finally, note that the computation of the SNF is a polynomial-time solvable problem, see, for example, A.~Storjohann \cite{SNFOptAlg}. 
\end{remark}

\section{Applications}\label{app_sec}

\subsection{The Case \(k=1\) and Generic \(\Delta\)-modular Matrices}

In this Subsection, we consider ILP problems with the additional assumption that all $\rank(A) \times \rank(A)$ sub-determinants of $A$ are non-zero. It was proposed by B.~Kriepke, G.~Kyureghyan \& M.~Schymura \cite{generic_DeltaModular} to call such matrices and corresponding ILP problems as \emph{generic}. 
% \begin{definition}[\cite{generic_DeltaModular}]
%     An integer matrix $A$ is called \emph{generic}, if all its $\rank(A) \times \rank(A)$ sub-determinants are non-zero, that is, every set of $rk(A)$ columns or rows of $A$ is linearly independent.
% \end{definition}
Generic ILP problems became known due to S.~Artmann et al. \cite{NonDegenerateMinors}, where it was shown that generic problems can be solved by a polynomial-time algorithm, for any fixed value of $\Delta$.
\begin{theorem}[S.~Artmann et al. \cite{NonDegenerateMinors}]\label{Artmanns_generic_th}
    Let us consider any of the problems \ref{ILP-CF} and \ref{ILP-SF}. Assuming, that the matrix $A$ is generic and $\Delta$-modular, for some fixed value of $\Delta$, the problem is polynomial-time solvable.
\end{theorem}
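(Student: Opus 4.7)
The plan is to leverage the combined restrictiveness of the generic and $\Delta$-modular conditions: genericity forbids any $\rank(A) \times \rank(A)$ submatrix from being singular, while $\Delta$-modularity bounds every non-zero such minor in absolute value. Together, for fixed $\Delta$, these two assumptions enforce a rigid Heller-type combinatorial structure on $A$, which brings the problem into the range of already-known polynomial-time methods.

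First, I would concentrate on the \ref{ILP-CF} formulation; the \ref{ILP-SF} case then follows via the reductions outlined in Section \ref{connection_sec}. Fix any $n \times n$ submatrix $A_B$ of $A$, which is non-singular by the generic assumption. For every non-basic row $a_i^\top$ of $A$, write $a_i^\top = \lambda_i^\top A_B$; by Cramer's rule, each coordinate of $\lambda_i$ is the ratio of two $n \times n$ minors of $A$, both non-zero integers of absolute value at most $\Delta$. Consequently, each $\lambda_i$ lies in the finite set $\Lambda_\Delta \subseteq \QQ^n$ whose coordinates come from $\{p/q \colon p, q \in \ZZ,\, 0 < |p|, |q| \leq \Delta\}$. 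Moreover, no two non-basic rows can produce collinear $\lambda_i$, since this would force two $\QQ$-proportional rows of $A$ and thereby make many $n \times n$ submatrices singular, contradicting genericity.

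Next, I would combine this structural finiteness with one of two algorithmic paths. If one can push the argument further and bound the codimension $k$ by a function of $\Delta$ alone by averaging over different choices of basis $A_B$ in Heller's style, then Theorem \ref{GribanovCanonicalWork_compplexity} applies directly and yields polynomial time for fixed $\Delta$. Otherwise, the change of variables $y = A_B\,x$ converts the system into an ILP in $n$ variables whose constraints have at most $|\Lambda_\Delta|$ distinct left-hand-side types with uniformly bounded $\QQ$-coefficients; keeping only the tightest constraint of each type reduces the system to a bounded-coefficient ILP that is amenable to classical proximity- and flatness-based techniques running in time polynomial in $n$ for fixed $\Delta$.

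The main obstacle is the Heller-type structural step. The naive pigeonhole count of admissible vectors $\lambda_i$ is exponential in $n$, so the argument cannot rely on it alone; one must exploit the interaction between different choices of the basis $A_B$, row-scaling symmetries, and number-theoretic properties of the finite set $\Lambda_\Delta$ in order to extract a bound that is polynomial (or even constant) in $n$ for fixed $\Delta$. This is the genuine combinatorial heart of the theorem of S.~Artmann et al.\ \cite{NonDegenerateMinors}, and once it is in place the algorithmic step is essentially routine.
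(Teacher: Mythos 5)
You have the right overall instinct --- the theorem of Artmann et al.\ rests on a structural statement about generic $\Delta$-modular matrices, and once that statement is available the algorithmic part is a combination of known tools. But your write-up does not actually contain that structural statement, and the branch you offer as a substitute fails. The fact the paper relies on (Theorem \ref{Artmanns_m_th}, refined in Theorem \ref{Kriepke_m_th}) is a dichotomy: there is a function $f$ (in the sharp version, $f(\Delta)=2\Delta-1$) such that whenever $n \geq f(\Delta)$, a generic matrix of rank $n$ has at most $n+1$ rows, i.e.\ the codimension is $k \leq 1$, and then a bounded-codimension algorithm such as Theorem \ref{GribanovCanonicalWork_compplexity} finishes; whenever $n < f(\Delta)$, the \emph{dimension} is bounded by a function of $\Delta$ alone, and one invokes a Lenstra-type fixed-dimension algorithm (e.g.\ \cite{log_ILP}). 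You gesture at the first branch (``bound the codimension $k$ by a function of $\Delta$ by averaging over bases in Heller's style'') but explicitly leave it unproved --- and this is precisely the content of the cited theorem, not a detail one can defer; without it the proof is not a proof.

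Your fallback branch is moreover incorrect as stated. After the substitution $y = A_B x$ you get an ILP in $n$ variables, but ``bounded-coefficient ILP'' is not polynomial-time solvable in $n$: integer feasibility is NP-hard already for $\{0,1\}$ coefficient matrices, so ``classical proximity- and flatness-based techniques running in time polynomial in $n$ for fixed $\Delta$'' do not exist in that generality. Also, $\abs{\Lambda_\Delta}$ is exponential in $n$ (as you note yourself), so ``at most $\abs{\Lambda_\Delta}$ distinct constraint types'' gives no useful bound, and one cannot discard all but the tightest constraint of each ``type'' unless two constraints share the same outer normal direction. The correct small-$n$ branch is simply that $n$ itself is bounded by a function of $\Delta$, so fixed-dimension methods apply; the correct large-$n$ branch is the row-number bound $m \leq n+1$. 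Both require the combinatorial theorem you acknowledge you have not established, so the proposal has a genuine gap at its core in addition to the flawed alternative route.
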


The key fact that helps to prove Theorem \ref{Artmanns_generic_th} is described in
\begin{theorem}[S.~Artmann et al. \cite{NonDegenerateMinors}]\label{Artmanns_m_th}
    Let $A \in \ZZ^{m \times n}$ be a generic matrix with $\rank(A) = n$. Denote $\Delta := \Delta(A)$. There exists a function $f$, such that if $n \geq f\bigl(\Delta(A)\bigr)$, then $m \leq n + 1$.
\end{theorem}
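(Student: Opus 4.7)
The plan is to prove the contrapositive: assuming $m \geq n+2$, I will derive an upper bound on $n$ in terms of $\Delta$. After a row permutation, which preserves genericity, I may assume the first $n$ rows $a_1, \ldots, a_n$ of $A$ span $\QQ^n$, because every $n$-subset of rows has non-zero determinant. Let $B \in \ZZ^{n \times n}$ be the matrix formed by these rows. For each $i > n$, expand $a_i = \sum_{j=1}^{n} \alpha_{ij}\,a_j$; Cramer's rule gives $\alpha_{ij} = \det(B_j^i)/\det(B)$, where $B_j^i$ is $B$ with its $j$-th row replaced by $a_i$, and is itself an $n \times n$ submatrix of $A$. Genericity forces $\alpha_{ij} \neq 0$, while $|\det(B_j^i)|, |\det(B)| \leq \Delta$. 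Setting $\beta_i := \det(B)\cdot\alpha_i \in \ZZ^n$, every coordinate of $\beta_i$ then lies in $\{\pm 1, \pm 2, \ldots, \pm \Delta\}$.

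Pick two distinct indices $i, j > n$, which is possible since $m \geq n+2$. For each pair $p \neq q$ in $[n]$, consider the $n \times n$ submatrix $M_{p,q}$ of $A$ whose rows are $[n]\setminus\{p,q\}$ together with rows $i$ and $j$. Expanding $\det(M_{p,q})$ multilinearly after substituting $a_i = \sum_l \alpha_{il}\,a_l$ and $a_j = \sum_l \alpha_{jl}\,a_l$, only the terms with $\{l_1, l_2\} = \{p,q\}$ survive, giving
\[
\det(M_{p,q}) \;=\; \pm \bigl(\alpha_{ip}\alpha_{jq} - \alpha_{iq}\alpha_{jp}\bigr)\cdot\det(B).
\]
By genericity this quantity is non-zero, so $\beta_{ip}\beta_{jq} - \beta_{iq}\beta_{jp} \neq 0$ for every $p \neq q$ in $[n]$. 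Equivalently, the $n$ columns of the $2 \times n$ matrix with rows $\beta_i$ and $\beta_j$ are pairwise non-parallel as vectors in $\ZZ^2$.

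Each such column lies in the finite set $S := \{\pm 1, \ldots, \pm\Delta\}^2$, whose parallelism classes are in bijection with primitive $2$-vectors fitting inside $S$ (taken up to an overall sign), so they number at most $4\Delta^2$. Hence $n$ cannot exceed this count of parallelism classes, giving $n \leq 4\Delta^2$. Choosing $f(\Delta) := 4\Delta^2 + 1$ establishes the desired implication: whenever $n \geq f(\Delta)$, no two indices $i, j > n$ can coexist, so $m \leq n+1$.

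The crux of the argument is the multilinear identity in the middle paragraph, which converts the genericity of $A$ into the non-vanishing of certain $2 \times 2$ sub-determinants of pairs of $\beta$-vectors; once it is in hand, the remainder is a direct pigeonhole on the finite set of parallelism classes. I anticipate only the sign bookkeeping in the determinant expansion as a minor delicate point, but it does not affect the non-vanishing conclusion.
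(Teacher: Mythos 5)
Your argument is correct, and it is worth noting that the paper itself does not prove this statement: it is imported from Artmann et al.~\cite{NonDegenerateMinors}, and the paper only remarks that the original work gives a very rough estimate on the growth of $f$. Your proof is therefore an independent, self-contained route. The key steps all check out: writing each extra row $a_i$ ($i>n$) in the basis $B$ of the first $n$ rows via Cramer's rule makes the rescaled coefficients $\beta_{ij}=\det(B)\alpha_{ij}=\det(B^i_j)$ themselves $n\times n$ minors of $A$, hence non-zero integers of absolute value at most $\Delta$; the multilinear expansion of $\det(M_{p,q})$ over the row set $([n]\setminus\{p,q\})\cup\{i,j\}$ is right — only the two terms with $\{l_1,l_2\}=\{p,q\}$ survive and they enter with opposite signs — so genericity forces every $2\times 2$ minor of the $2\times n$ matrix with rows $\beta_i,\beta_j$ to be non-zero; and then the pigeonhole is immediate (pairwise non-parallel columns are in particular distinct elements of $\{\pm 1,\dots,\pm\Delta\}^2$, so even the crude count $4\Delta^2$, or the sharper $2\Delta^2$ via parallelism classes, works), giving $n\le 4\Delta^2$ whenever $m\ge n+2$ and hence $f(\Delta)=4\Delta^2+1$. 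What your approach buys is an explicit polynomial threshold $f(\Delta)=O(\Delta^2)$, considerably better than the rough bound of the original paper, though still weaker than the asymptotically optimal linear threshold $n\ge 2\Delta-1$ of Kriepke et al.\ (Theorem~\ref{Kriepke_m_th}) that this paper actually uses downstream; since the statement only asserts the existence of some $f$, your bound fully suffices.
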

The original work of S.~Artmann et al. gives a very rough estimate on the growth rate of $f$. It was subsequently improved in the papers \cite{OnCanonicalProblems_Grib,IntegralityNumber,ABCModular,generic_DeltaModular}. An asymptotically optimal bound is presented in the work \cite{generic_DeltaModular} of B.~Kriepke, G.~Kyureghyan \& M.~Schymura. We partially refer their result in the following
\begin{theorem}[Theorem~1.3, B.~Kriepke et al. \cite{generic_DeltaModular}]\label{Kriepke_m_th}
    Let $A \in \ZZ^{m \times n}$ be a generic matrix with $\rank(A) = n$. If $n \geq 2 \Delta(A) - 1$, then $m \leq n + 1$.
\end{theorem}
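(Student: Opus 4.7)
My plan is to argue by contradiction: assume $m \geq n+2$ and derive $n < 2\Delta - 1$. First, after reordering rows and applying a unimodular column transformation, I fix $n$ linearly independent basis rows $a_1, \ldots, a_n$, forming $A_0 = (a_1^\top, \ldots, a_n^\top)^\top$ with $D := |\det A_0| \in \{1, \ldots, \Delta\}$. Each remaining row $a_j$ (for $n < j \leq m$) has the unique expansion $a_j = \sum_{i=1}^{n} c_{ij} a_i$ over $\QQ$; by Cramer's rule together with the generic hypothesis, each $D c_{ij}$ is a nonzero integer of absolute value at most $\Delta$.

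Next, choosing two ``extra'' rows $u := a_{n+1}$ and $v := a_{n+2}$ (which exist since $m \geq n+2$), I associate to each basis index $i \in \intint n$ the lattice vector $w_i := (Dc_{iu}, Dc_{iv}) \in \ZZ^2$, whose coordinates are both nonzero and bounded by $\Delta$ in absolute value. Expanding via multilinearity, the $n \times n$ subdeterminant of $A$ obtained by swapping basis rows $a_i, a_j$ for $u, v$ equals $\pm \det(w_i, w_j)/D$. Genericity together with the $\Delta$-modular bound thus forces $w_1, \ldots, w_n$ to be pairwise projectively distinct with $1 \leq |\det(w_i, w_j)| \leq D\Delta$ for $i \neq j$.

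The crux of the argument is then a combinatorial lemma in $\ZZ^2$: any configuration of lattice vectors with both coordinates nonzero and bounded by $\Delta$, pairwise projectively distinct, and pairwise $|\det| \leq \Delta$ (the case $D = 1$, to which one reduces via Smith-normal-form or a modular argument), contains strictly fewer than $2\Delta - 1$ vectors. I would prove this by sorting primitive representatives by slope on $\PP^1(\QQ)$, recognising consecutive slopes as Farey neighbours, and telescoping the pairwise integer determinants through the Stern--Brocot tree to obtain a quadrant-wise bound of roughly $\Delta - 1$, summing to $2\Delta - 2$ overall. Combined with the hypothesis $n \geq 2\Delta - 1$, this contradicts the existence of the two extra rows $u$ and $v$ and yields $m \leq n+1$.

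The main obstacle is twofold. First, the sharp combinatorial lemma requires careful Farey-style counting of pairwise-compatible lattice vectors near the coordinate axes, and the constant $2\Delta - 2$ is delicate---small cases ($\Delta = 2, 3$) give tight configurations, while larger $\Delta$ require a clean inductive treatment. Second, the reduction to the $D = 1$ normalisation is not automatic when all sub-determinants happen to have absolute value at least $2$ despite $\Delta_{\gcd}(A) = 1$; handling this likely requires either reducing modulo a prime $p \in (\Delta, 2\Delta)$ guaranteed by Bertrand's postulate (to re-establish a $\Delta$-bound on the $2\times 2$ minors of the $w_i$), or extending the combinatorial lemma to configurations with $|\det| \leq D\Delta$. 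The base case $\Delta = 1$, i.e.\ generic totally unimodular matrices, is handled separately via the excluded-minor characterisation of regular matroids (forbidden $U_{2,4}$), giving $m \leq n+1$ directly.
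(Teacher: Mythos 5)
First, note that the paper does not prove this statement at all: Theorem \ref{Kriepke_m_th} is quoted verbatim from B.~Kriepke, G.~Kyureghyan \& M.~Schymura \cite{generic_DeltaModular} (their Theorem~1.3), so there is no internal proof to compare against, and what you are attempting is a reproof of that external result. Your reduction is in the right spirit and essentially mirrors the known strategy (a Gale-type duality: with $m\geq n+2$, the two extra rows give, via Cramer/Laplace expansion, vectors $w_1,\dots,w_n\in\ZZ^2$ with nonzero entries of absolute value at most $\Delta$, pairwise nonzero determinants, and the swapped $n\times n$ minors equal to $\det(w_i,w_j)/D$), and that part of your computation is correct.

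However, the proposal has a genuine gap precisely at its crux. (i) The planar counting lemma is not proved, and as stated it is doubtful: your construction only yields $1\leq\abs{\det(w_i,w_j)}\leq D\Delta$ with $D$ possibly as large as $\Delta$, not $\leq\Delta$, and neither proposed repair works as described --- reducing modulo a Bertrand prime $p\in(\Delta,2\Delta)$ preserves congruence information but destroys the archimedean bounds on which the Farey/Stern--Brocot telescoping rests, while ``extending the lemma to $\abs{\det}\leq D\Delta$'' is exactly the unproven statement you need. Even in the normalized case the claimed threshold ``strictly fewer than $2\Delta-1$ vectors'' fails at the boundary (for $\Delta=1$ a single vector $(\pm1,\pm1)$ satisfies all conditions vacuously), which shows the extremal analysis cannot be waved through and the small cases you defer are where the sharp constant is decided. (ii) Your contradiction scheme also silently assumes a pairwise-distinct (indeed pairwise non-parallel) rows convention; without it the statement itself is false (e.g.\ $n=1$, $\Delta=1$, three rows equal to $1$), so this hypothesis must be made explicit and used. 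In short, the skeleton matches the cited work, but the combinatorial heart of the theorem --- the sharp bound on planar lattice configurations with bounded pairwise determinants, in the correct normalization --- is missing, so the proposal does not constitute a proof.
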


It follows from Theorem \ref{Kriepke_m_th} that any instance of the problem \ref{ILP-CF} with a generic matrix $A$ and $n \geq 2 \Delta - 1$ has $k \leq 1$. Therefore, due to Theorem \ref{GribanovCanonicalWork_compplexity}, it can be solved with $O\bigl(\Delta^2 \cdot \log^3(\Delta) + \poly(\phi)\bigr)$ operations. In the case $n \leq 2 \Delta - 1$, the problem can be solved by any polynomial-time in a fixed dimension algorithm. We refer to the recent result, due to V.~Reis \& T.~Rothvoss \cite{log_ILP}, which states that any ILP can be solved with $(\log n)^{O(n)} \cdot \poly(\phi)$ operations. Due to Lemmas \ref{ILPCF_to_ILPSF_lm} and \ref{ILPSF_to_ILPCF_lm} of Section \ref{connection_sec}, the same reasoning holds for the problems \ref{ILP-SF}. As a corollary, we have
\begin{theorem}[Consequence of Theorem \ref{Kriepke_m_th}, Theorem \ref{GribanovCanonicalWork_compplexity}, and the work \cite{log_ILP}]
    Any of the problems \ref{ILP-CF} and \ref{ILP-SF} with a generic matrix $A$ can be solved with
    $$
    (\log \Delta)^{O(\Delta)} \cdot \poly(\phi)\quad \text{operations.}
    $$
\end{theorem}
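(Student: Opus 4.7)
The plan is to reduce to a single form, case-split on the dimension, and glue two existing algorithmic results. By Lemmas \ref{ILPCF_to_ILPSF_lm} and \ref{ILPSF_to_ILPCF_lm}, it is enough to treat, say, \ref{ILP-CF} with a generic matrix $A \in \ZZ^{(n+k) \times n}$ of rank $n$; the reduction in the other direction is a polynomial-time transformation whose overhead is absorbed into the $\poly(\phi)$ term, and the property of having all $n \times n$ minors non-zero is preserved by the equivalences used there (or can be re-established after a standard unimodular change of variables, cf.\ Remark \ref{ILPSF_GCD_assumption_rm}).

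Fix a threshold at $n_0 := 2\Delta - 1$ and split into two regimes. If $n < n_0$, then $n < 2\Delta$, and we simply invoke the general ILP algorithm of V.~Reis \& T.~Rothvoss \cite{log_ILP}, which solves an arbitrary ILP in
\begin{equation*}
(\log n)^{O(n)} \cdot \poly(\phi) \; \leq \; (\log \Delta)^{O(\Delta)} \cdot \poly(\phi)
\end{equation*}
operations, which is already within the claimed bound. If instead $n \geq n_0 = 2\Delta - 1$, then Theorem \ref{Kriepke_m_th} applies to the generic rank-$n$ matrix $A$ and forces $n + k \leq n + 1$, i.e.\ $k \leq 1$. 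Hence we are in the co-dimension $\leq 1$ case of Theorem \ref{GribanovCanonicalWork_compplexity}, which solves the problem in
\begin{equation*}
O\bigl(\Delta^2 \cdot \log^2(\Delta)\bigr) + \poly(\phi)
\end{equation*}
operations. Since $\Delta^2 \log^2 \Delta$ is polynomial in $\Delta$, it is trivially dominated by $(\log \Delta)^{O(\Delta)}$ for all $\Delta$ large enough (and absorbed into constants for small $\Delta$), so the same bound holds in this regime.

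Taking the maximum of the two cases yields a uniform complexity of $(\log \Delta)^{O(\Delta)} \cdot \poly(\phi)$ and proves the theorem. The only delicate point is the first paragraph: one must check that in passing between \ref{ILP-CF} and \ref{ILP-SF} via Lemmas \ref{ILPCF_to_ILPSF_lm}--\ref{ILPSF_to_ILPCF_lm} the genericity hypothesis survives (or at worst is replaced by an equivalent condition on the transformed matrix), so that Kriepke's structural bound in Theorem \ref{Kriepke_m_th} is legitimately applicable; this is where I expect the book-keeping to require the most care, but it is essentially already handled in Section \ref{connection_sec}. Once that is in place, the remainder is a clean dichotomy argument.
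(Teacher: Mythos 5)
Your proof is correct and follows essentially the same route as the paper: the same dichotomy at $n \approx 2\Delta-1$, using Theorem \ref{Kriepke_m_th} to force $k \leq 1$ and then Theorem \ref{GribanovCanonicalWork_compplexity} in the high-dimensional regime, and the Reis--Rothvoss bound $(\log n)^{O(n)} \cdot \poly(\phi)$ in the low-dimensional regime, with the transfer between \ref{ILP-CF} and \ref{ILP-SF} handled by Lemmas \ref{ILPCF_to_ILPSF_lm} and \ref{ILPSF_to_ILPCF_lm}. The genericity bookkeeping you flag is indeed covered by the sub-determinant bijections stated in those lemmas, exactly as the paper implicitly uses them.
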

As an application of our results, we present a better computational complexity bound for the case $n \geq 2 \Delta - 1$ with respect to Theorem \ref{GribanovCanonicalWork_compplexity}, especially for the feasibility variant.
\begin{theorem}\label{generic_app_cor}
    Any of the problems \ref{ILP-CF} and \ref{ILP-SF} with generic matrix $A$ and $n \geq 2 \Delta - 1$ can be solved with
    $$
    \Delta^2 / 2^{\Omega(\sqrt{\log \Delta})} + \poly(\phi)\quad \text{operations.}
    $$
    The feasibility variant of the problem can be solved with $$
    O\bigl(\Delta \cdot \log^3(\Delta)\bigr) + \poly(\phi)\quad \text{operations.}
    $$
\end{theorem}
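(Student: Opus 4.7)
The plan is to combine the structural bound of Theorem \ref{Kriepke_m_th} with the algorithmic bound of Theorem \ref{main_ILP_th}. I would first use the Kriepke--Kyureghyan--Schymura theorem to force the co-dimension parameter $k$ to be at most $1$ under the hypothesis $n \geq 2 \Delta - 1$; then a direct substitution into Theorem \ref{main_ILP_th} with $k = 1$ would collapse the stated complexities to the claimed ones.

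For an \ref{ILP-CF} instance with generic $A \in \ZZ^{(n+k) \times n}$ of rank $n$, the reduction is immediate, as already noted in the discussion preceding the corollary: Theorem \ref{Kriepke_m_th} applies with row count $n + k$ and column count $n$, so $n \geq 2\Delta - 1$ forces $n + k \leq n + 1$, hence $k \leq 1$. For an \ref{ILP-SF} instance, I would transport the argument via the reductions of Section \ref{connection_sec} (Lemmas \ref{ILPCF_to_ILPSF_lm} and \ref{ILPSF_to_ILPCF_lm}). These reductions preserve the co-dimension $k$ and the maximal sub-determinant $\Delta$, and translate genericity between the two forms, so Kriepke's bound on the canonical form yields $k \leq 1$ in the standard form as well.

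Once $k \leq 1$ is established, I would plug into Theorem \ref{main_ILP_th}. For $k = 1$ one has $f_{k,d} = k^{k/2} = 1$ and the prefactor $2^{O(k)} = O(1)$, so the optimization bound of Theorem \ref{main_ILP_th} collapses to
$$
\Delta^2 / 2^{\Omega(\sqrt{\log \Delta})} + \poly(\phi),
$$
and the feasibility bound to $O\bigl(\Delta \cdot \log^3 \Delta\bigr) + \poly(\phi)$, matching the statement. The case $k = 0$ is trivial: for \ref{ILP-CF} the system $A x \leq b$ defines a single rational vertex and for \ref{ILP-SF} the matrix $A$ is empty, and in both cases the answer is obtained in $\poly(\phi)$ time.

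The only mild obstacle I anticipate is verifying that the \ref{ILP-SF}-to-\ref{ILP-CF} reduction of Section \ref{connection_sec} really preserves the hypothesis $n \geq 2\Delta - 1$ alongside genericity, so that Kriepke's theorem is applicable to the transformed instance. This is a routine check that should follow from the explicit statements of Lemmas \ref{ILPCF_to_ILPSF_lm} and \ref{ILPSF_to_ILPCF_lm}, and no new ideas are required beyond the two theorems already cited.
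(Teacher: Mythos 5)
Your proposal matches the paper's own (implicit) argument: the paper likewise derives $k \leq 1$ from Theorem \ref{Kriepke_m_th} (transporting it to \ref{ILP-SF} via the reductions of Section \ref{connection_sec}) and then specializes Theorem \ref{main_ILP_th} to $k \leq 1$, where $f_{k,d} \cdot 2^{O(k)} = O(1)$. The residual check you flag for the standard form (that genericity and the hypothesis $n \geq 2\Delta - 1$ survive the reduction, whose canonical instance lives in dimension $d = n-k$ with $n$ constraints) is exactly the point the paper itself dispatches with the phrase ``the same reasoning holds,'' so you are not missing anything the paper supplies.
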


\subsection{The Case \(k=0\), R.~Gomory's Asymptotic Algorithm, and Expected ILP Complexity}\label{Gomory_sec}
% In the current subsection, we describe algorithms that have polynomial dependence both on $k$ and $n$. By this reason, there is no significant difference between the problems \ref{ILP-CF} and \ref{ILP-SF}, since they can be easily reduced to
In the works \cite{GomoryRelation,GomoryIntegerFaces,GomoryCombinatorialPoly}, R.~Gomory introduces his seminal \emph{asymptotic ILP algorithm} (for the full exposition of Gomory's work and additional algorithmic modifications, see the book \cite{HuBook} of T.~Hu). Applied to the problem \ref{ILP-SF}, the Gomory's asymptotic algorithm searches an optimal base $\BC$ of the LP relaxation and finds an optimal solution of the \emph{local ILP problem}
\begin{align}
&c^\top x \to \max\notag\\
&\begin{cases}
A  x = b\label{local_SF_prob}\tag{$\LocalILPSF$}\\
x_{\BC} \in \ZZ^{k}\\
x_{\NotBC} \in \ZZ_{\geq 0}^{n - k},
\end{cases}
\end{align} which can be obtained from the original problem, omitting the inequalities $x_{\BC} \geq 0$ (here, $\NotBC := \intint n \setminus \BC$). The optimal solution of \ref{local_SF_prob}, in turn, in many situations agrees with some optimal solution of the original problem \ref{ILP-SF}. This fact was empirically shown by R.~Gomory in his works. We will not cite exact sufficient conditions, described by R.~Gomory, and just mention that, for any integer $t \geq t_0$, where $t_0$ depends on $A$ and $b$, the asymptotic algorithm successfully finds a correct optimal solution of the problem $\max\{c^\top x \colon A x = t \cdot b,\, x \in \ZZ^n_{\geq 0}\}$. 

To solve the problem \ref{local_SF_prob}, R.~Gomory reduces it to the following \emph{group minimization problem:}
\begin{align}
    &w^\top x \to \min\notag\\
    &\begin{cases}
        \sum\limits_{i = 1}^d x_i \cdot g_i = g_0\\
        x \in \ZZ^{d}_{\geq 0},
    \end{cases}\label{group_min_prob}\tag{$\GroupMin$}
\end{align}
where $g_0,g_1, \dots, g_d$ are elements of an Abelian group $\GC$, $d = n - k$, and $\abs{\GC} \leq \Delta$. Next, R.~Gomory shows that  the problem \ref{group_min_prob} can be solved by a DP-based algorithm with $O\bigl(d \cdot \abs{\GC}\bigr)$ group operations. The group $\GC$ is explicitly represented as a direct sum of at most $\log_2 \Delta$ cyclic groups. Consequently, the total arithmetic complexity of the Gomory's asymptotic algorithm can be upper bounded by
\begin{equation*}
    T_{LP} + O\bigl( (n-k) \cdot \Delta \cdot \log \Delta\bigr)
\end{equation*}
where $T_{\LP}$ is the computational complexity to solve the relaxed LP problem. Due to the straightforward sorting argument, we can slightly refine the last bound with
\begin{equation}\label{Gomorys_alg_complexity}
    T_{LP} + O\bigl( \min\{n-k,\Delta\} \cdot \Delta \cdot \log \Delta\bigr).
\end{equation}
% The main idea of the algorithm is embedding of the problem into a special group, whose order is bounded by $\Delta$. Elements of the group can be represented as integer vectors with components in $\intint[0]{\Delta-1}$ and dimension at most $\log \Delta$, that is the reason of the $\log \Delta$ multiplicative term in the complexity bound. The complexity bounds can be slightly improved, making elements of the group unique, which can be done using any optimal sorting algorithm, which, in turn, is negligible with respect to $T_{LP}$. An improved complexity bound becomes
% \begin{equation*}
%     T_{LP} + O\bigl( \min\{n,\Delta\} \cdot \Delta \cdot \log \Delta \bigr).
% \end{equation*}
% % \begin{multline*}
% %     T_{LP} + O\bigl( \min\{n,\Delta\} \cdot \Delta \cdot \log \Delta \bigr) = \\
% %     = O\bigl( \min\{n,\Delta\} \cdot \Delta \cdot \log \Delta \bigr) + \poly(\phi).
% % \end{multline*}
Since we do not worry too much about the values of $k$ and $n$, when we are working with the asymptotic algorithm, it can be easily applied to the problem \ref{ILP-CF}, using a straightforward reduction. To finish our brief survey of the asymptotic algorithm, we note that the asymptotic algorithm is much easier to describe and implement, when it is initially applied to the problem \ref{ILP-CF}. The full exposition of the R.~Gomory's approach from this point of view can be found in the work \cite[pages~48--61]{OnCanonicalProblems_Grib} of D.~Gribanov et al. 

Further, it is natural to wonder about the applicability of the Gomory's asymptotic algorithm in some probabilistic or averaging setting. Research in this direction was done in the works \cite{IntegralityNumber,SparsityAverage,DistributionsILP,OnCanonicalProblems_Grib}. We will follow to Gribanov et al. \cite{OnCanonicalProblems_Grib}, which gives a slightly refined analysis with respect to the paper \cite{IntegralityNumber}, due to J.~Paat, M.~Schl{\"o}ter \& R.~Weismantel, where a probabilistic approach was applied for the first time. 
\begin{definition} %Let $\BB$ be an optimal base of the relaxation of $\IP_{\leq}(A,b,c)$.
Let us consider the \ref{ILP-CF} problem. Let $\BC$ be some optimal base of the corresponding relaxed LP problem. The \ref{ILP-CF} problem is \emph{local} if the set of its optimal solutions coincides with the set of optimal solutions of the following local problem 
\begin{align}
&c^\top x \to \max\notag\\
&\begin{cases}
A_{\BC} x \leq b_{\BC}\label{local_CF_prob}\tag{$\LocalILPCF$}\\
x \in \ZZ^n,
\end{cases}
\end{align}
which is constructed from the original problem, omitting all the inequalities, except $A_{\BC} x \leq b_{\BC}$.
\end{definition}

\begin{definition}\label{probability_def}
Let $\Omega_{n,t} = \{b \in \ZZ^{n}\colon \|b\|_\infty \leq t\}$. Then, for $\AC \subseteq \ZZ^{n}$, we define
$$
\prob_{n,t}(\AC) = \cfrac{\abs{\AC \cap \Omega_{n,t}}}{\abs{\Omega_{n,t}}} \text{ and } \prob_n(\AC) = \liminf\limits_{t \to \infty} \prob_{n,t}(\AC).
$$
The \emph{conditional probability of $\AC$} with respect to $\GC$ is denoted by the formula 
$$
\prob_n(\AC\mid\GC) = \frac{\prob_n(\AC \cap \GC)}{\prob_n(\GC)}.
$$
\end{definition}
As it was noted in \cite{IntegralityNumber}, the functional $\prob_n(\AC)$ is not formally a probability measure, but rather a lower density function, found from number theory. Additionally, we will use the symbol $\prob$ instead of $\prob_n$, when the value of $n$ will be clear from the context.

\begin{theorem}[Theorem 17 in D.~Gribanov et al. \cite{OnCanonicalProblems_Grib}]\label{local_prob_th}
    For fixed $A$ and $c$, denote the problem \ref{ILP-CF} with the right-hand side $b$ by $\ILPCF(b)$. Let
    \begin{gather*}
        \FC = \bigl\{ b \colon \text{$\ILPCF(b)$ is feasible}  \bigr\},\\
        \GC = \bigl\{ b \colon \text{$\ILPCF(b)$ is local}  \bigr\}.
    \end{gather*}
Then $\prob(\GC \mid \FC) = 1$.
\end{theorem}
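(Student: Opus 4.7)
The plan is to show $\prob(\FC \setminus \GC) = 0$; combined with $\prob(\FC) > 0$ (which holds since $\FC \supseteq b_0 + \ZZ_{\geq 0}^{n+k}$ for any fixed $b_0 \in \FC$, whose density in $\Omega_{n+k,t}$ converges to $2^{-(n+k)}$), this yields $\prob(\GC \mid \FC) = 1$ via Definition \ref{probability_def}, using $\prob_{n+k,t}(\FC \cap \GC) = \prob_{n+k,t}(\FC) - \prob_{n+k,t}(\FC \setminus \GC)$ and the elementary fact $\liminf(a_t - b_t) = \liminf a_t$ when $b_t \to 0$. The argument is a standard slab-density estimate, combined with a proximity bound for $\Delta$-modular IPs on the simplicial local feasible region.

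First, I would partition $\FC$ according to which $n$-subset $\BC \subseteq \intint{n+k}$ (with $A_\BC$ non-singular) serves as an optimal LP basis for $\ILPCF(b)$. Since $A$ and $c$ are fixed, only the finitely many $\BC$'s with $c \in \cone(A_\BC^\top)$ are candidates, and for each such $\BC$ the set $\FC_\BC := \{b \in \FC : \BC \text{ is LP-optimal for } \ILPCF(b)\}$ is a rational polyhedral cone in $b$-space cut out by the linear inequalities $A_\NotBC A_\BC^{-1} b_\BC \leq b_\NotBC$. Their finite union covers $\FC$, so it suffices to prove $\prob(\FC_\BC \setminus \GC) = 0$ for each candidate $\BC$.

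Next, fix $\BC$ and denote $x^*_{LP}(b) := A_\BC^{-1} b_\BC$. A standard proximity result on the simplicial region $\{x : A_\BC x \leq b_\BC\}$ yields a constant $K = K(n, \Delta)$ such that every optimal integer solution $z^*$ of the local problem \ref{local_CF_prob} satisfies $\|z^* - x^*_{LP}(b)\|_\infty \leq K$. Consequently, whenever the non-base slacks $\sigma_i(b) := b_i - a_i^\top x^*_{LP}(b)$ all satisfy $\sigma_i(b) \geq K \cdot \|a_i\|_1$, each optimal $z^*$ of \ref{local_CF_prob} fulfils $a_i^\top z^* \leq a_i^\top x^*_{LP}(b) + K \|a_i\|_1 \leq b_i$ for every $i \in \NotBC$, so $z^*$ is feasible for the original \ref{ILP-CF} and the two optimal-solution sets must coincide, placing $b \in \GC$. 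Therefore $\FC_\BC \setminus \GC$ lies in the union over $i \in \NotBC$ of the slabs $S_i := \{b : 0 \leq \sigma_i(b) < K \|a_i\|_1\}$. Each $\sigma_i$ is a non-trivial affine form in $b$ (with coefficient $1$ on the coordinate $b_i$), hence $|S_i \cap \Omega_{n+k,t}| = O(K \|a_i\|_1 \cdot t^{n+k-1})$, whereas $|\Omega_{n+k,t}| = \Theta(t^{n+k})$. So each $S_i$ has density $O(1/t) \to 0$, and a union bound over the finitely many pairs $(\BC,i)$ gives $\prob(\FC \setminus \GC) = 0$.

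The main obstacle I foresee is making the proximity step control \emph{every} optimal integer local solution rather than merely one of them. In the generic case $c \in \inter \cone(A_\BC^\top)$ this is straightforward: the objective is strictly increasing along every nonzero recession direction of the simplicial cone, so the set of integer optima is confined to a bounded polytope around $x^*_{LP}(b)$ whose diameter is controlled by $|\det A_\BC| \leq \Delta$ and $n$, yielding the required $K$. The degenerate case, in which $c$ lies on a proper face of $\cone(A_\BC^\top)$ and the local problem admits a recession ray of optima, needs either a small perturbation-of-$c$ argument or a separate verification that the corresponding base shares its optimality region with a non-degenerate neighbouring base and so contributes only a measure-zero portion of $b$-space to the partition. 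Either route preserves the density-zero conclusion, but the bookkeeping here — and the explicit form of $K(n,\Delta)$ used in the proximity bound — is where the delicate work lies.
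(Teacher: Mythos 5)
The paper itself contains no proof of this statement: Theorem \ref{local_prob_th} is imported verbatim (as Theorem~17) from \cite{OnCanonicalProblems_Grib}, so your argument can only be judged on its own merits. Its main line is sound and is essentially the standard Gomory-style analysis underlying the quoted result: partition $\FC$ according to LP-optimal bases $\BC$; note that every optimal solution $z^*$ of \ref{local_CF_prob} has slack vector $s = b_{\BC} - A_{\BC} z^*$ which is a minimal element (in the componentwise order) of its coset of the lattice $A_{\BC}\cdot\ZZ^n$ inside $\RR^n_{\geq 0}$, whence $\norm{s}_\infty < \abs{\det A_{\BC}} \leq \Delta$ --- this supplies your constant $K$, but it is valid precisely when all coordinates of $A_{\BC}^{-\top} c$ are strictly positive; the $O(1/t)$ slab estimate per non-basic row, the union bound, and the $\liminf$ bookkeeping with $\prob(\FC) \geq 2^{-(n+k)} > 0$ are all correct.

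The genuine gap is the degenerate case you flag at the end, and neither of your two proposed repairs works as stated. Whether $c$ lies on a proper face of $\cone(A_{\BC}^\top)$ is a property of $(A, c, \BC)$ alone and does not depend on $b$, so degenerate bases cannot be dismissed as contributing a ``measure-zero portion of $b$-space''; and a perturbation of $c$ changes the optimal-solution \emph{sets} whose exact coincidence defines locality, so locality for a perturbed objective does not transfer back to $c$. A degenerate optimal basis also need not share its optimality region with any nondegenerate one: for the rows $(1,0), (0,1), (1,1)$ and $c = (1,0)$, every dual-feasible basis is degenerate, and locality can only be rescued by a $b$-dependent choice among degenerate bases --- an argument your sketch does not contain. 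Worse, for $c = \BZero$, $k \geq 1$ and bounded $\PC$, the optimal set of \ref{local_CF_prob} strictly contains that of \ref{ILP-CF} for every feasible $b$, so the statement in the bald form reproduced here can only hold under hypotheses or conventions carried by the original Theorem~17 of \cite{OnCanonicalProblems_Grib}; treating ties in $c$ (or invoking those hypotheses explicitly) is exactly the part your proposal leaves open.
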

Note that a similar Theorem is valid for the \ref{ILP-SF} problems. We only need to define the class of local \ref{ILP-SF} problems in accordance with the problem \ref{local_SF_prob}.

By definition, any local \ref{ILP-CF} problem is equivalent to the problem \ref{local_CF_prob}. In turn, the last problem can be solved using Theorem \ref{main_ILP_th}. Combining the algorithm of Theorem \ref{main_ILP_th} with the Gomory's DP-based group minimization algorithm, we get the following 
\begin{corollary}\label{exp_app_cor}
In terms of Theorem \ref{local_prob_th}, with $\prob = 1$, any of the problems $\ILPCF(b)$ and $\ILPSF(b)$ can be solved by an algorithm with the arithmetic complexity bound
$$
O\left(\min\left\{ 
    \substack{
        d \cdot \Delta \cdot \log \Delta, \\ 
        \Delta^2/2^{\const \cdot \sqrt{\log \Delta}}
    } 
\right. \right) + \poly(\phi),
$$ where $d := n$ for \ref{ILP-CF} and $d := n-k$ for \ref{ILP-SF}.
\end{corollary}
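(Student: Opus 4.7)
The plan is to combine the density result of Theorem~\ref{local_prob_th} with the DP-based group algorithm from Section~\ref{Gomory_sec} and our main Theorem~\ref{main_ILP_th} applied in the codimension-zero regime. By Theorem~\ref{local_prob_th}, within the set $\FC$ of feasible right-hand sides we have $\prob(\GC \mid \FC) = 1$, so with probability $1$ the instance $\ILPCF(b)$ (resp.\ $\ILPSF(b)$) is local, i.e.\ it agrees on optima with \ref{local_CF_prob} (resp.\ \ref{local_SF_prob}) for the optimal basis $\BC$ of the relaxed LP. The LP itself and the extraction of $\BC$ fit within the $\poly(\phi)$ additive term.

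Next, I would eliminate the free variables using $A_{\BC}$. For \ref{local_SF_prob}, solving $A_{\BC} x_{\BC} + A_{\NotBC} x_{\NotBC} = b$ over $\ZZ^k$ turns the membership condition $x_{\BC} \in \ZZ^k$ into a congruence on the non-negative variables modulo the finite Abelian group $\GC = \ZZ^k / A_{\BC}\ZZ^k$ of order $|\det A_{\BC}| \le \Delta$. For \ref{local_CF_prob}, introducing slacks $s = b_{\BC} - A_{\BC} x \ge \BZero$ and eliminating the free $x$ via $A_{\BC}^{-1}$ yields an analogous group-minimization instance on $s \in \ZZ^n_{\ge 0}$. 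In both cases we arrive at a \ref{group_min_prob} instance with $d$ non-negative variables ($d = n-k$ for SF, $d = n$ for CF) and $|\GC| \le \Delta$; the objective transformation and the computation of the $g_i \in \GC$ are polynomial and go into $\poly(\phi)$.

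It remains to solve this \ref{group_min_prob} instance by the faster of two algorithms. The first is Gomory's classical DP: maintaining a table indexed by the (at most $\Delta$) group elements and relaxing along each of the $d$ items gives $O(d \cdot \Delta)$ group operations, and a preliminary sort/dedup reduces this to $O(\min\{d,\Delta\}\cdot\Delta)$ operations, each costing $O(\log \Delta)$ word operations on the explicit cyclic decomposition of $\GC$, so the total is $O(\min\{d,\Delta\}\cdot \Delta \cdot \log \Delta)$; equivalently, $O(d \cdot \Delta \cdot \log \Delta)$ suffices for the outer minimum. The second is Theorem~\ref{main_ILP_th} applied with codimension $k=0$, for which $f_{0,d}=1$ and $2^{O(k)} = O(1)$, giving the bound $\Delta^2 / 2^{\Omega(\sqrt{\log \Delta})} + \poly(\phi)$; internally this uses the fast group-tropical convolution of Theorem~\ref{group_min_conv_th} applied to the same \ref{group_min_prob} instance. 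Running whichever of the two is smaller and adding the $\poly(\phi)$ overhead for the LP, the basis change, and forming the group data yields the claimed bound. The only real subtlety is confirming that the reduction to \ref{group_min_prob} preserves the $k=0$ structure used by Theorem~\ref{main_ILP_th}, so that the $2^{O(k)} \cdot (f_{k,d}\Delta)^2$ factor collapses to $\Delta^2$; once this is checked the two estimates combine by a straightforward minimum.
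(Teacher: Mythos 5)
Your proposal is correct and follows essentially the same route as the paper: invoke Theorem~\ref{local_prob_th} to pass (with probability $1$) to the local problem, then solve it by whichever is faster of Gomory's DP-based group-minimization algorithm (giving the $O(d \cdot \Delta \cdot \log \Delta)$ branch) and Theorem~\ref{main_ILP_th} in the codimension-zero case where $f_{0,d}=1$ (giving the $\Delta^2/2^{\Omega(\sqrt{\log \Delta})}$ branch), with the LP solve, basis extraction, and reduction absorbed into $\poly(\phi)$. The one point you flag as a subtlety is indeed harmless: the local problem is itself an instance with $k=0$ (square $A_{\BC}$ for the canonical form, or a group/congruence instance of order at most $\Delta$ covered by Theorem~\ref{genILP_main_th} with trivial linear part), so the $2^{O(k)}(f_{k,d}\Delta)^2$ factor collapses to $\Delta^2$ exactly as you claim.
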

The new computational complexity bound is better than the bound \eqref{Gomorys_alg_complexity} for the cases, when $d = \Omega\bigl(\Delta / \polylog(\Delta)\bigr)$.

% \begin{definition} %Let $\BB$ be an optimal base of the relaxation of $\IP_{\leq}(A,b,c)$.
% Let us consider the \ref{ILP-SF} problem. Let $\BC$ be some optimal base of the corresponding relaxed LP problem. The \ref{ILP-SF} problem is \emph{local} if the set of its optimal solutions coincides with the set of optimal solutions of the following local problem 
% \begin{gather}
% c^\top x \to \max\notag\\
% \begin{cases}
% A  x = b\label{local_SF_prob}\tag{LOCAL-ILP-SF}\\
% x_{\BC} \in \ZZ^{k}\\
% x_{\NotBC} \in \ZZ_{\geq 0}^{n - k}
% \end{cases}
% \end{gather}
% that is induced from the original problem by omitting the inequalities $x_{\BC} \geq \BZero$.
% \end{definition}

% the arithmetic complexity of the Gomory's asymptotic algorithm can be estimated as $O\bigl(n\cdot \log n + \min\{n,\Delta\} \cdot \Delta \cdot \log \Delta\bigr)$ finds an optimal integer point of the simplified ILP problem associated with the corner polyhedron of the corresponding relaxed LP problem. 

\subsection{The Case \(k=1\) and ILP on Simplices}

One of the simplest ILP problems, which still remains NP-hard, is the ILP problem on an $n$-dimensional simplex. We consider the simplices, defined by systems of the type \ref{ILP-CF}, since this definition is more general than \ref{ILP-SF} (a clarification of this is given in Section \ref{connection_sec}). We call a simplex, defined by a $\Delta$-modular system \ref{ILP-CF}, as a \emph{$\Delta$-modular simplex.}

Surprisingly, many NP-hard problems, restricted on $\Delta$-modular simplices with $\Delta = \poly(\phi)$, can be solved by a polynomial-time algorithms. 
% Definitely, due to Gribanov et al. \cite[Theorem~14]{OnCanonicalProblems_Grib} the integer feasibility problem can be solved with $O\bigl(\min\{n,\Delta\} \cdot \Delta \cdot \log(\Delta) + \poly(\phi) \bigr)$ operations. 
Due to Theorem \ref{GribanovCanonicalWork_compplexity}, the integer linear optimization problem can be solved with $O\bigl(\Delta^2 \cdot \log^3(\Delta)  + \poly(\phi)\bigr)$ operations. 
Due to Gribanov \& Malyshev \cite{Counting_FPT_Delta} and a modification from Gribanov, Malyshev \& Zolotykh \cite{SparseILP_Gribanov}, the number of integer points in the simplex can be counted with $O(n^4 \cdot \Delta^3)$ operations. Due to Basu \& Jiang \cite{IntVertexEnumSimplex}, the vertices of $\Delta$-modular simplex can be enumerated by a polynomial-time algorithm, which can be used in convex optimization. A refined computational complexity bound is given in \cite[Section~3.3]{OnCanonicalProblems_Grib}. Due to Gribanov, Malyshev, Pardalos \& Veselov \cite{FPT_Grib} (see also \cite{WidthSimplex_Grib}), the lattice width can be computed with $\poly(n,\Delta,\Delta_{ext})$ operations, where $\Delta_{ext} = \Delta(A\,b)$ and $(A\,b)$ is the system's extended matrix. Additionally, for empty simplices, the computational complexity dependence on $\Delta_{ext}$ can be avoided, which gives the bound $\poly(n,\Delta)$. A similar result for lattice-simplices, defined by convex hulls of their vertices, is presented in Gribanov, Malyshev \& Veselov \cite{WidthConv_Grib}. Finally, it was shown in Gribanov \cite{SimplexEquiv_Gribanov} that all unimodular equivalence classes of $\Delta$-modular simplices can be enumerated by a polynomial-time algorithm, as well as the problem to check the unimodular equivalence of two given simplices. 

In this work, we give better computational complexity bounds for ILP on simplices with respect to Theorem \ref{GribanovCanonicalWork_compplexity}, especially for the feasibility problem.
\begin{theorem}
    Assuming that the system $A x \leq b$ defines a simplex, the ILP problem \ref{ILP-CF} can be solved with
    $$
    \Delta^2 / 2^{\Omega(\sqrt{\log \Delta})} + \poly(\phi) \quad \text{operations.}
    $$
    The feasibility problem can be solved with
    $$
    O\bigl(\Delta \cdot \log^3 (\Delta) \bigr) + \poly(\phi) \quad \text{operations.}
    $$
\end{theorem}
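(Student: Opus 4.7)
The plan is to reduce the statement to a direct application of Theorem \ref{main_ILP_th} with the parameter $k = 1$. The key geometric observation is that an $n$-dimensional simplex has exactly $n+1$ facets, so after deleting redundant inequalities from the system $A x \leq b$, the resulting system has exactly $n+1$ rows while still having $\rank(A) = n$. Hence the reduced system is of the form \ref{ILP-CF} with $k = 1$, and removing redundant inequalities is polynomial in $\phi$ (e.g., by $n+k$ linear programs on the LP relaxation, which itself fits into the $\poly(\phi)$ additive term).

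Next, I would plug $k = 1$ into the complexity bounds of Theorem \ref{main_ILP_th}. With $k = 1$, the constant $f_{k,d} = \min\{k^{k/2}, (\log k \cdot \log(k+d))^{k/2}\}$ satisfies $f_{1,d} \leq 1^{1/2} = 1$, and the prefactor $2^{O(k)}$ is $O(1)$. Therefore the optimization bound
$$
2^{O(k)} \cdot \left( (f_{k,d} \cdot \Delta)^2 / 2^{\Omega(\sqrt{\log(f_{k,d} \cdot \Delta)})} + \poly(\phi) \right)
$$
collapses to $\Delta^2 / 2^{\Omega(\sqrt{\log \Delta})} + \poly(\phi)$, and the feasibility bound
$$
2^{O(k)} \cdot \left( (f_{k,d} \cdot \Delta) \cdot \log^3(f_{k,d} \cdot \Delta) + \poly(\phi) \right)
$$
collapses to $O(\Delta \cdot \log^3 \Delta) + \poly(\phi)$, as required.

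I do not expect any real obstacle here: the only substantive step is the preprocessing that identifies and removes the redundant facet-defining inequalities so that the reduced system has $n+1$ constraints exactly and so that the invariant $\Delta(A') \leq \Delta(A)$ is preserved (which is automatic, since $A'$ is a sub-matrix of $A$). After that, the claimed bounds are precisely the instantiation of Theorem \ref{main_ILP_th} at $k = 1$, and the $\poly(\phi)$ overhead absorbs both the redundancy test and the additive $2^k \cdot \poly(\phi)$ term that appears in Theorem \ref{main_ILP_th}.
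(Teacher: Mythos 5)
Your proposal is correct and follows essentially the same route as the paper, which simply states that this theorem is a direct consequence of Theorem \ref{main_ILP_th} (instantiated at $k=1$, since a full-dimensional simplex has exactly $n+1$ facets). Your explicit preprocessing step of removing redundant inequalities in $\poly(\phi)$ time, with the observation that $\Delta$ cannot increase on a sub-matrix, just spells out what the paper leaves implicit.
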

This Theorem is the direct consequence of Theorem \ref{main_ILP_th}. Surprisingly, the feasibility problem on a simplex can be parameterized by a different parameter $\Delta_{\min}$, which denotes the minimum non-zero absolute value of $\rank(A) \times \rank(A)$ sub-determinants of $A$. Definitely, assuming that the rows of the system $A x \leq b$ define facets of a simplex, we can find a base $\BC$ of $A$, such that $\abs{\det(A_{\BC})} = \Delta_{\min}$. Then the feasibility problem on the simplex is equivalent to the minimization problem $\min\{c^\top x \colon A_{\BC} x \leq b_{\BC},\, x \in \ZZ^n\}$, where $c$ agrees with an outer-normal vector, corresponding to the facet that is `opposite' to $\BC$. This problem can be solved, using Theorem \ref{main_ILP_th} or the Gomory's group minimization algorithm (see Section \ref{Gomory_sec}), which proves 
\begin{theorem}
    Assume that the system $A x \leq b$ defines a simplex $\SC$ and lines of the system define its facets. Then the integer feasibility problem on $\SC$ can be solved with
    $$
    O\left(\min\left\{ 
        \substack{
            n \cdot \Delta_{\min} \cdot \log \Delta_{\min},\\
            \Delta^2_{\min}/2^{\const \cdot \sqrt{\log \Delta_{\min}}} 
        } 
    \right. \right) + \poly(\phi) \quad\text{operations.}
    $$
\end{theorem}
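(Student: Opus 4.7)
The plan is to reduce the integer feasibility problem on $\SC$ to an instance of \ref{ILP-CF} with $k = 0$ whose constraint matrix has absolute determinant exactly $\Delta_{\min}$, and then to invoke Theorem~\ref{main_ILP_th} together with the Gomory group-minimization bound \eqref{Gomorys_alg_complexity} from Section~\ref{Gomory_sec}.

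First I would observe that since the $n+1$ rows of $A x \leq b$ correspond to the facets of the $n$-dimensional simplex $\SC$, the matrix $A \in \ZZ^{(n+1) \times n}$ has rank $n$, and for every index $i \in \intint{n+1}$ the $n \times n$ sub-matrix $A_{\intint{n+1} \setminus \{i\}}$ is nonsingular, because the corresponding vertex $v_i$ of $\SC$ is uniquely determined as the intersection of the remaining $n$ facets. Hence every $n \times n$ sub-determinant of $A$ is nonzero and $\Delta_{\min} \geq 1$. I would then pick $\BC \subseteq \intint{n+1}$ with $|\BC| = n$ realising $|\det(A_{\BC})| = \Delta_{\min}$, denote by $i^\star$ the unique element of $\intint{n+1} \setminus \BC$, and set $c := (A_{i^\star,:})^\top \in \ZZ^n$ and $\beta := b_{i^\star}$.

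Next I would carry out the feasibility-to-optimization reduction outlined immediately before the statement. The polyhedron $\{x \in \RR^n \colon A_{\BC} x \leq b_{\BC}\}$ is a full-dimensional simplicial cone with apex $v_{i^\star} = A_{\BC}^{-1} b_{\BC}$, and boundedness of $\SC$ forces $c^\top y > 0$ for every nonzero $y$ in its recession cone $\{y \colon A_{\BC} y \leq 0\}$. Consequently the LP $\min\{c^\top x \colon A_{\BC} x \leq b_{\BC}\}$ attains its unique optimum at $v_{i^\star}$, and the integer feasibility of $\SC$ becomes equivalent to the inequality
\[
\min\bigl\{c^\top x \colon A_{\BC}\,x \leq b_{\BC},\ x \in \ZZ^n\bigr\} \leq \beta.
\]

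Finally I would recognise this minimisation as an instance of \ref{ILP-CF} with $k = 0$, constraint matrix $A_{\BC}$ of absolute determinant $\Delta_{\min}$, and $f_{0,n} = 1$. Theorem~\ref{main_ILP_th} then supplies the first branch of the stated bound, namely $\Delta_{\min}^2 / 2^{\Omega(\sqrt{\log \Delta_{\min}})} + \poly(\phi)$. Independently, the LP relaxation of the reduced problem has the unique optimal basis $\BC$, so the Gomory group-minimization algorithm is directly applicable: the underlying Abelian group has order $|\det(A_{\BC})| = \Delta_{\min}$ with $n$ generators (coming from the $n$ slack variables), and the DP of Section~\ref{Gomory_sec} combined with \eqref{Gomorys_alg_complexity} runs in $O(n \cdot \Delta_{\min} \cdot \log \Delta_{\min}) + \poly(\phi)$ operations. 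Taking the minimum of the two bounds completes the proof. The only genuinely delicate point is the feasibility-to-optimization translation, but once the inequality $c^\top y > 0$ on the recession cone is noted, everything else is a plug-in of previously established machinery.
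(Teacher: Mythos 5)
Your proposal is correct and follows essentially the same route as the paper: choose a base $\BC$ with $\abs{\det(A_{\BC})} = \Delta_{\min}$, reduce integer feasibility on $\SC$ to the corner minimization problem $\min\{c^\top x \colon A_{\BC} x \leq b_{\BC},\, x \in \ZZ^n\}$ with $c$ the outer normal of the opposite facet (checking the optimum against $b_{i^\star}$), and then apply Theorem~\ref{main_ILP_th} with $k=0$ for one branch and Gomory's group minimization bound \eqref{Gomorys_alg_complexity} for the other. You merely make explicit details the paper leaves implicit (the threshold $\beta$, the positivity of $c^\top y$ on the recession cone, and the attainment of the minimum), which is fine.
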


\section{Connection of the Problems \texorpdfstring{\ref{ILP-CF} and \ref{ILP-SF}}{ILP-CF and ILP-SF}
}\label{connection_sec}

In the current Section, we are going to clarify our decision to use two definitions of ILP problems: \ref{ILP-CF} and \ref{ILP-SF}. The most common formulation, which can be found in most of the works, devoted to the ILP topic, is the \ref{ILP-SF} formulation. It is natural to consider the computational complexity of \ref{ILP-SF} with respect to the number $k$ of lines in the system, which was pioneered in the seminal work \cite{Papadimitriou}, due to C.~Papadimitriou. In turn, the formulation \ref{ILP-CF} is clearer from the geometrical point of view, but it can be easily transformed to \ref{ILP-SF}, introducing some new integer variables. However, this trivial reduction has a downside: it changes the value of the parameter $k$ and the dimension of the corresponding polyhedra.

A less trivial reduction that preserves the parameters $k$, $\Delta$, and the dimension of the corresponding polyhedra, is described in Gribanov et al. \cite{OnCanonicalProblems_Grib}. This reduction connects the problem \ref{ILP-CF} with the \emph{ILP problem in the standard form with modular constraints}, which strictly generalizes the problem \ref{ILP-SF} and can be formulated in the following way.
\begin{definition}\label{ModularILPSF_def}
    Let $A \in \ZZ^{k \times n}$ and $G \in \ZZ^{(n-k)\times n}$, such that $\binom{A}{G}$ is an integer $n \times n$ unimodular matrix. Additionally, let $S \in \ZZ^{(n-k)\times(n-k)}$ be a matrix, reduced to the SNF, $g \in \ZZ^{n-k}$, $b \in \ZZ^k$, $c \in \ZZ^n$. The \emph{ILP problem in the standard form of the co-dimension $k$ with modular constraints} is formulated as follows:
    \begin{align}
        & c^\top x \to \max\notag\\
        & \begin{cases}
            A x = b\\
            G x \equiv g \pmod{S \cdot \ZZ^n}\\
            x \in \ZZ^n_{\geq 0}.
        \end{cases}\tag{$\ModularILPSF$}\label{MOD-ILP-SF}
    \end{align}
    Here the notation $G x \equiv g \pmod{S \cdot \ZZ^n}$ denotes that, for each $i \in \intint{(n-k)}$, there exists $z \in \ZZ$ such that $G_{i *} x = g_i + S_{i i} z$.
\end{definition}
% \begin{definition}
%     Let $A \in \ZZ^{k \times n}$ and $G \in \ZZ^{(n-k)\timesn}$, such that $\binom{A}{G}$ is an integer $n \times n$ unimodular matrix. Let, additionally, $S \in \ZZ^{(n-k)\times(n-k)}$ be a matrix reduced to the SNF
    
%     $\rank(A) = k$, $c \in \ZZ^n$, $b \in \ZZ^k$. Additionally, let $\GC$ be a finite Abelian group in additive notation and $g_0, g_1, \dots, g_n$ be elements of $\GC$. The \emph{generalized ILP problem in the standard form of codimension $k$} is formulated as follows:
%     \begin{align}
%         & c^\top x \to \max\notag\\
%         & \begin{cases}
%             A x = b\\
%             \sum\limits_{i=1}^n x_i g_i = g_0\\
%             x \in \ZZ^n_{\geq 0}.
%         \end{cases}\tag{$\GenILPSF$}\label{GEN-ILP-SF}
%     \end{align}
% \end{definition}
Due to this reduction, the problem \ref{ILP-CF} is strictly more general, since some exemplars of the \ref{ILP-CF} problem can be reduced to \ref{MOD-ILP-SF} problems, equipped with a non-trivial matrix $G$, which, in turn, can not be represented by \ref{ILP-SF} type problems, preserving $k$, $\Delta$ and the dimension. The corresponding example could be found in \cite[Reamark~4]{OnCanonicalProblems_Grib}. From this point of view, it is more consistent to consider the \ref{ILP-CF} problems. 

Therefore, keeping in mind both the facts that the formulation \ref{ILP-SF} is more user-friendly and the formulation \ref{ILP-CF} is more general, we have decided to consider the both formulations in our work. Let us recall the formal description of the outlined reduction, which is given by the following Lemmas.
\begin{lemma}[Lemma~4, Gribanov et al. \cite{OnCanonicalProblems_Grib}]\label{ILPCF_to_ILPSF_lm}
    For any instance of the \ref{ILP-CF} problem, there exists an equivalent instance of the \ref{MOD-ILP-SF} problem
    \begin{align*}
        & \hat c^\top x \to \min\\
        & \begin{cases}
            \hat A x = \hat b\\
            G x \equiv g \pmod{S \cdot \ZZ^n}\\
            x \in \ZZ_{\geq 0}^{n+k},
        \end{cases}
    \end{align*}
    with $\hat A \in \ZZ^{k \times (k+n)}$, $\rank(\hat A) = k$, $\hat b \in \ZZ^k$, $\hat c \in \ZZ^{n+k}$, $G \in \ZZ^{n \times (n+k)}$, $g \in \ZZ^n$, $S \in \ZZ^{n \times n}$. Moreover, the following properties hold:
    \begin{enumerate}
        \item $\hat A \cdot A = \BZero_{k \times n}$, $\Delta(\hat A) = \Delta(A)/\Delta_{\gcd}(A)$;
        \item $\abs{\det(S)} = \Delta_{\gcd}(A)$;
        \item There exists a bijection between rank-order sub-determinants of $A$ and $\hat A$;
        \item The map $\hat x = b - A x$ is a bijection between integer solutions of both problems;
        \item If the original relaxed LP problem is bounded, then we can assume that $\hat c \geq \BZero$;
        \item The reduction is not harder than the computation of the SNF of $A$.
    \end{enumerate}
\end{lemma}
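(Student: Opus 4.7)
The strategy is the slack substitution $\hat{x} := b - A x$: it converts $A x \leq b$ into $\hat{x} \geq \BZero$, and the integrality of the ``hidden'' $x$ is recorded through a modular congruence extracted from the Smith Normal Form of $A$. I plan to begin by writing $A = P \binom{S}{\BZero} Q$, with $P \in \ZZ^{(n+k) \times (n+k)}$ and $Q \in \ZZ^{n \times n}$ unimodular and $S = \diag(s_1, \dots, s_n)$ satisfying $s_1 \mid \dots \mid s_n$. Partitioning the unimodular inverse as $P^{-1} = \binom{G}{\hat A}$ with $G \in \ZZ^{n \times (n+k)}$ and $\hat A \in \ZZ^{k \times (n+k)}$, I set $\hat b := \hat A\,b$ and $g := G\,b$. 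This will yield the target \ref{MOD-ILP-SF} system $\hat A \hat x = \hat b$, $G \hat x \equiv g \pmod{S \cdot \ZZ^n}$, $\hat x \in \ZZ_{\geq 0}^{n+k}$.

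The core verification is the bijection (property~4). From $P^{-1} A = \binom{GA}{\hat A A} = \binom{SQ}{\BZero}$ I read off $\hat A A = \BZero$ and $GA = SQ$. The forward direction is then immediate: if $\hat x = b - A x$ with $x \in \ZZ^n$ feasible, then $\hat A \hat x = \hat A b = \hat b$ and $G \hat x = G b - S Q x \equiv g \pmod{S \ZZ^n}$, since $Q x \in \ZZ^n$. For the converse, an $\hat x$ satisfying the modular system forces $P^{-1}(\hat x - b) = \binom{S w}{\BZero}$ for some $w \in \ZZ^n$ (the top block lies in $S \ZZ^n$ by the congruence, the bottom block vanishes because $\hat A \hat x = \hat b$), so $\hat x - b = A Q^{-1} w$ and $x := -Q^{-1} w \in \ZZ^n$ certifies the preimage.

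Most remaining properties then reduce to standard facts. Property~1: $\hat A A = \BZero$ is already in hand, and the minor identity $\Delta(\hat A) = \Delta(A)/\Delta_{\gcd}(A)$ (together with the rank-order minor bijection in property~3) will follow from Jacobi's complementary-minor formula for the unimodular matrix $P$: for $J \subseteq \intint{n+k}$ with $|J| = n$ and $\bar J := \intint{n+k} \setminus J$, one has $|\det \hat A_{\bullet, \bar J}| = |\det P_{J, \intint n}|$, while $|\det A_{J, \bullet}| = |\det P_{J, \intint n}| \cdot |\det S|$, giving $|\det A_{J, \bullet}| = \Delta_{\gcd}(A) \cdot |\det \hat A_{\bullet, \bar J}|$, and the claim after maximizing over $J$. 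Property~2 reads off the SNF: $|\det S| = s_1 \cdots s_n$ is the gcd of the $n \times n$ minors of $A$. Property~6 is Storjohann's polynomial-time SNF algorithm. For property~5, LP boundedness produces a dual certificate $y \in \QQ_{\geq 0}^{n+k}$ with $A^\top y = c$; then $c^\top x = y^\top A x = y^\top b - y^\top \hat x$, so $\hat c := y$, cleared to integers by scaling, is nonnegative.

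The step I expect to be most delicate is the determinant identity in property~1, where signs and indexings in Jacobi's formula, the SNF blocks, and the partition of $P^{-1}$ must all line up coherently. The nonnegativity of $\hat c$ in property~5 is conceptually quick once LP duality is invoked, but I would take care to note that clearing denominators to obtain an integer $\hat c$ merely rescales the numerical optimum while preserving the optimal argument $\hat x$, which is all the ILP reduction needs.
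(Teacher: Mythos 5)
Your proposal is correct, and it is essentially the intended argument: the present paper imports this lemma from Gribanov et al.\ \cite{OnCanonicalProblems_Grib} without reproving it, and your SNF-based reconstruction — partitioning $P^{-1}=\binom{G}{\hat A}$ so that $\hat A A=\BZero$ and $GA=SQ$, the slack map $\hat x=b-Ax$ with the converse recovered from $P^{-1}(\hat x-b)=\binom{Sw}{\BZero}$, complementary minors of the unimodular $P$ for the determinant correspondence, $\abs{\det S}=\Delta_{\gcd}(A)$ from the invariant factors, and LP duality for $\hat c\geq\BZero$ — is exactly the standard route and checks out. Two minor touches worth recording: for the objective in general (not only the bounded case) take any rational $y$ with $A^\top y=c$, which exists since $\rank(A)=n$, and clear denominators; and note that $\binom{G}{\hat A}=P^{-1}$ being unimodular is precisely what the \ref{MOD-ILP-SF} definition demands of the pair $(\hat A,G)$.
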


\begin{lemma}[Lemma~5, Gribanov et al. \cite{OnCanonicalProblems_Grib}]\label{ILPSF_to_ILPCF_lm}
    For any instance of the \ref{MOD-ILP-SF} problem, there exists an equivalent instance of the \ref{ILP-CF} problem
    \begin{align*}
        &\hat c^\top x \to \max\\
        &\begin{cases}
            \hat A x \leq \hat b\\
            x \in \ZZ^d
        \end{cases}
    \end{align*}
    with $d = n - k$, $\hat A \in \ZZ^{(d+k)\times d}$, $\rank(\hat A) = d$, $\hat c \in \ZZ^d$, and $b \in \ZZ^{d+k}$. Moreover, the following properties hold:
    \begin{enumerate}
        \item $A \cdot \hat A = \BZero_{k \times d}$, $\Delta(\hat A) = \Delta(A) \cdot \abs{\det(S)}$;
        \item $\Delta_{\gcd}(\hat A) = \abs{\det(S)}$;
        \item There exists a bijection between rank-order sub-determinants of $A$ and $\hat A$;
        \item The map $x = \hat b - \hat A \hat x$ is a bijection between integer solutions of both problems;
        \item The reduction is not harder than the inversion of an integer unimodular $n \times n$ matrix $\binom{A}{G}$.
    \end{enumerate}
\end{lemma}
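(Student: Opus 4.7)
The plan is to perform an explicit unimodular change of variables based on $\binom{A}{G}$ and then read off the claimed structural properties. Since $\binom{A}{G} \in \ZZ^{n \times n}$ is unimodular, its integer inverse can be partitioned as $(P \mid Q)$ with $P \in \ZZ^{n \times k}$ and $Q \in \ZZ^{n \times (n-k)}$. The block identities $AP = I_k$, $AQ = \BZero$, $GP = \BZero$, and $GQ = I_{n-k}$ follow immediately from $\binom{A}{G}(P \mid Q) = I_n$. Using integrality of the inverse, every integer $x$ with $A x = b$ admits a unique representation $x = Pb + Qy$ with $y \in \ZZ^{n-k}$: indeed $y = Gx$, and the columns of $Q$ are an integer basis of $\ker A \cap \ZZ^n$.

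Substituting this parameterization into the remaining constraints, the congruence $Gx \equiv g$ becomes $y \equiv g \pmod{S \cdot \ZZ^{n-k}}$, hence $y = g + Sz$ for some $z \in \ZZ^{n-k}$. Then $x = (Pb + Qg) + QSz$, and $x \geq \BZero$ translates to $-QS\,z \leq Pb + Qg$. Setting $\hat A := -QS$, $\hat b := Pb + Qg$, and $\hat c := \hat A^\top c$, the objective satisfies $c^\top x = c^\top(Pb + Qg) - \hat c^\top z$, so minimizing $c^\top x$ is equivalent to maximizing $\hat c^\top z$ subject to $\hat A z \leq \hat b$, $z \in \ZZ^{n-k}$. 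The rank of $\hat A$ equals $n - k = d$ since $Q$ has full column rank and $S$ is nonsingular, and the inverse map $z = S^{-1}(Gx - g)$ gives the bijection claimed in item~4.

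The structural items~1--3 I would derive from Jacobi's complementary-minor identity. Applied to $\binom{A}{G}$ and its inverse $(P \mid Q)$, this identity yields $\abs{\det(Q^{I,:})} = \abs{\det(A_{:,\,\intint n \setminus I})}$ for every $I \subseteq \intint n$ with $\abs{I} = n - k$. Since $\hat A = -QS$, every $(n-k) \times (n-k)$ minor of $\hat A$ equals $(-1)^{n-k}\det(S)$ times the corresponding minor of $Q$, so $I \mapsto \intint n \setminus I$ is a bijection between rank-order minors of $\hat A$ and of $A$ that scales absolute values by $\abs{\det S}$. This yields $\Delta(\hat A) = \Delta(A) \cdot \abs{\det S}$ and the bijection of item~3, and --- combined with $\Delta_{\gcd}(A) = 1$, which itself follows from Laplace expansion of $\det\binom{A}{G} = \pm 1$ --- also $\Delta_{\gcd}(\hat A) = \abs{\det S}$. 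The identity $A \hat A = -AQS = 0$ is immediate. For item~5, the only nontrivial operation is inverting $\binom{A}{G}$, and all remaining steps are a constant number of matrix products. The main obstacle I anticipate is bookkeeping the signs and index orderings in Jacobi's identity precisely enough that the bijection of item~3 is exhibited explicitly, rather than only on absolute values.
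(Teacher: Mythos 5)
The paper does not prove this lemma at all: it is imported verbatim as Lemma~5 of Gribanov et al.\ \cite{OnCanonicalProblems_Grib}, so there is no in-paper argument to compare against. Your reconstruction is the natural one and, as far as I can check, correct: parameterizing the fiber $\{x \in \ZZ^n \colon Ax = b\}$ as $x = Pb + QGx$ via the integer inverse $(P \mid Q)$ of $\binom{A}{G}$, absorbing the congruence through $y = g + Sz$, and taking $\hat A = -QS$, $\hat b = Pb + Qg$ reproduces exactly the map $x = \hat b - \hat A \hat x$ of item~4, and the complementary-minor (Jacobi) identity $\abs{\det(Q_{I\,*})} = \abs{\det(A_{*\,\intint{n}\setminus I})}$ together with $\det(\hat A_{I\,*}) = (-1)^{d}\det(Q_{I\,*})\det(S)$ gives items~1--3, with $\Delta_{\gcd}(A)=1$ correctly extracted from the Laplace expansion of $\det\binom{A}{G} = \pm 1$; your worry about signs in Jacobi's identity is harmless because the lemma only concerns absolute values and the index bijection $I \mapsto \intint{n}\setminus I$. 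The one slip is the objective: the \ref{MOD-ILP-SF} problem as stated here is a maximization, so with $c^\top x = c^\top(Pb+Qg) + c^\top QS\,z$ you should set $\hat c := (QS)^\top c = -\hat A^\top c$ (not $\hat A^\top c$) so that maximizing $c^\top x$ corresponds to maximizing $\hat c^\top z$; your choice pairs the minimization orientation (the one produced by Lemma~\ref{ILPCF_to_ILPSF_lm}) with the maximization target, which is a trivial sign fix rather than a genuine gap.
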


\subsection{Generalized ILP Problem and the Proof of Theorem \ref{main_ILP_th}}\label{GenILP_subs}

To prove our main results and to make the proofs shorter and clearer, we solve a bit more general and natural problem, which, due to the described Lemmas, generalizes all the problems \ref{ILP-CF}, \ref{MOD-ILP-SF} and \ref{ILP-SF}. 
\begin{definition}\label{GroupILPSF_def}
    Let $A \in \ZZ^{k \times n}$ with $\rank(A) = k$ and $\Delta_{\gcd}(A) = 1$, $b \in \ZZ^k$, $c \in \ZZ^n$. Additionally, let $\GC$ be a finite Abelian group in the additive notation  represented by its basis and let $g_0,g_1, \dots, g_n \in \GC$. The \emph{generalized ILP problem is the standard form of the co-dimension $k$} is formulated as follows:
    \begin{align}
        &c^\top x \to \max\notag\\
        &\begin{cases}
            A x = b\\
            \sum\limits_{i=1}^n x_i \cdot g_i = g_0\\
            x \in \ZZ_{\geq 0}^n.
        \end{cases}\tag{$\GenILPSF$}\label{GEN-ILP-SF}
    \end{align}
\end{definition}

It is not straightforwardly clear how to define a linear programming relaxation of the \ref{GEN-ILP-SF} problem. So, we emphasize it as a stand-alone definition:
\begin{definition}\label{BLPSF_def}
In terms of the previous definition, the problem 
\begin{gather*}
    c^\top x \to \max \notag\\
    \begin{cases}
    A x = b\\
    x \in \RR_{\geq 0}^n.
    \end{cases}
\end{gather*} is called \emph{the linear programming relaxation} of the \ref{GEN-ILP-SF} problem.
\end{definition}

The problem \ref{GEN-ILP-SF} problem can be solved, using the following generalized Theorem, whose proof requires an additional effort and will be given in \Vref{genILP_main_th_proof}. 
\begin{restatable}{theorem}{mainGenILPTh}
\label{genILP_main_th}
    Consider the problem \ref{GEN-ILP-SF}. Denote $r = \abs{\GC}$, the problem can be solved with 
    $$
        O(\log k)^{2k} \cdot (r \Delta)^2 / 2^{\Omega\bigl(\sqrt{\log(r\Delta)}\bigr)} + 2^{O(k)} \cdot \poly(\phi) 
    $$
    operations with elements from $\ZZ^k \times \GC$. The feasibility variant of the problem can be solved with
    $$
    O(\log k)^{k} \cdot (r \Delta) \cdot \log^2(r\Delta) + 2^{O(k)} \cdot \poly(\phi) \quad\text{operations in $\ZZ^k \times \GC$.}
    $$
\end{restatable}

Now, let us show how the proof of our main \Cref{main_ILP_th} 
% and \ref{main_ILP_logD_th} 
can be deduced from \Cref{genILP_main_th}. Recall the definition of \Cref{main_ILP_th}.

\mainILPTh*

\begin{myproof}[Proof of Theorem \ref{main_ILP_th}]\label{main_ILP_th_proof}
    Assume first that the problem \ref{ILP-SF} is considered. Clearly, it is a partial case of the problem \ref{GEN-ILP-SF} with the trivial group $\GC$, consisting of a neutral element. Since the cost of a single operation in $\ZZ^k \times \GC$ is $k$ and since $O(\log k)^{2k}$ dominates $k$, Theorem \ref{main_ILP_th} is the straight corollary of Theorem \ref{genILP_main_th}, for the considered case.

    Consider now the problem \ref{ILP-CF}. Due to Lemma \ref{ILPCF_to_ILPSF_lm}, it can be reduced to an equivalent instance of the problem \ref{MOD-ILP-SF} with $\hat n = n + k$, $\hat d = n$ and $\Delta(\hat A) = \Delta / \Delta_{\gcd}$, which, in turn, is a partial case of the problem \ref{GEN-ILP-SF} with the group $\GC$, represented as $\GC = \ZZ^n / S \cdot \ZZ^n$. Clearly, $r = \abs{\GC} = \det(S) = \Delta_{\gcd} \leq \Delta$. Since an element of $\ZZ^k \times \GC$ can be represented by an integer vector with at most $k + \log_2 r \leq k \cdot \log_2 \Delta$ components, which is dominated by $O(\log k)^{2k} \cdot 2^{\Omega\bigl(\sqrt{\log \Delta}\bigr)}$, the resulting number of operations can be expressed by the formula
    $$
    O(\log k)^{2k} \cdot \Delta^2 / 2^{\Omega\bigl(\sqrt{\log \Delta}\bigr)}
    $$ 
    modulo the additive term $2^{O(k)} \cdot \poly(\phi)$.

    With respect to the feasibility variant of the problem \ref{ILP-CF}, the number of operations can be expressed by the formula 
    \begin{multline*}
        O(\log k)^k \cdot \Delta \cdot \log^2 \Delta \cdot (k + \log r) = \\
        O(\log k)^k \cdot \Delta \cdot \log^3 \Delta.
    \end{multline*}
    The formulas above satisfy the desired computational complexity bounds, which finishes the proof of Theorem \ref{main_ILP_th}.
\end{myproof}

% \proof{\bf Proof of Theorem \ref{main_ILP_logD_th}.\label{main_ILP_logD_th_proof}}
%     The proof of Theorem \ref{main_ILP_logD_th} is completely similar with the only difference that we use the conclusion of Remark \ref{n_delta_bound_rm} instead of the direct application of Theorem \ref{genILP_main_th}. \Halmos
% \endproof

\section{Convolution and Group Ring}\label{conv_sec}

In the current section we recall definitions of the generalized tropical and Boolean convolution problems with respect to finite Abelian groups. These problems will arise later in the proof of \Cref{genILP_main_th} as subproblems. The results of this section are interesting in their own.

Let $\GC$ be a finite Abelian group (in the additive notation). Consider a \emph{group ring $\RC[\GC]$} over a commutative ring $\RC = \bigl(\RC, \oplus, \otimes\bigr)$, which is a free $\RC$-module and the same time a ring. 
% Denote the ring operations by $\oplus$ and $\otimes$. 
The \emph{group ring} $\RC[\GC]$ is the set of functions $\alpha \colon \GC \to \RC$ of a finite support, where the module scalar product $c \otimes \alpha$ of a scalar $c \in \RC$ and a function $\alpha$ is defined as the mapping $x \to c \otimes \alpha(x)$, and the module sum $\alpha \oplus \beta$ of two mappings 
$\alpha$ and $\beta$ is defined as the mapping $x \to \alpha(x) \oplus \beta(x)$. The multiplication (convolution) $\alpha \star \beta$ of two functions $\alpha$ and $\beta$ is defined by the mapping 
\begin{equation*}
x \to \bigoplus\limits_{\substack{g_1 + g_2 = x\\ g_1,g_2 \in \GC}} \alpha(g_1) \otimes \beta(g_2) = \bigoplus\limits_{g \in \GC} \alpha(g) \otimes \beta(x - g).     
\end{equation*}
Throughout the text, we relax the requirements on $\RC$ and assume that $\RC$ is a commutative semiring, which does not affect the correctness of the definition. 

% \begin{definition}
%     Let $\GC$ be an Abelian group, represented as the direct sum of subgroups $\GC = \QC \oplus \HC$. For a function $\alpha \in \RC[\GC]$ and element $h \in \HC$, the \emph{relative support $\supp_{h}(\alpha)$} is defined by the formula
% $$
% \supp_{h}(\alpha) = \bigl\{q \in \QC \colon \alpha(q + h) \not= 0 \bigr\}.
% $$
% The relative support can be defined with respect to the whole sub-group $\HC$:
% $$
% \supp_{\HC}(\alpha) = \bigcup\limits_{h \in \HC} \supp_{h}(\alpha).
% $$
% % The following formula golds:
% % \begin{equation}\label{subgroup_decomp_eq}
% %     \abs{\supp(\alpha)} = \sum\limits_{h \in \HC} \abs{\supp_{h}(\alpha)}.
% % \end{equation}
% \end{definition}

Additionally, let us recall the definition of a \emph{basis of an Abelian group}.
\begin{definition}
We say that an Abelian group $\GC$ is defined by a \emph{basis $b_1, b_2, \dots, b_s$}, if any element $g \in \GC$ can be uniquely represented as a combination
$$
g = i_1 \cdot b_1 + i_2 \cdot b_2 + \dots + i_s \cdot b_s,
$$ where $i_j \in \intint[0]{\rank(b_j)-1}$, for any $j \in \intint s$.
\end{definition}

\subsection{The Tropical Semiring}
In the current subsection, we consider the group semiring $\RC_{(\min,+)}[\GC]$, where $\RC_{(\min,+)} = \bigl(\ZZ \cup \{+\infty\},\min,+\bigr)$ is tropical semiring, also known as the $(\min,+)$-semiring, and $\GC$ is an arbitrary finite Abelian group. Everywhere in the current subsection, we assume that elements $\alpha \in \RC_{(\min,+)}[\GC]$ are represented by $w$-bit integers on their support. 
The main nontrivial fact that we use is given by the following
\begin{theorem}[R.~Williams \cite{APSPViaCircuitComplexity}, T.~Chan \& R.~Williams \cite{DerandomRazbSmol}]\label{Williams_th}
    Let $A,B \in (\ZZ \cup \{+\infty\})^{n \times n}$. Then
    % assuming that each input integer number can be encoded with $w$-bit machine word,  
    the product $A \cdot B$ in $\RC_{(\min,+)}$ can be computed with $O\bigl(n^3 / 2^{\Omega(\sqrt{\log n})} \bigr)$ operations with $O(w)$-bit integers.
\end{theorem}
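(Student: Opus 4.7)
The plan is to follow the polynomial-method framework of R.~Williams, combining a block decomposition of the $(\min,+)$-product with fast rectangular matrix multiplication. First I would partition both $A$ and $B$ into square blocks of side $d$, where $d = 2^{\Theta(\sqrt{\log n})}$ is a parameter to be tuned at the end. The product $C = A \star B$ then becomes the componentwise minimum over $(n/d)^3$ many $d \times d$ min-plus block products; the aggregation step costs $O(n^3/d)$ comparisons, which is already within the target bound $n^3 / 2^{\Omega(\sqrt{\log n})}$, so the remaining work is to evaluate all block products in batch.

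For a single $d \times d$ block product I would apply the polynomial method of Razborov--Smolensky: construct a low-degree probabilistic polynomial over a small-characteristic finite field that, for a fixed candidate value $v$, evaluates to a distinguished value precisely when $\min_k \bigl(A[i,k]+B[k,j]\bigr) \le v$. Iterating over $O(d^2)$ candidate values (the ranks of the $d^2$ possible sums in the block, precomputed by sorting) recovers the full min-plus product of the block from a small number of polynomial evaluations. The key gain comes from linearity of polynomial evaluation: batching the evaluations on many block inputs simultaneously collapses to a single rectangular matrix product with outer dimensions $\Theta(n^2/d^2)$ and shared inner dimension of moderate polynomial size in $d$, to which Coppersmith's unbalanced matrix-multiplication algorithm applies in essentially input-plus-output time.

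Balancing $d$ so that the rectangular product sits in Coppersmith's nearly-linear regime produces exactly the savings factor $2^{\Omega(\sqrt{\log n})}$. For the word-RAM bookkeeping, each sum $A[i,k]+B[k,j]$ still fits in $O(w)$ bits, and the polynomial arithmetic is performed over a field whose elements also fit in $O(w)$ bits, so every scalar operation costs $O(1)$ on a word-RAM of word size $\Omega(w)$. The main obstacle will be the precise interplay between the degree of the probabilistic polynomial, the support size of the evaluation grid, and the Coppersmith exponent: one must verify that the choice $d = 2^{\Theta(\sqrt{\log n})}$ matches them on the nose, and then invoke the Chan--Williams derandomization of the Razborov--Smolensky construction to obtain a deterministic algorithm. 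The technical details are carried out in \cite{APSPViaCircuitComplexity} and \cite{DerandomRazbSmol}, which together yield the stated bound.
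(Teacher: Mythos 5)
This statement is not proved in the paper at all: it is imported as a black-box result of R.~Williams and T.~Chan \& R.~Williams, so there is no internal proof to compare against, and the relevant benchmark is the argument in \cite{APSPViaCircuitComplexity} and \cite{DerandomRazbSmol} themselves. Your outline does capture the architecture of that argument: arrange the $(\min,+)$-product so that the inner dimension of the hard subproblem is $d = 2^{\Theta(\sqrt{\log n})}$, replace the comparison circuitry by Razborov--Smolensky probabilistic polynomials (derandomized as in \cite{DerandomRazbSmol}), batch the evaluations into a Coppersmith-style rectangular matrix product with inner dimension $n^{o(1)}$, and pay $O(n^3/d)$ for the final aggregation. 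Two details, however, deviate from the actual construction and, taken literally, would endanger the bound: first, Williams' subroutine is the min-plus product of an $n \times d$ matrix with a $d \times n$ matrix, repeated $n/d$ times, so the batched rectangular product has outer dimensions $n \times n$ rather than $\Theta(n^2/d^2)$; second, the minimum is not recovered by sweeping over $\Theta(d^2)$ candidate sum values per block --- a per-block sweep of that size would multiply the work by roughly $d^2$ and wipe out the $1/d$ savings. In the cited proof, sorting is used once per pair $(k,k')$ of inner indices to replace the differences $A[i,k]-A[i,k']$ and $B[k',j]-B[k,j]$ by their ranks, these ranks serve as the Boolean inputs of the comparison circuit, and the minimizing index is then extracted with only a small (logarithmic) number of additional batched evaluations. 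Since you explicitly defer the technical balancing to \cite{APSPViaCircuitComplexity} and \cite{DerandomRazbSmol}, your proposal is acceptable as a summary of the known proof, but it should not be read as a self-contained derivation of the claimed time bound.
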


Due to D.~Bremner et al. \cite{NecklacesConv}, the convolution of two finite sequences can be reduced to $\RC_{(\min,+)}$-multiplication of square matrices. The reduction complexity is $O\bigl( T(\sqrt{n}) \cdot \sqrt{n} \bigr)$, where $T(\cdot)$ is the computational complexity of a square matrix multiplication in $\RC_{(\min,+)}$. In turn, due to Theorem \ref{Williams_th}, $T(n) = O(n^3/2^{\Omega(\sqrt{\log n})})$, which leads to an $O(n^2/2^{\Omega(\sqrt{\log n})})$-time algorithm for the standard $(\min,+)$-convolution. More formally:
\begin{corollary}[D.~Bremner et al. \cite{NecklacesConv}]\label{standard_minconv_th}
    Let $\alpha = \{\alpha_i\}_{i = 0}^{n-1}$ and $\beta = \{\beta_i\}_{i = 0}^{n-1}$ be input sequences with elements in $\ZZ \cup \{+\infty\}$. Then
    % assuming that each input integer number can be encoded with $w$-bit machine word, 
    the convolution $\alpha \star \beta$ can be computed with $O\bigl(n^2 / 2^{\Omega(\sqrt{\log n})} \bigr)$ operations with $O(w)$-bit integers.
\end{corollary}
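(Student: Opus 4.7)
The plan is to reduce the length-$n$ tropical convolution to $O(\sqrt{n})$ instances of tropical matrix multiplication on $\sqrt{n} \times \sqrt{n}$ matrices and then invoke Theorem \ref{Williams_th}. Set $m = \lceil \sqrt{n} \rceil$ and, padding with $+\infty$ where necessary, assume both $\alpha$ and $\beta$ are indexed by $\{0, 1, \ldots, m^2 - 1\}$. Each index admits a unique decomposition $i = am + a'$ with $a, a' \in \{0, \ldots, m-1\}$, and similarly $j = bm + b'$ and $k = cm + r$. The condition $i + j = k$ becomes $(a + b)m + (a' + b') = cm + r$, which by writing $a' + b' = \kappa m + r$ with $\kappa \in \{0, 1\}$ factors into two ``carry cases'': $a + b = c - \kappa$ together with $a' + b' = r + \kappa m$.

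Next I would fix a shift $s \in \{0, 1, \ldots, 2m-2\}$ and build $m \times m$ tropical matrices $P^{(s)}$ and $Q^{(s)}$ by $P^{(s)}_{a,a'} = \alpha_{am + a'}$ and $Q^{(s)}_{a',b} = \beta_{bm + (s - a')}$ whenever $s - a' \in \{0, \ldots, m-1\}$, and $+\infty$ otherwise. By construction, $(P^{(s)} \odot Q^{(s)})_{a,b}$ equals the minimum of $\alpha_i + \beta_j$ over pairs $(i,j)$ with $i$ in block $a$, $j$ in block $b$, and local-offset sum equal to $s$; the $+\infty$ padding automatically excludes spurious pairs with $s - a' < 0$ or $s - a' \geq m$. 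To extract the actual convolution values, for each $s$ and each $c$ one takes the tropical sum along the antidiagonal $a + b = c - \kappa$ of the resulting matrix, a post-processing step costing $O(m^2)$ per shift. Finally, the convolution output at index $k = cm + r$ is recovered as the minimum of the two contributions at shifts $s = r$ (with $\kappa = 0$) and $s = r + m$ (with $\kappa = 1$), handling all edge cases for small or large $k$ by padding.

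For the complexity, there are $O(m)$ choices of $s$, each requiring one tropical matrix multiplication on $m \times m$ matrices plus $O(m^2)$ antidiagonal processing. By Theorem \ref{Williams_th}, each matrix product costs $O(m^3 / 2^{\Omega(\sqrt{\log m})})$, yielding a total of $O(m \cdot m^3 / 2^{\Omega(\sqrt{\log m})}) = O(n^2 / 2^{\Omega(\sqrt{\log n})})$ operations on $O(w)$-bit integers, as claimed. The only nontrivial conceptual step is the design of $P^{(s)}, Q^{(s)}$ so that a \emph{single} tropical product captures exactly the convolution contributions at shift $s$; once this is in place, combining the two carry cases and invoking Theorem \ref{Williams_th} is routine. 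The main obstacle, then, is not the index bookkeeping but the reliance on Williams' subcubic tropical matrix multiplication, which is the source of the $1/2^{\Omega(\sqrt{\log n})}$ saving and cannot be replaced by any simpler ingredient.
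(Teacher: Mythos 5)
Your reduction is correct and is essentially the same route the paper takes: the paper simply cites Bremner et al.\ for a reduction of length-$n$ tropical convolution to $O(\sqrt{n})$ multiplications of $\sqrt{n}\times\sqrt{n}$ matrices (complexity $O\bigl(T(\sqrt{n})\cdot\sqrt{n}\bigr)$) and then plugs in Theorem \ref{Williams_th}, which is exactly the blocking-by-$\sqrt{n}$, shift-and-carry construction you spell out. Your explicit matrices $P^{(s)},Q^{(s)}$, the antidiagonal extraction, and the complexity accounting all check out, so the only difference is that you make the cited reduction explicit.
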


The following simple Lemma, due to Gribanov et al. \cite{OnCanonicalProblems_Grib}, gives a fast convolution algorithm in $\RC_{(\min,+)}[\GC]$ with respect to an arbitrary cyclic group $\GC$. For completeness, we present a short proof.
\begin{lemma}\label{cyclic_minconv_lm}
    Let $\GC = \langle g \rangle$ and $n := \abs{\GC} < \infty$. Given $\alpha, \beta \in \RC_{(\min,+)}[\GC]$, the convolution $\gamma = \alpha \star \beta$ can be computed with $O(n^2/2^{\Omega(\sqrt{\log n})})$ operations with $O(w)$-bit integers and group-operations in $\GC$. 
\end{lemma}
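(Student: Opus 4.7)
The plan is to reduce the convolution in $\RC[\GC]$ to a standard $(\min,+)$-convolution of two finite sequences of length $n$, and then invoke Corollary~\ref{standard_minconv_th}. The key observation is that a cyclic group $\GC = \langle g \rangle$ of order $n$ is isomorphic to $\ZZ/n\ZZ$ via the map $i \cdot g \mapsto i$ for $i \in \intint[0]{n-1}$, so the elements of $\RC[\GC]$ are naturally identified with length-$n$ sequences.

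First I would set $a_i := \alpha(i \cdot g)$ and $b_i := \beta(i \cdot g)$ for $i \in \intint[0]{n-1}$ (with the convention that values outside the support are $+\infty$). By the definition of convolution in $\RC[\GC]$, for every $k \in \intint[0]{n-1}$ one has
\begin{equation*}
\gamma(k \cdot g) \;=\; \min_{\substack{i,j \in \intint[0]{n-1}\\ i + j \equiv k \pmod n}} \bigl( a_i + b_j \bigr),
\end{equation*}
which is precisely a cyclic $(\min,+)$-convolution of the sequences $a$ and $b$. To reduce this to the non-cyclic case, I would run the standard $(\min,+)$-convolution of $a$ and $b$ (viewed as length-$n$ sequences of $O(w)$-bit integers with $+\infty$ symbols) to obtain the sequence
\begin{equation*}
d_k \;=\; \min_{\substack{0 \leq i, j \leq n-1\\ i + j = k}} \bigl( a_i + b_j \bigr), \qquad k \in \intint[0]{2n-2}.
\end{equation*}
The cyclic version is then recovered by the $O(n)$-time post-processing $\gamma(k \cdot g) = \min(d_k, d_{k+n})$ for $k \in \intint[0]{n-1}$ (using $d_{2n-1} := +\infty$).

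The dominant cost is the standard $(\min,+)$-convolution of two length-$n$ sequences, which by Corollary~\ref{standard_minconv_th} runs in $O\bigl(n^2 / 2^{\Omega(\sqrt{\log n})}\bigr)$ operations on $O(w)$-bit integers. The group operations that are used amount to identifying indices with elements $i \cdot g \in \GC$ and are bounded by $O(n)$, which is absorbed into the dominant term. No step here is particularly delicate: the only thing to verify is that feeding $+\infty$ symbols into the algorithm of Theorem~\ref{Williams_th}~/~Corollary~\ref{standard_minconv_th} is legitimate, which is standard (one can cap values by an explicit large sentinel such as the sum of all finite magnitudes $+1$, preserving the word size up to a constant).
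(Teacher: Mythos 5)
Your proposal is correct and follows essentially the same route as the paper: both reduce the cyclic case to a standard $(\min,+)$-convolution handled by Corollary~\ref{standard_minconv_th}, differing only in how the wrap-around is treated (the paper periodically extends $\alpha$ to length $2n$, pads $\beta$ with $+\infty$, and reads off entry $n+j$, whereas you convolve the two length-$n$ sequences directly and take $\min(d_k, d_{k+n})$ in an $O(n)$ post-processing step). Both variants are valid and yield the stated bound.
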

\begin{myproof}
    The problem can be easily reduced to the standard convolution of two sequences, indexed by numbers in $\intint[0]{n-1}$. Let $\{\hat \alpha_i\}_{i =0}^{2n-1}$ and $\{\hat \beta_i\}_{i = 0}^{2n-1}$ be the sequences, defined by
    \begin{gather*}
        \hat \alpha_i = \alpha(i \cdot g)\\
        \hat \beta_j = \begin{cases}
            \beta(j \cdot g),\text{ for } j \in \intint[0]{n-1}\\
            +\infty,\text{ for other values of $j$}.
        \end{cases}
    \end{gather*}
    Let $\{\hat \gamma_i\}_{i=0}^{2n - 1}$ be the convolution of $\hat \alpha$ and $\hat \beta$. Then we can easily see that
    \begin{multline}\label{cyclic_reduction}
        \gamma(j \cdot g) = \min\limits_{i \in\intint[0]{n-1}}\bigl\{\alpha\bigl((j-i) \cdot g\bigr) + \beta(i\cdot g)\bigr\} = \\
        = \min\limits_{i \in\intint[0]{n-1}}\bigl\{\alpha\bigl((n+j-i) \cdot g\bigr) + \beta(i\cdot g)\bigr\} = \min\limits_{i \in\intint[0]{n+j}}\bigl\{\hat \alpha_{n+j-i} + \hat \beta_{i}\bigr\} = \\
        = \hat \gamma_{n+j}, \quad \text{for } j \in \intint[0]{n-1}.
    \end{multline} 
    Due to Corollary \ref{standard_minconv_th}, $\hat \gamma$ can be computed with $O\bigl(n^2 / 2^{\Omega(\sqrt{\log n})}\bigr)$ operations. Finally, due to the formula \eqref{cyclic_reduction}, we can compute $\gamma$, using $O(n)$ operations.
\end{myproof}

The following Lemma can be used to develop fast convolution algorithm with respect to direct sums of Abelian groups under assumption that there exists a fast convolution algorithm for one of the summands.
\begin{lemma}\label{conv_decomp_lm}
     Let $\GC = \QC \oplus \HC$, where the Abelian groups $\QC$ and $\HC$ are represented by their bases. Denote $n = \abs{\QC}$, $m = \abs{\HC}$, and assume that the convolution in $\RC_{(\min,+)}[\QC]$ can be computed in $T(n)$ operations with $T(n) = \Omega(n)$.
     
     Then, for given $\alpha,\beta \in \RC_{(\min,+)}[\GC]$, the convolution $\gamma = \alpha \star \beta$ can be computed in $O\bigl(T(n) \cdot m^2\bigr)$ operations with $O(w)$-bit integers and group-operations in $\GC$. 
\end{lemma}
\begin{myproof}
    Let us fix an element $h^* \in \HC$ and show how to compute $\gamma(q + h^*)$ through all $q \in \QC$ with only $O\bigl(T(n) \cdot m\bigr)$ operations. For any $h \in \HC$, we define the functions $\hat \alpha_h \in \RC_{(\min,+)}[\QC]$ and $\hat \beta_h \in \RC_{(\min,+)}[\QC]$ by the formulae $\hat \alpha_h(q) = \alpha(q + h^*-h)$ and $\hat \beta_h(q) = \beta(q + h)$, and let $\hat \gamma_h = \hat \alpha_h \star \hat \beta_h$. We have
    \begin{multline}\label{decomp_conv}
        \gamma(q+h^*) = \min\limits_{h' \in \HC} \min\limits_{q'\in \QC} \bigl\{ \alpha(q-q' + h^*-h') + \beta(q' + h') \bigr\} = \\
        = \min\limits_{h' \in \HC} \min\limits_{q'\in \QC} \bigl\{ \hat \alpha_{h'}(q-q') + \hat \beta_{h'}(q')\bigr\} = \\= \min\limits_{h' \in \SC_{\HC}} \hat \gamma_{h'}(q).
    \end{multline}
    The algorithm is as follows. We compute $\hat \gamma_h$, for each $h \in \HC$, which takes $O\bigl(T(n) \cdot m\bigr)$ operations. After that, we compute $\gamma(q + h^*)$, for each $q \in \QC$, using the formula \eqref{decomp_conv}, which takes $O(n \cdot m)$ operations. Assuming that $T(n) = \Omega(n)$, we have the desired computational complexity bound for any fixed $h^*$. Enumerating $h^* \in \HC$, the total computational complexity becomes $O\bigl(T(n) \cdot m^2\bigr)$.
\end{myproof}

The following technical Lemma and its corollary reduce the multiplication of rectangular matrices to the multiplication of square matrices. Such reductions are simple and well known.
\begin{lemma}\label{mm_lm}
    Let $A \in \RC_{(\min,+)}^{m \times n}$, $B \in \RC_{(\min,+)}^{n \times t}$ and $t \leq m \leq n$. Then the multiplication $A \cdot B$ can be done with $O\bigl(\frac{n}{t} \cdot T(t,m) + n \cdot m\bigr)$ operations in with $O(w)$-bit integers, where $T(m,t)$ denotes the multiplication complexity of two matrices of the orders $t \times t$ and $t \times m$ respectively. 
\end{lemma}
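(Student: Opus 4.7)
The plan is to reduce the rectangular multiplication $A \cdot B$ to a sum of $s := \lceil n/m \rceil$ products of the shape handled by the given subroutine, each costing $T(m,t)$.

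First I would split the $n$ columns of $A$ (and, accordingly, the $n$ rows of $B$) into $s$ consecutive blocks of width (resp.\ height) $m$, padding the trailing block with the additive neutral element of $\RC$ — in the $(\min,+)$-semiring this means padding with $+\infty$ — so that the padding does not affect any entry of $AB$. This yields a decomposition $A = [A_1 \mid A_2 \mid \cdots \mid A_s]$ with $A_i \in \RC^{m \times m}$ and a corresponding row-block decomposition of $B$ with $B_i \in \RC^{m \times t}$. The hypothesis $t \leq m \leq n$ is used here to ensure that $s \geq 1$ and that the subroutine $T(m,t)$ is directly applicable to each block product $A_i B_i$.

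Next I would exploit the standard block identity $AB = \sum_{i=1}^{s} A_i B_i$ in two stages. In the first stage I compute each of the $s$ individual products $A_i B_i$ using the subroutine, at a cumulative cost of $s \cdot T(m,t) = O\bigl( (n/m) \cdot T(m,t) \bigr)$ operations. In the second stage I accumulate these intermediate $m \times t$ matrices into the final result via $s - 1$ componentwise additions, which takes $O(s \cdot m \cdot t) = O(n \cdot t)$ additional operations. Summing the two contributions yields the advertised $O\bigl( (n/m) \cdot T(m,t) + n \cdot t \bigr)$ bound on the total number of $O(w)$-bit operations.

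Because the reduction is entirely elementary, there is no substantive obstacle; the only point that warrants attention is that the padding must use the correct additive identity of the underlying (semi)ring so that $AB = \sum_i A_i B_i$ is preserved, and that all arithmetic performed by the subroutine remains on $O(w)$-bit operands, which is automatic since each semiring operation combines inputs of length $O(w)$ and outputs another such element.
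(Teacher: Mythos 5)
Your proof is correct and follows essentially the same route as the paper: block-decompose $A$ column-wise and $B$ row-wise into roughly $n/m$ pieces, compute each $A_i B_i$ with the subroutine, and combine the partial results entrywise (via $\min$) for an extra $O(n\cdot t)$ cost. In fact your remark that the padding must use the additive identity of the semiring ($+\infty$ for $(\min,+)$) is slightly more careful than the paper's footnote, which loosely says to augment the trailing blocks "by zeroes."
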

\begin{myproof}
    We decompose $A$ on approximately $n/t$ blocks of the size $m \times t$. Here, we can assume that $t \mid m$, since we can augment the last blocks by $+\infty$. Similarly, we decompose $B$ on $n/t$ blocks of the size $t \times t$. It is illustrated by the formula
    $$
    A \cdot B = \begin{pmatrix} A_1 & A_2 & \dots & A_{\lceil n/t \rceil}\end{pmatrix} \cdot \begin{pmatrix}
        B_1 \\
        B_2 \\
        \dots \\
        B_{\lceil n/t \rceil}
    \end{pmatrix}.
    $$

    The products $C_1 := A_1 \cdot B_1, C_2 = A_2 \cdot B_2, \dots, C_{\lceil n/t \rceil} := A_{\lceil n/t \rceil} \cdot B_{\lceil n/t \rceil}$, which are $(m \times t)$-order matrices, can be constructed with $O(n/t) \cdot T(t,m)$ operations. Finally, it can be directly checked that 
    $$
    (A \cdot B)_{i j} = \min\limits_{k \in \intint{\lceil n/t \rceil}} (C_k)_{i j}.
    $$ 
    Therefore, the answer matrix $A \cdot B$ can be constructed with additional $O\bigl(m \cdot t \cdot \frac{n}{t}\bigr) = O(n \cdot m)$ operations. 
\end{myproof}

\begin{corollary}\label{mm_cor}
    Let $A \in \RC_{(\min,+)}^{m \times n}$, $B \in \RC_{(\min,+)}^{n \times t}$, and $t \leq m \leq n$. Then the multiplication $A \cdot B$ can be done with $O\bigl(\frac{n \cdot m}{t^2} \cdot T(t) + n \cdot m\bigr)$ operations with $O(w)$-integers, where $T(t)$ denotes the multiplication complexity of two matrices of the order $t \times t$.
\end{corollary}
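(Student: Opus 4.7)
The plan is to mirror the block decomposition from the proof of Lemma \ref{mm_lm} but now choose the block size equal to $t$ rather than $m$, so that every intermediate product becomes a square $t \times t$ tropical matrix multiplication handled by Theorem \ref{Williams_th} in $T(t)$ operations. Concretely, I would partition $A$ column-wise into $\lceil n/t \rceil$ slabs $A_1, \ldots, A_{\lceil n/t \rceil}$ of size $m \times t$ and partition $B$ conformably row-wise into $\lceil n/t \rceil$ square blocks $B_1, \ldots, B_{\lceil n/t \rceil} \in \RC^{t \times t}$, padding with $+\infty$ when $t \nmid n$. The tropical block identity
\[
A \cdot B \;=\; \min_{1 \le i \le \lceil n/t \rceil} A_i \cdot B_i
\]
then reduces the task to $\lceil n/t \rceil$ block products plus a single entry-wise tropical minimum.

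Each $A_i \cdot B_i$ is then computed within the $T(t)$ budget: square up $A_i$ to a $t \times t$ matrix by inserting $+\infty$ rows (harmless because the corresponding output rows are discarded), invoke Theorem \ref{Williams_th}, and read off the first $m$ rows of the output. Summed over the $\lceil n/t \rceil$ slabs, this contributes $O\bigl(\tfrac{n}{t} \cdot T(t)\bigr)$ operations, matching the first term of the bound. The subsequent entry-wise tropical minimum over $\lceil n/t \rceil$ matrices of size $m \times t$ costs $O\bigl(\tfrac{n}{t} \cdot m t\bigr) = O(n t)$ tropical additions on $O(w)$-bit integers, supplying the second term and reusing the same $nt$ accounting that already appears in the $n \cdot t$ summand of Lemma \ref{mm_lm}.

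The main obstacle is the padding bookkeeping: one must check that the $+\infty$ symbols inserted to square up each $A_i$, or to complete the last slab when $t \nmid n$, never displace a genuine value from any outer $\min$. This is immediate because $+\infty$ annihilates $\min$ in the tropical semiring $\RC$, so padded positions can only contribute $+\infty$ values that are either absorbed by the outer minimum or discarded when the padded rows of the block output are dropped. Everything else merely transcribes the proof of Lemma \ref{mm_lm} with the parameter substitution $m \leftarrow t$ and the identity $T(t,t) = T(t)$, so no further ideas beyond a careful indexing pass should be needed.
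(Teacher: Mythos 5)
Your reduction stalls at the key step. Each slab product $A_i \cdot B_i$ has $A_i \in \RC^{m \times t}$ with $m \geq t$ (the hypothesis is $t \leq m$), so ``squaring up $A_i$ to a $t \times t$ matrix by inserting $+\infty$ rows'' is impossible: padding only enlarges a matrix, and $A_i$ already has at least $t$ rows. Since the output $A_i \cdot B_i$ is $m \times t$ and all $m\cdot t$ entries of $A_i$ can influence it, one invocation of a $t \times t$ multiplication cannot produce it; the honest accounting is to split the rows of $A_i$ into $\lceil m/t\rceil$ groups of $t$ and spend $\lceil m/t\rceil \cdot T(t)$ per slab. Your scheme therefore costs $O\bigl(\tfrac{n}{t}\cdot\tfrac{m}{t}\cdot T(t) + n\cdot t\bigr) = O\bigl(\tfrac{nm}{t^2}\, T(t) + n t\bigr)$, which matches the claimed $O\bigl(\tfrac{n}{t} T(t) + n t\bigr)$ only when $m = O(t)$. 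As written, the proposal proves a weaker bound than the statement.

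You should also know that this is not a bookkeeping issue you can index away: for $m \gg t^2$ the stated bound is unattainable by any algorithm, because $\tfrac{n}{t} T(t) + nt = O(n t^2)$ falls below the $m\cdot n$ entries of $A$ that must be read (take $m = n$, $t = 1$: the claim gives $O(n)$ for a min-plus matrix--vector product that needs $\Theta(n^2)$ operations). The paper's own proof of Corollary \ref{mm_cor} contains the same undercount --- it asserts that an $m\times m$ by $m\times t$ product splits into $m/t$ rather than $(m/t)^2$ products of $t\times t$ matrices --- so your bound $O\bigl(\tfrac{nm}{t^2} T(t) + nt\bigr)$ is in fact the defensible statement. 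It is also all that is needed downstream: in Theorem \ref{group_min_conv_th} one has $t\cdot m = n$ and $t \geq n^{1/4}$, so $\tfrac{nm}{t^2} T(t) = \tfrac{n^2}{t^3}\cdot t^3/2^{\Omega(\sqrt{\log t})} = n^2/2^{\Omega(\sqrt{\log n})}$, and the final convolution bound survives with your corrected constant.
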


\begin{myproof}
    Due to the previous Lemma, the original problem can be reduced to multiplication of matrices of the orders $t \times t$ and $t \times m$, respectively. Such multiplications, in turn, are equivalent to $m/t$ multiplications of $(t \times t)$-order matrices. The total computational complexity can be expressed by the following formula, which gives the desired computational complexity bound: 
    \begin{multline*}
    O\left( \frac{n}{t} \cdot T(t,m) + n \cdot m \right) = O\left( \frac{n}{t} \cdot\bigl( \frac{m}{t} \cdot T(t) \bigr) + n \cdot m \right) = \\
    = O\left( \frac{n \cdot m}{t^2} \cdot T(t) + n \cdot m\right).
    \end{multline*}
\end{myproof}
    
\begin{theorem}\label{group_min_conv_th}
Assume that the group $\GC$ is finite and the basis $b_1, b_2, \dots, b_s$ of $\GC$ is given. Then, for given $\alpha, \beta \in \RC_{(\min,+)}[\GC]$, the convolution $\gamma = \alpha \star \beta$ can be computed with $O(n^2/2^{\Omega(\sqrt{\log n})})$ operations with $O(w)$-bit integers and group-operations in $\GC$, where $n = \abs{\GC}$. 
\end{theorem}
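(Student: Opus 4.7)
The plan is to split $\GC$ as a direct sum $\GC = \QC \oplus \HC$ along the given basis and then either reduce $\gamma = \alpha \star \beta$ to a cyclic convolution (handled by Lemma~\ref{cyclic_minconv_lm} through Lemma~\ref{conv_decomp_lm}), or to a family of rectangular $(\min,+)$-matrix multiplications (handled by Williams' algorithm via Corollary~\ref{mm_cor}). Writing the basis decomposition as $\GC \cong \ZZ_{n_1} \oplus \cdots \oplus \ZZ_{n_s}$ with $n_1 \geq \cdots \geq n_s$, I split the argument at the threshold $\sqrt{n}$. The case $s=1$ (i.e.\ $\GC$ cyclic) is immediate from Lemma~\ref{cyclic_minconv_lm}.

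\emph{Case $n_1 \geq \sqrt{n}$.} Take $\QC := \ZZ_{n_1}$ (cyclic) and $\HC$ as the direct sum of the remaining factors, and invoke Lemma~\ref{conv_decomp_lm} with $T_{\QC}(n_1) = O\bigl(n_1^2 / 2^{\Omega(\sqrt{\log n_1})}\bigr)$ coming from Lemma~\ref{cyclic_minconv_lm}. The overall bound becomes
\[
|\HC|^2 \cdot T_\QC(n_1) \;=\; (n/n_1)^2 \cdot n_1^2 / 2^{\Omega(\sqrt{\log n_1})} \;=\; n^2 / 2^{\Omega(\sqrt{\log n})},
\]
since $n_1 \geq \sqrt{n}$ yields $\sqrt{\log n_1} = \Omega(\sqrt{\log n})$.

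\emph{Case $n_1 < \sqrt{n}$.} I use a greedy balanced partition of the factors: process them in decreasing order and assign each to the side with smaller current product. A standard invariant — the ratio of side-products is always bounded by the largest factor, hence by $n_1$ — yields $\GC = \QC \oplus \HC$ with both $|\QC|, |\HC| \in \bigl[\sqrt{n/n_1},\, \sqrt{n\,n_1}\bigr] \subseteq [n^{1/4}, n^{3/4}]$. Set $a := |\QC| \leq c := |\HC|$ (swap if needed), and define $A[q,h] := \alpha(q+h)$ of shape $a \times c$ and, for each $h^* \in \HC$, $B^{(h^*)}[h,q'] := \beta(q' + h^* - h)$ of shape $c \times a$. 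Then
\[
\gamma(q^* + h^*) \;=\; \min_{q \in \QC}\,\bigl(A \odot B^{(h^*)}\bigr)\bigl[q,\, q^* - q\bigr],
\]
where $\odot$ denotes the $(\min,+)$-matrix product. Each such product (an $a \times c$ matrix times a $c \times a$ matrix, with $a \leq c$) satisfies the hypothesis of Corollary~\ref{mm_cor} and is computed in $O\bigl((c/a) \cdot a^3 / 2^{\Omega(\sqrt{\log a})} + n\bigr) = O\bigl(a^2 c / 2^{\Omega(\sqrt{\log a})} + n\bigr)$ operations. Summing over the $c$ values of $h^*$ and adding the $O(na)$ combining cost gives
\[
\frac{a^2 c^2}{2^{\Omega(\sqrt{\log a})}} + O\bigl(n \cdot \max(a,c)\bigr) \;=\; \frac{n^2}{2^{\Omega(\sqrt{\log a})}} + O(n^{7/4}) \;=\; O\!\bigl(n^2 / 2^{\Omega(\sqrt{\log n})}\bigr),
\]
since $a \geq n^{1/4}$ forces $\sqrt{\log a} = \Omega(\sqrt{\log n})$, and $n^{7/4}$ is absorbed.

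The main obstacle is the second case: without a cyclic direct summand of polynomial size, a naive recursion through Lemma~\ref{conv_decomp_lm} would lose the $\sqrt{\log}$-savings at every level and collapse to an $O(n^2)$ bound. The balanced-partition argument circumvents this by producing, in a \emph{single} split, two direct summands both of size $n^{\Omega(1)}$, which is precisely the regime where one application of Williams' $(\min,+)$-matrix multiplication (Theorem~\ref{Williams_th}) through Corollary~\ref{mm_cor} yields the desired $2^{\Omega(\sqrt{\log n})}$ speedup. Verifying the ratio invariant of the greedy partition under the assumption $n_1 < \sqrt n$, and checking that the $\HC$-shift reindexing of $B$ can be implemented on-the-fly, are the key technical ingredients.
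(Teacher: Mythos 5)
Your proposal is correct and follows essentially the same route as the paper: the large-cyclic-factor case is handled identically via Lemma~\ref{cyclic_minconv_lm} plus Lemma~\ref{conv_decomp_lm}, and the remaining case is reduced to tropical matrix multiplication through Corollary~\ref{mm_cor} and Theorem~\ref{Williams_th}. The only (harmless) deviations are organizational: the paper splits the sorted ranks at a prefix threshold to get summands of sizes in $[n^{1/4},n^{1/2}]$ and $[n^{1/2},n^{3/4}]$ and performs a single rectangular product whose $(q,h)$ entry is $\gamma(q+h)$, whereas you use a greedy balanced partition and a family of per-$h^*$ products of shape $a\times c$ times $c\times a$ followed by a final minimum over $q$ — both yield the same $n^2/2^{\Omega(\sqrt{\log n})}$ bound since the smaller summand has size $n^{\Omega(1)}$.
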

\begin{myproof}
    Denoting $r_i = \rank(b_i)$, for $i \in \intint s$, assume that $r_1 \geq r_2 \geq \dots \geq r_s$. Let the value $k \geq 1$ be chosen such that 
    \begin{equation*}
        \begin{cases}
            r_1 \cdot r_2 \cdot \ldots \cdot r_k \geq \sqrt{n},\\
            r_1 \cdot r_2 \cdot \ldots \cdot r_{k-1} < \sqrt{n}.
        \end{cases}
    \end{equation*}
    Here, we assume that the empty product is equal to $0$. Let us consider the cases: $k = 1$ and $k \geq 2$.
    
    {\bf Case 1: $k=1$ and $r_1 \geq \sqrt{n}$.} In this case, we look on the group $\GC$ as $\GC = \langle b_1 \rangle \oplus \HC$, where $\HC = \langle b_2 \rangle \oplus \dots \oplus \langle b_s \rangle$. Due to Lemma \ref{cyclic_minconv_lm}, the convolution in $\RC_{(\min,+)}[\langle b_1 \rangle]$ can be done with $O(r_1^2 / 2^{\Omega(\sqrt{\log r_1})}) = O(r_1^2 / 2^{\Omega(\sqrt{\log n})})$ operations. Consequently, Lemma \ref{conv_decomp_lm} gives the desired computational complexity bound.
    
    {\bf Case 2: $k \geq 2$.} Since $r_k \leq r_i$, for each $i \in \intint k$, and, due to the construction of $k$, it follows that 
    \begin{gather*}
    n^{1/4} \leq t := r_1 \cdot r_2 \cdot \ldots \cdot r_{k-1} \leq n^{1/2} \quad\text{and}\\
    n^{1/2} \leq m := r_k \cdot r_{k+1} \cdot \ldots \cdot r_s \leq n^{3/4}.    
    \end{gather*}
    Now, we look on our group as $\GC = \QC \oplus \HC$, where $\QC = \langle b_1 \rangle \oplus \dots \oplus \langle b_{k-1} \rangle$ and $\HC = \langle b_k \rangle \oplus \dots \oplus \langle b_s \rangle$. Note that $\abs{\QC} = t$ and $\abs{\HC} = m$. Construct the matrices $A \in \RC_{(\min,+)}^{m \times n}$ and $B \in \RC_{(\min,+)}^{n \times k}$ in the following way: the rows of $A$ are indexed by elements of $\QC$ and the columns of $B$ are indexed by elements of $\HC$. The elements of $A$ and $B$ are given by the following formulae:
    \begin{gather*}
        A_{q *} = \Bigl(\alpha(q - g)\Bigr)_{g \in \GC},\\
        B_{* h} = \Bigl(\beta(h + g)\Bigr)_{g \in \GC}.
    \end{gather*}
    Construction of $A$ and $B$ costs $O\bigl(n \cdot (m + t)\bigr) = O(n^{7/4})$ operations. In turn, the elements of $A \cdot B$ are indexed by pairs $(q,h) \in \QC \times \HC$ and it directly follows that
    \begin{equation*}
        (A \cdot B)_{q, h} = \min\limits_{g \in \GC} \bigl\{ \alpha(q - g) + \beta(h + g) \bigr\} = \gamma(q + h).
    \end{equation*}
    Therefore, since any element $g \in \GC$ can be uniquely represented as $g = q + h$, the computation of $\gamma$ has been reduced to the computation of $A \cdot B$.
    
    Note that $t \leq m \leq n$. Consequently, due to Corollary \ref{mm_cor}, the computational complexity to compute $A \cdot B$ is $O\bigl(\frac{n\cdot m}{t^2} \cdot T(t) + n \cdot m\bigr)$. Due to Theorem \ref{Williams_th}, $T(t) = O\bigl(t^3/2^{\Omega(\sqrt{\log t})}\bigr)$. Since $t \geq n^{1/4}$, $m \leq n^{3/4}$ and $m \cdot t = n$, we have
    $$
    \frac{n \cdot m}{t^2} \cdot T(t) + O(n \cdot m) = \frac{n \cdot m}{t^2} \cdot \frac{t^3}{2^{\Omega(\sqrt{\log t})}} + O(n \cdot m) = \frac{n^2}{2^{\Omega(\sqrt{\log n})}}.
    $$
    Remembering that the construction of $A$ and $B$ costs $O(n^{7/4})$ operations, it  finishes the proof.
\end{myproof}

\subsection{The Boolean Semiring
% Convolution in \texorpdfstring{\(\BB[\GC]\)}{Z2[G]}
}\label{generalized_DFT_subs}
 In this Subsection, we consider a group semiring $\BB[\GC]$, where $\BB = \bigl(\{0,1\},\vee,\wedge\bigr)$ is the Boolean semiring, and $\GC$ is an arbitrary finite Abelian group. The standard way of making convolution in $\BB[\GC]$ is embedding into $\ZZ[\GC]$, which consequently can be embedded into $\CC[\GC]$. Note that $\CC[\GC]$ becomes an algebra on a $\CC$-vector space. A natural basis of $\CC[\GC]$ is given by the indicator functions of the group elements. Identifying each group element with its indicator function, $\CC[\GC]$ can be viewed as the space of all formal sums $\sum_{g \in \GC} c_g \cdot g$ with coefficients in $\CC$. The multiplication (convolution) in $\CC[\GC]$ can be effectively reduced to the \emph{(generalized) Discrete Fourier Transform}, which maps elements of $\CC[\GC]$ into $\CC[\widehat{\GC}]$, where $\widehat{\GC}$ is defined as follows. 
\begin{definition}
  The set 
  $$    
        \widehat\GC = \Hom\bigl(\GC, \SC^1\bigr),\quad \text{for } \SC^1 = \bigl\{ x \in \CC \colon \abs{z} = 1 \bigr\}
  $$
  of group homomorhisms of $\GC$ into $\SC^1$, considered with the group operation
  $$
        (\chi_1 \cdot \chi_2)(g) = \chi_1(g) \cdot \chi_2(g),\quad\text{for }\chi_1,\chi_2 \in \widehat\GC \text{ and } g \in \GC,
  $$ is called the \emph{group of characters of $\GC$}, which is also known as the \emph{Pontryagin dual of $\GC$}.
\end{definition}
It is easy to see that $\chi^{-1} = \bar \chi$. It is known that, for an arbitrary Abelian group $\GC$, the group $\widehat{\GC}$ is isomorphic to $\GC$.

\begin{definition}
    For $\alpha \in \CC[\GC]$, the \emph{discrete Fourier transform (the DFT, for short)} of $\alpha$ is the function $\hat \alpha \in \CC[\widehat \GC]$, given by
    $$
        \hat \alpha(\chi) = \sum\limits_{g \in \GC} \alpha(g) \cdot \bar \chi(g). 
    $$
    The \emph{inverse DFT} is given by 
    \begin{equation}\label{inverse_DFT_def}
      \alpha(g) = \frac{1}{n} \cdot  \sum\limits_{\chi \in \widehat \GC} \hat \alpha(\chi) \cdot  \chi(g).
    \end{equation}
\end{definition}
Since $\widehat \GC$ is isomorphic to $\GC$, it consists of exactly $n$ elements, the so-called \emph{characters of $\GC$}, which can be viewed as the basis of $\CC[\widehat \GC]$ in the same way that we did it with $\CC[\GC]$. Consequently, if some order of elements in $\GC$ and $\widehat{\GC}$ is chosen, the DFT can be viewed as the matrix-vector multiplication on an $n \times n$ complex matrix $W$, whose columns and rows are indexed by elements of $\GC$ and $\widehat \GC$, respectively. 
 
Similarly, due to \eqref{inverse_DFT_def}, the inverse DFT can be handled as the DFT over $\CC[\widehat \GC]$ and reduced to a matrix multiplication in the same way. The following Theorem, due to U.~Baum, M.~Clausen \& B.~Tietz \cite{AbelianFFTImproved_inC}, gives an efficient way to compute such a matrix-vector multiplication.
 \begin{theorem}[U.~Baum, M.~Clausen \& B.~Tietz \cite{AbelianFFTImproved_inC}]\label{AbelianDFT_th} Assume that a basis of $\GC$ is given. Let $\alpha \in \CC[\GC]$, then $\hat \alpha \in \CC[\widehat \GC]$ can be computed with $8 n \cdot \log_2 n$ group operations and operations in $\CC$.
 \end{theorem}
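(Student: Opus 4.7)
The plan is to realize the DFT as an iterated one-dimensional DFT along each cyclic factor of the basis decomposition, using a mixed-radix Cooley--Tukey style recursion. Given the basis $b_1, \dots, b_s$ with $r_i = \rank(b_i)$, write $\GC \cong \bigoplus_{i=1}^s \ZZ/r_i\ZZ$; by Pontryagin duality, $\widehat\GC$ has the same structure, and a character corresponding to a tuple $(j_1,\dots,j_s)$ acts by $\chi_j\bigl(\sum_i \iota_i b_i\bigr) = \prod_i \omega_{r_i}^{\iota_i j_i}$, where $\omega_r = e^{2\pi \mathrm{i}/r}$. The defining sum for $\hat\alpha$ then separates into nested one-dimensional sums
\begin{equation*}
\hat\alpha(j_1,\dots,j_s) = \sum_{\iota_1=0}^{r_1-1}\!\!\omega_{r_1}^{-\iota_1 j_1}\cdots \sum_{\iota_s=0}^{r_s-1}\!\!\omega_{r_s}^{-\iota_s j_s}\,\alpha(\iota_1,\dots,\iota_s),
\end{equation*}
so computing $\hat\alpha$ reduces to performing, for each coordinate $i$, a batch of one-dimensional DFTs of length $r_i$ while keeping the other coordinates fixed.

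First, I would process the coordinates one at a time from $i=s$ down to $i=1$. At stage $i$, the current array is indexed by $(j_{i+1},\dots,j_s,\iota_1,\dots,\iota_i)$, and I perform $n/r_i$ one-dimensional DFTs of length $r_i$ to turn the $\iota_i$ coordinate into a $j_i$ coordinate. Indexing arithmetic is handled through the given basis, so each lookup and update uses $O(1)$ group operations in $\GC$ (plus character value lookups, which are precomputed $\omega_{r_i}$-powers).

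Second, for each one-dimensional cyclic DFT of length $r_i$, I invoke a standard $O(r_i \log r_i)$ algorithm: mixed-radix Cooley--Tukey when $r_i$ is composite, and Rader's reduction to a convolution of length $r_i-1$ (further handled by Cooley--Tukey after Bluestein padding) when $r_i$ is prime. With a careful implementation, each such cyclic DFT uses at most $c\,r_i\log_2 r_i$ arithmetic operations in $\CC$ for an explicit constant $c$; the $n/r_i$ batched instances therefore cost $c\,n\log_2 r_i$ per stage. Summing over all $s$ stages gives a total of $c\,n\sum_{i=1}^s \log_2 r_i = c\,n\log_2 n$ operations in $\CC$, plus $O(n\log_2 n)$ group operations for addressing.

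The main obstacle is pushing the implicit constant down to exactly $8$. The leading-order analysis above is standard, but obtaining $8n\log_2 n$ requires the sharpened cyclic-DFT bound from the Baum--Clausen--Tietz construction, which amortises the twiddle-factor multiplications across consecutive stages and exploits a Chinese Remainder Theorem decomposition of each $\ZZ/r_i\ZZ$ into prime-power components. Once that subroutine is bounded by $\leq 8\,r_i\log_2 r_i$ operations per one-dimensional DFT, the overall count $8n\log_2 n$ follows immediately from the stage-by-stage accounting.
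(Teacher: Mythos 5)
This statement is an external result that the paper simply cites (Baum--Clausen--Tietz); the paper contains no proof of it, so the only question is whether your sketch would stand on its own. Structurally it does reproduce the standard route: decompose $\GC$ into cyclic factors via the given basis, identify $\widehat\GC$ with the same product, separate the defining sum into nested one-dimensional sums, and sweep coordinate by coordinate performing $n/r_i$ batched cyclic DFTs of length $r_i$, so that the leading-order count $c\,n\sum_i\log_2 r_i = c\,n\log_2 n$ follows. That part is fine.

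The genuine gap is the explicit constant $8$, which is the entire content of the theorem beyond the folklore $O(n\log n)$ bound --- and your argument does not establish it. You concede this yourself in the last paragraph: you invoke ``the sharpened cyclic-DFT bound from the Baum--Clausen--Tietz construction'' to get each length-$r_i$ DFT down to $8\,r_i\log_2 r_i$ operations, but that is precisely the statement being proved, so as written the argument is circular. The hard case is a prime (or prime-power) factor $r_i$: Rader's reduction produces a cyclic convolution of length $r_i-1$, which must itself be evaluated by FFTs of awkward length or by Bluestein padding, and the naive accounting there (three transforms of length roughly $2r_i$ plus pointwise products, plus the twiddle multiplications between stages) yields a constant well above $8$. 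A self-contained proof would have to carry out the careful operation count --- amortizing twiddle factors across stages, choosing the subgroup chain, and bounding the prime-length transforms with explicit constants --- rather than asserting that ``a careful implementation'' achieves $c\,r_i\log_2 r_i$ and then setting $c=8$ by appeal to the cited paper. Without that accounting, your proposal proves $O(n\log n)$ with an unspecified constant, not the stated bound.
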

 It can be directly checked that 
 $$
 \widehat{\alpha \star \beta}(\chi) = \hat \alpha(\chi) \cdot \hat \beta(\chi). 
 $$ Consequently, due to Theorem \ref{AbelianDFT_th} and our discussion, for an input $\alpha, \beta \in \CC[\GC]$, the convolution $\alpha \star \beta \in \CC[\GC]$ can be computed with $O(n \cdot \log n )$ operations in $\CC$.

 Recall that we were originally interested to make the convolution in $\BB[\GC]$, which can be easily reduced to the convolution in $\ZZ[\GC]$ and, consequently, in $\CC[\GC]$. Since it is impossible to implement exact complex arithmetic in the Word-RAM model, it is natural to represent complex numbers approximately as pairs of rational numbers. It can be shown that, if the input values are given with a sufficiently good additive accuracy $\varepsilon$, then the output values of the DFT algorithm of Theorem \ref{AbelianDFT_th} and, consequently, of the convolution calculation will be returned with an additive accuracy $\varepsilon \cdot n^{O(1)}$. Therefore, taking the size of rational representation of complex numbers proportional to $O(\log n)$, we can compute the convolution with an additive accuracy strictly less than $1/2$, which will give the correct integer answer after rounding to the nearest integer. Therefore, the convolution in $\BB[\GC]$ can be computed using $O(n \log n)$ operations with rational numbers of size $O(\log n)$. 
 Unfortunately, we did not able to find a complete error analysis for the generalized DFT in the literature with respect to the Word-RAM model. By this reason, we give a complete analysis of the Abelian case in our work.
 Before we formulate a main theorem, let us make a technical definition:
 \begin{definition}
  We call a complex number in the form $a + b \cdot i$ with $a,b \in \QQ$ as a \emph{rational complex number}. The \emph{set of rational complex numbers} is denoted by $\CCQ$. The \emph{encoding size of a rational complex number $z = a + b \cdot i$} is defined as $\size(z) = \size(a) + \size(b)$.
\end{definition}
 
 \begin{theorem}[The DFT with respect to the Word-RAM model]\label{group_DFT_th}
 Let us assume that a finite Abelian group $\GC$ is represented by its basis. Let $\alpha' \in \CCQ[\GC]$ be the input function and $\varepsilon \in \QQ \cap (0,1)$ be the input accuracy with $\varepsilon < 1/n^{C}$, for a sufficiently big absolute constant $C$. Assume additionally that there exists $\alpha \in \CC[\GC]$, such that $\norm{\alpha-\alpha'}_\infty \leq \varepsilon$. Then, there exists a polynomial-time algorithm that returns $\hat \alpha' \in \CCQ[\widehat{\GC}]$ such that
 \begin{enumerate}
   \item $\norm{\hat \alpha' - \hat \alpha}_\infty = n^{O(1)} \cdot \varepsilon$;
   \item The algorithm needs $O\bigl(n \cdot (\log n + \gamma + \log 1/\varepsilon )\bigr)$ operations with rational numbers of the size $O\bigl(\log n + \gamma + \size(\varepsilon \bigr))$, where $\gamma = \max_{g \in \GC} \size(\alpha'(g))$ is the maximum of component sizes of $\alpha'$.
 \end{enumerate} 
 \end{theorem}
 The proof could be found in Appendix Subsection \ref{group_DFT_proof}.
 
 \begin{corollary}[Convolution with respect to the Word-RAM model]\label{Z2Conv_cor}
     Let us assume that a finite Abelian group $\GC$ is represented by its basis. Given $\alpha,\beta \in \BB[\GC]$, the convolution $\alpha \star \beta \in \BB[\GC]$ can be computed with $O(n \log n)$ operations with rational numbers of the size $O(\log n)$.
 \end{corollary}
 \begin{myproof}
   Consider $\alpha$ and $\beta$ as the members of $\CC[\GC]$ with components in $\{0,1\}$. Choose a rational value $\varepsilon \leq 1/n^{C}$, for a sufficiently large absolute constant $C$. Using Theorem \ref{group_DFT_th}, we compute approximate versions $\hat \alpha'$ and $\hat \beta'$ of $\hat \alpha$ and $\hat \beta$ with the input accuracy $\varepsilon$. Due to Theorem \ref{group_DFT_th}, it can be done, using $O(n \log n)$ operations with rational numbers of the size $O(\log n)$. Additionally, we have $\delta_{\hat \alpha} := \norm{\hat\alpha-\hat\alpha'}_\infty = n^{O(1)} \cdot \varepsilon$ and $\delta_{\hat\beta} := \norm{\hat\beta-\hat\beta'}_\infty = n^{O(1)} \cdot \varepsilon$. Next, we compute $\hat \psi' := \hat \alpha' \cdot \hat \beta'$, which can be done with $O(n)$ operations with rational numbers of the size $O(\log n)$. Denoting $\hat \psi = \hat \alpha \cdot \hat \beta$, let us estimate the error $\delta_{\hat\psi} := \norm{\hat \psi - \hat \psi'}_\infty$:
   \begin{equation*}
     \delta_{\hat\psi} \leq \delta_{\hat\alpha} \cdot \norm{\hat \beta}_\infty + \delta_{\hat\beta} \cdot \norm{\hat \alpha}_\infty + \delta_{\hat\alpha} \cdot \delta_{\hat\beta} = n^{O(1)} \cdot \varepsilon,
   \end{equation*}
   where the last equality holds, because $\norm{\hat \beta}_\infty \leq n$ and $\norm{\hat \alpha}_\infty \leq n$.
   
   Recall that $\alpha \star \beta$ can be calculated, using the inverse DFT to $\hat \alpha \cdot \hat \beta$, which, due to \eqref{inverse_DFT_def}, is equivalent to the DFT in $\CC[\widehat\GC]$. Consequently, we compute the approximate version $\eta' \in \CCQ[\GC]$ of $\alpha \star \beta$, using Theorem \ref{group_DFT_th} to $\hat \psi'$ with an accuracy, corresponding to the error bound for $\delta_{\hat\psi}$, and dividing components of the resulting vector by $n$. Note that, due to Theorem \ref{group_DFT_th}, the component sizes of $\hat \psi'$ are bounded by $O(\log n)$. Hence, the last step costs of $O(n \log n )$ operations with rational numbers of the size $O(\log n)$. Additionally, 
   $$
        \norm{\alpha \star \beta - \eta'} \leq \delta_{\hat \psi} \cdot n^{O(1)} \cdot \varepsilon = n^{C_0} \cdot \varepsilon, \quad \text{for some constant $C_0$}.
   $$ 
   
   Choose $C$, such that $n^{C_0} \cdot \varepsilon \leq 1/3$ (for $n \geq 3$, we can put $C := C_0 + 1$). Since $\alpha \star \beta$ has integer elements, rounding to the nearest integer in components of $\eta'$ will give the correct answer, which finishes the proof.
 \end{myproof}

\section{Tiling Group}\label{tiling_group_sec}

In the current section we describe an algorithmic Abelian group construction, which will be used later in the proof of \Cref{genILP_main_th} as a basement for generalized convolution subproblems. The group is induced by a factorization of $\ZZ^n$ using tiles with integer sides.

Let $v \in \QQ^n$, $A \in \ZZ^{n \times n}$ with $\Delta = \abs{\det(A)} > 0$. Consider a set $\GC = v + A \cdot [-1,1)^n$. In other words, $\GC$ is an affine image of $[-1,1)^n$. Since $\RR^n$ can be tiled by $[-1,1)^n$ and its parallel copies, it follows that any point $y \in \RR^n$ has the following unique representations:
\begin{gather}
    \exists! z \in (2 \cdot \ZZ)^n,\; \exists! x \in [-1,1)^n \colon y = z + x,\label{unique_repr_BC_plus}\\ 
    \exists! z \in (2\cdot\ZZ)^n,\; \exists! x \in [-1,1)^n \colon y = z - x.\label{unique_repr_BC_minus}
\end{gather}
Using these representations, we define two functions $\lfloor \cdot \rceil$ and $\lceil \cdot \rfloor$ that map $\RR^n$ to $[-1,1)^n$. More precisely, for $y \in \RR^n$, we put $\lfloor y \rceil = x$, according to the representation \eqref{unique_repr_BC_plus}, and $\lceil y \rfloor = x$, according to the representation \eqref{unique_repr_BC_minus}, respectively. 
Note that, for any $z \in (2\cdot\ZZ)^n$ and $x \in \RR^n$, the following properties hold
\begin{gather}
    \lfloor z + x \rceil = \lfloor x \rceil,\label{round_fun_prop_1}\\
    \lceil z - x \rfloor = \lceil x \rfloor.\label{round_fun_prop_2}
\end{gather}
Additionally, it is useful to note that, for any $x,y \in \RR^n$: 
\begin{gather}
    \bigl\lfloor x + y \bigr\rceil = \bigl\lfloor \lfloor x \rceil + y \bigr\rceil = \bigl\lfloor x + \lfloor y \rceil \bigr\rceil. \label{round_fun_prop_3}
\end{gather}

Since the set $\GC$ with its parallel copies form a tiling of $\RR^n$, it is natural to consider an Abelian group that identifies points in parallel copies of $\GC$. Since $A$ is an integer matrix, the group is correctly defined and isomorphic to the group $\RR^n/(2A)\cdot \ZZ^n$. According to its background, we call this group as the \emph{tiling group}, induced by $\GC$. In the former text, we need an explicit construction of the tiling group, which can be given as follows.

Denote $t_v = A^{-1} \cdot v$, and note that any point $y \in \RR^n$ can be uniquely represented as $y = A\cdot(t_v  + t_y)$, for unique $t_y \in \RR^n$. Clearly, if $y \in \GC$, then $t_y \in [-1,1)^n$.
\begin{definition}
    For $y,z \in \GC$ with representations $y = A\cdot(t_v + t_y)$ and $z = A\cdot(t_v + t_z)$, consider a binary operation $\oplus$, defined by the formula 
    \begin{equation}\label{tiling_group_op}
    y \oplus z = A \cdot \bigl( t_v + \lfloor t_v + t_y + t_z \rceil \bigr).
    \end{equation}
    The pair $\bigl(\GC,\oplus\bigr)$ forms an Abelian group, called the \emph{tiling group, induced by $\GC$}.
\end{definition}
Let us check that $(\GC,\oplus)$ is indeed an Abelian group. Since the commutativity and associativity are straightforward, we need only to show the existence of the neutral element $0_{\GC}$ and an inverse element $\ominus z \in \GC$, for an arbitrary $z \in \GC$. 
% Note that, due to the commutativity of $\oplus$ and group axioms, we not need to check the uniqueness of the neutral and inverse elements. So, our notations are correctly defined. 
The neutral element $0_{\GC}$ is given by the formula:
\begin{equation}\label{tiling_group_neutral}
    0_{\GC} = A \cdot \bigl(t_v + \lceil t_v \rfloor\bigr).
\end{equation}
    Definitely, let $z \in \GC$ be represented as $z = A\cdot(t_v + t_z)$, then 
    \begin{multline*}
        z \oplus 0_{\GC} = A\bigl(t_v + \bigl\lfloor t_v + t_z + \lceil t_v \rfloor \bigr\rceil\bigr) = \\
        = A\bigl(t_v + \lfloor t_z \rceil\bigr) = A(t_v + t_z) = z,
    \end{multline*}
    where the second equality holds, because $t_v + \lceil t_v \rfloor \in 2\cdot\ZZ$ and, due to the property \eqref{round_fun_prop_1}. The third equality follows, because $t_z \in [-1,1)^n$.
    
    Now, let us give an explicit formula for the inversion. For an element $z \in \GC$, represented as $z = A(t_v + t_z)$, the inverse element $\ominus z$ is given by the formula:
    \begin{equation}\label{tiling_group_inv}
        \ominus z = A\bigl(t_v + \lceil t_v + t_z - t_0 \rfloor\bigr),\quad\text{where $t_0 = \lceil t_v \rfloor$.}
    \end{equation}
    Let us check the correctness:
    \begin{multline*}
        z \oplus (\ominus z) = A\bigl(t_v + \bigl\lfloor t_v + t_z + \lceil t_v + t_z - t_0 \rfloor \bigr\rceil\bigl) = \\
        = A\bigl(t_v + \bigl\lfloor t_0 + (t_v + t_z - t_0) + \lceil t_v + t_z - t_0 \rfloor \bigr\rceil\bigl) =\\
        = A\bigl(t_v + \lfloor t_0 \rceil \bigr) = A(t_v + t_0) = 0_{\GC},
    \end{multline*}
    where the third equality holds, because $t_v + t_z - t_0 + \lceil t_v + t_z - t_0 \rfloor \in 2\cdot\ZZ$ and, due to the property \eqref{round_fun_prop_1}. The forth equality follows, because $t_0 \in [-1,1)^n$.

    It is natural to consider the canonical homomorphism $\phi_{\GC}$ that maps $\RR^n$ into $\GC$. For $y \in \RR^n$, represented as $y = A \cdot (t_v + t_y)$, it is given by the formula:
    \begin{equation}\label{tiling_group_cgomo}
        \phi_{\GC}(y) = A \cdot \bigl( t_v + \lfloor t_y \rceil \bigr).
    \end{equation}
    Let us check that $\phi_{\GC}(y)$ is indeed a homomorphism. Definitely, for $y,z \in \RR^n$, represented as $y = A \cdot (t_v + t_y)$ and $z = A \cdot (t_v + t_z)$, we have
\begin{multline*}
    \phi_{\GC}(y+z) = A \cdot \bigl(t_v + \lfloor t_v + t_y + t_z \rceil \bigr) = \\
    = A \cdot \Bigl(t_v + \bigl\lfloor t_v + \lfloor t_y \rceil + \lfloor t_z \rceil \bigr\rceil \Bigr) = \phi_{\GC}(y) \oplus \phi_{\GC}(z),
\end{multline*}
where the second equality holds, due to the property \eqref{round_fun_prop_3}.
The group operation can be rewritten, using $\phi_{\GC}$, in the following way:
\begin{equation*}
    y \oplus z = \phi_{\GC}(y + z).
\end{equation*}

The following simple Lemma is important for algorithmic implications of our work.
\begin{lemma}\label{tiling_injection_lm}
Let $\WC \subseteq \RR^n$ be an arbitrary set. If there exists a translation vector $t \in \RR^n$, such that $t + \WC \subseteq \GC$, then the restriction of $\phi_{\GC}$ on $\WC$ is injective. 
\end{lemma}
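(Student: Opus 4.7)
My plan is to exploit two simple facts about $\phi_{\GC}$: it is a group homomorphism from $(\RR^n,+)$ to $(\GC,\oplus)$ (already verified in the excerpt), and it acts as the identity on $\GC$ itself. The injectivity on $t + \WC$, and hence on $\WC$, will then follow by a one-line argument.

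First I would observe that, for any $y \in \GC$, we have $\phi_{\GC}(y) = y$. Indeed, writing $y = v + A x$ with $x \in [-1,1)^n$, the representation $y = A(t_v + t_y)$ forces $t_y = x \in [-1,1)^n$. The decomposition in \eqref{unique_repr_BC_plus} with $z = \BZero \in (2\ZZ)^n$ and $x' = t_y$ then gives $\lfloor t_y \rceil = t_y$, so
\begin{equation*}
    \phi_{\GC}(y) = A\bigl(t_v + \lfloor t_y \rceil \bigr) = A(t_v + t_y) = y.
\end{equation*}

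Next, suppose $w_1, w_2 \in \WC$ satisfy $\phi_{\GC}(w_1) = \phi_{\GC}(w_2)$. Using the homomorphism property of $\phi_{\GC}$ twice,
\begin{equation*}
    \phi_{\GC}(t + w_1) = \phi_{\GC}(t) \oplus \phi_{\GC}(w_1) = \phi_{\GC}(t) \oplus \phi_{\GC}(w_2) = \phi_{\GC}(t + w_2).
\end{equation*}
Since $t + w_1, t + w_2 \in \GC$ by hypothesis, the first observation yields $t + w_1 = \phi_{\GC}(t + w_1) = \phi_{\GC}(t + w_2) = t + w_2$, whence $w_1 = w_2$.

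There is no serious obstacle: both ingredients have already been assembled in the excerpt. The only care needed is in the first step, namely recognising that the uniqueness clause in \eqref{unique_repr_BC_plus} pins down $\lfloor t_y \rceil = t_y$ when $t_y$ already lies in $[-1,1)^n$; everything else is just the homomorphism identity applied to the translation by $t$.
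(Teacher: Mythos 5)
Your proposal is correct and follows essentially the same route as the paper: translate by $t$ via the homomorphism property of $\phi_{\GC}$ and use that $\phi_{\GC}$ restricts to the identity on $\GC$ (the paper phrases this with the difference $\phi_{\GC}(y)\ominus\phi_{\GC}(z)$, you phrase it directly with equal images, which is the same argument). Your explicit verification that $\phi_{\GC}(y)=y$ for $y\in\GC$ is a nice touch that the paper leaves implicit.
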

\begin{myproof}
    Choose arbitrary and different $y,z \in \WC$. Denote $y' = y + t$ and $z' = z + t$. Note that $\phi_{\GC}(y') \ominus \phi_{\GC}(z') = \phi_{\GC}(y) \ominus \phi_{\GC}(z)$. Since $y',z' \in t + \WC \subseteq \GC$, we have $\phi_{\GC}(y') = y'$ and $\phi_{\GC}(z') = z'$. Finally, 
    \begin{equation*}
        \phi_{\GC}(y)\ominus\phi_{\GC}(z) = \phi_{\GC}(y') \ominus \phi_{\GC}(z') = y' \ominus z' \not= 0,
    \end{equation*}
    which proves the injectivity.
\end{myproof}

Since $A$ is an integer matrix, the operation $\oplus$ and the homomorphism $\phi_{\GC}$ map integers to integers. Additionally, it can be directly checked that $0_{\GC} \in \ZZ^n$ and $\ominus z \in \ZZ^n$, for any $z \in \ZZ^n \cap \GC$. In other words, the set $\GC_{I} = \ZZ^n \cap \GC$, equipped by the operation $\oplus$, also admits a group structure, which is isomorphic to  $\ZZ^n / (2A) \cdot \ZZ^n$. 
\begin{definition}
    The pair $\bigl( \GC_{I}, \oplus \bigr)$ forms a group, called the \emph{integer tiling group, induced by $\GC$.} Note that $0_{\GC_{I}} = 0_{\GC}$.
\end{definition} 

This group is especially interesting to us in terms of algorithmic implications, since it is finite and, therefore, admits a finite basis.
\begin{theorem}\label{tailing_group_basis_th}
    Consider an integer tiling group $\GC_{I}$ induced by $\GC$, and let $S = P \cdot A \cdot Q$ be the SNF of $A$, where $P,Q \in \ZZ^{n \times n}$ are unimodular. 
    
    Then the vectors $b_1, b_2, \dots, b_n \in \GC_I$, given by the formula
    $$
    b_k = A \cdot \Bigl( t_v + \bigl\lceil t_v - \lfloor Q_{* k} / S_{k k} \rceil \bigr\rfloor \Bigr), \quad \text{for $k \in \intint n$,}
    $$
    form a basis of $\GC_I$.
\end{theorem}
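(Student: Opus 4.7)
The plan is to exhibit $\GC_I$ as a quotient of $\ZZ^n$, transport the canonical basis arising from the Smith Normal Form of $A$ back to $\GC_I$ through the canonical homomorphism $\phi_{\GC}$, and finally rewrite the resulting basis elements in the form claimed in the statement.

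First, I would establish that the restriction of $\phi_{\GC}$ to $\ZZ^n$ is a surjective homomorphism onto $\GC_I$ with kernel $(2A) \cdot \ZZ^n$. Surjectivity follows from the identity $\phi_{\GC}(w) = w$ for all $w \in \GC$ (since then $t_w \in [-1,1)^n$, so $\lfloor t_w \rceil = t_w$). For the kernel, writing $y = A(t_v + t_y)$, the equation $\phi_{\GC}(y) = 0_{\GC}$ reduces, using the one-line identity $\lceil t_v \rfloor = \lfloor -t_v \rceil$ (coming from the uniqueness of representations \eqref{unique_repr_BC_plus}--\eqref{unique_repr_BC_minus}), to $\lfloor t_y \rceil = \lfloor -t_v \rceil$, equivalently $t_y + t_v \in (2\ZZ)^n$, i.e.\ $A^{-1} y \in (2\ZZ)^n$, i.e.\ $y \in (2A) \cdot \ZZ^n$. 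Hence $\phi_{\GC}$ descends to an isomorphism $\ZZ^n / (2A)\ZZ^n \cong \GC_I$.

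Second, from $A = P^{-1} S Q^{-1}$ with $P, Q$ unimodular we obtain $(2A)\ZZ^n = P^{-1}(2S)\ZZ^n$, so left-multiplication by $P$ induces an isomorphism $\ZZ^n / (2A)\ZZ^n \to \ZZ^n / (2S)\ZZ^n$. Since $S$ is diagonal, $\ZZ^n / (2S)\ZZ^n \cong \bigoplus_{k=1}^n \ZZ / (2 S_{kk})\ZZ$ with the standard basis $\bar e_1, \ldots, \bar e_n$ of orders $2 S_{11}, \ldots, 2 S_{nn}$. Pulling these back through the chain of isomorphisms, the elements $\phi_{\GC}(P^{-1} e_k)$ for $k \in \intint n$ form a basis of $\GC_I$ with $\rank\bigl(\phi_{\GC}(P^{-1} e_k)\bigr) = 2 S_{kk}$.

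Third, I would identify $\phi_{\GC}(P^{-1} e_k)$ with $b_k$. The identity $AQ = P^{-1} S$ gives $P^{-1} e_k = A \cdot (Q_{*k} / S_{kk})$, whence $t_{P^{-1} e_k} = Q_{*k}/S_{kk} - t_v$ and
\begin{equation*}
    \phi_{\GC}(P^{-1} e_k) = A\bigl(t_v + \lfloor Q_{*k}/S_{kk} - t_v \rceil\bigr).
\end{equation*}
To bring this into the stated form, it remains to check the pointwise identity $\lfloor u - t_v \rceil = \lceil t_v - \lfloor u \rceil \rfloor$ for arbitrary $u, t_v \in \RR^n$. Writing $u = z_u + \lfloor u \rceil$ and $t_v - \lfloor u \rceil = z' - x$ with $z_u, z' \in (2\ZZ)^n$ and $x \in [-1,1)^n$, we obtain $u - t_v = (z_u - z') + x$, and the uniqueness in \eqref{unique_repr_BC_plus} yields $\lfloor u - t_v \rceil = x = \lceil t_v - \lfloor u \rceil \rfloor$, as required. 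The structural part of the argument (the SNF-based direct-sum decomposition) is standard; the main delicate point is the bookkeeping with the two rounding operators, which essentially reduces to repeated use of the uniqueness of representations \eqref{unique_repr_BC_plus} and \eqref{unique_repr_BC_minus}.
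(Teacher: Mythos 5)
Your proof is correct, and it takes a genuinely different route from the paper's. The paper introduces the auxiliary integer tiling group $\GC'$ induced by $A \cdot [-1,1)^n$, proves the explicit isomorphism $\phi \colon \GC_I \to \GC'$ separately (Lemma \ref{group_isomorphism_lm}, in the appendix), exhibits the candidate basis $b_k' = A \cdot \lfloor Q_{* k}/S_{k k} \rceil$ of $\GC'$, verifies by direct computation that each combination $i_1 \cdot b_1' \oplus \dots \oplus i_n \cdot b_n'$ with $i_j \in \intint[-S_{j j}]{S_{j j}-1}$ lies in $\GC'$ and that a vanishing combination forces $2 S_{j j} \mid i_j$, and only then transports the basis through $\phi^{-1}$. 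You instead prove that $\phi_{\GC}$ restricted to $\ZZ^n$ is a surjective homomorphism onto $\GC_I$ with kernel $(2A) \cdot \ZZ^n$ --- a fact the paper asserts without proof and never actually uses in its own argument --- and then let the SNF do the work: multiplication by $P$ identifies $\ZZ^n/(2A)\ZZ^n$ with $\ZZ^n/(2S)\ZZ^n \cong \bigoplus_k \ZZ/(2S_{k k})\ZZ$, the standard basis pulls back to $\phi_{\GC}(P^{-1} e_k)$, and the rounding identity $\lfloor u - t_v \rceil = \lceil t_v - \lfloor u \rceil \rfloor$ converts this into the stated formula for $b_k$. Both your kernel computation (via $\lceil t_v \rfloor = \lfloor -t_v \rceil$) and that rounding identity check out against the uniqueness in \eqref{unique_repr_BC_plus}--\eqref{unique_repr_BC_minus}. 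What your approach buys: the basis property and the orders $2 S_{k k}$ come for free from the first isomorphism theorem and the diagonal shape of $2S$, so you bypass the appendix lemma and the explicit distinctness verification, and as a bonus you supply a proof of the quotient description $\GC_I \cong \ZZ^n/(2A)\ZZ^n$ stated in Section \ref{tiling_group_sec}. What the paper's route buys is that it stays entirely inside the explicit $\oplus$-arithmetic of the tiling groups, producing along the way the intermediate basis of the translation-free group $\GC'$. One small point you should state explicitly (the paper records it just before defining $\GC_I$): $\phi_{\GC}$ maps $\ZZ^n$ into $\GC_I$, which is immediate since $\phi_{\GC}(y) = y - A z$ for some $z \in (2\cdot\ZZ)^n$.
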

\begin{myproof}
    Consider an integer tiling group $\GC'$, induced by the set $A \cdot [-1,1)^n$. The following Lemma states that $\GC'$ is isomorphic to $\GC_I$, the proof could be found in Appendix, Section \ref{group_isomorphism_sec}.
    \begin{lemma}\label{group_isomorphism_lm}
        The group $\GC_I$ is isomorphic $\GC'$, the corresponding bijective map $\phi \colon \GC_I \to \GC'$ and its inverse are given by the formulae:
        \begin{gather}
            z = A \cdot (t_v + t_z) \quad\to\quad A \cdot \lfloor t_v + t_z \rceil  =: \phi(z), \label{phi_homomorph_eq}\\
            z' = A \cdot t_{z'} \quad\to\quad A \cdot \bigl( t_v + \lceil t_v - t_{z'} \rfloor \bigr) =: \phi^{-1}(z').\label{phi_inv_homomorph_eq}
        \end{gather}
    \end{lemma}
    
    As the first step, we will construct a basis for $\GC'$. Then the basis of $\GC_I$ can be obtained, using the map $\phi^{-1}$ on the basis of $\GC'$. We claim that the basis $b_1', b_2', \dots, b'_n$ of $\GC'$ is given by the formula
    $$
    b_k' = A\cdot \bigl\lfloor Q_{* j}/S_{j j} \bigr\rceil, \quad \text{for $k \in \intint n$.}
    $$
    Let as show that the combinations 
    $$
    g' = i_1 \cdot b_1' \oplus \dots \oplus i_n \cdot b_n', \quad \text{where $i_j \in \intint[-S_{j j}]{S_{j j}-1}$, for $j \in \intint n$},
    $$ uniquely represent all the elements of $\GC'$. The inclusion $g' \in \GC'$ can be checked, using the following formulae:
    \begin{multline*}
        g' = A \cdot \Bigl\lfloor i_1 \cdot \bigl\lfloor Q_{* 1}/S_{1 1} \bigr\rceil + \dots + i_n \cdot \bigl\lfloor Q_{* n} / S_{n n} \bigr\rceil \Bigr\rceil = \\
        = A \cdot \Bigl\lfloor \frac{i_1}{S_{1 1}} \cdot Q_{* 1} + \dots + \frac{i_n}{S_{n n}} \cdot Q_{* n} \Bigr\rceil =\\
        = A \cdot \Bigl( \frac{i_1}{S_{1 1}} \cdot Q_{* 1} + \dots + \frac{i_n}{S_{n n}} \cdot Q_{* n} \Bigr) + A \cdot z = \qquad \text{`for some $z \in (2\cdot\ZZ)^n$'}\\
        = P^{-1} \cdot (i_1, \dots, i_n)^{\top} + A \cdot z,
    \end{multline*}
    where the second equality uses the property \eqref{round_fun_prop_3}. Therefore, $g'$ is integer and $g' \in \GC'$. Next, let us show that the different combinations give different elements of $\GC'$. For the sake of contradiction, assume that there exists a non-trivial combination, representing $0_{\GC'}$, that is
    $$
    i_1 \cdot b_1' \oplus \dots \oplus i_n \cdot b_n' = 0_{\GC'},
    $$ which can be rewritten as
    \begin{equation*}
        \BZero = A \cdot \Bigl\lfloor i_1 \cdot \bigl\lfloor Q_{* 1}/S_{1 1} \bigr\rceil + \dots + i_n \cdot \bigl\lfloor Q_{* n} / S_{n n} \bigr\rceil \Bigr\rceil.
    \end{equation*}
    Since $A$ is invertible and, due to the property \eqref{round_fun_prop_3}, it is equivalent to
    \begin{equation}\label{zero_i_comb_eq1}
    \BZero = \Bigl\lfloor \frac{i_1}{S_{1 1}} \cdot Q_{* 1} + \dots + \frac{i_n}{S_{n n}} \cdot Q_{* n} \Bigr\rceil.    
    \end{equation}
    The following sequence of equivalences holds:
    \begin{multline*}
        \eqref{zero_i_comb_eq1} \quad\Leftrightarrow\quad \frac{i_1}{S_{1 1}} \cdot Q_{* 1} + \dots + \frac{i_n}{S_{n n}} \cdot Q_{* n} \in (2\cdot\ZZ)^n \quad\Leftrightarrow \\
        \Leftrightarrow\quad Q \cdot S^{-1} \cdot (i_1, i_2, \dots, i_n)^\top \in (2\cdot\ZZ)^n \quad\Leftrightarrow\quad 2 S_{j j} \mid i_j, \quad \text{for j $\in \intint n$.}
    \end{multline*}
    Since $i_j \in \intint[-S_{j j}]{S_{j j} - 1}$, it is only possible if $i_j = 0$, for all $j \in \intint n$, which contradicts to our assumption.
    
    Finally, applying the isomorphism $\phi^{-1}$ to the basis of $\GC'$, we construct a basis $b_1, b_2, \dots, b_n$ of $\GC_I$:
    \begin{equation*}
        b_k = \phi^{-1}(b_k') = A \cdot \Bigl( t_v + \bigl\lceil t_v - \lfloor Q_{* k} / S_{k k} \rceil \bigr\rfloor \Bigr), \quad \text{for $k \in \intint n$.}
    \end{equation*}
\end{myproof}

As a simple corollary, we get
\begin{corollary}\label{tailing_group_card_cor}
    In terms of Theorem \ref{tailing_group_basis_th}, $\abs{\GC_I} = 2^n \cdot \Delta$. Taking the matrix $A \in \ZZ^{n \times n}$ and vector $v \in \QQ^n$ as an input, all the elements of $\GC_I$ can be enumerated with $O\bigl(\abs{\GC_I} \cdot n \bigr)$ operations with rational numbers, whose size is bounded by a polynomial on the input size. The number of group operations is bounded by $O\bigl(\abs{\GC_I}\bigr)$.
\end{corollary}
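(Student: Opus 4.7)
The plan is to combine the basis decomposition of Theorem \ref{tailing_group_basis_th} with an incremental Gray-code enumeration of the index tuples. First, the cardinality claim follows directly from the basis theorem and its proof. There it is established that every element of the auxiliary group $\GC'$ admits a unique representation as $\bigoplus_{k=1}^n i_k \cdot b'_k$ with $i_k \in \intint[-S_{kk}]{S_{kk}-1}$, an integer range of size $2 S_{kk}$. Hence $\abs{\GC'} = \prod_{k=1}^n 2 S_{kk} = 2^n \cdot \abs{\det(S)} = 2^n \cdot \Delta$. Since $\GC_I \cong \GC'$ by Lemma \ref{group_isomorphism_lm}, we conclude $\abs{\GC_I} = 2^n \cdot \Delta$, and the analogous unique-representation property transports to $\GC_I$ with basis $b_1, \dots, b_n$.

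For the enumeration, I would precompute the SNF $S = P A Q$ (using Storjohann's algorithm \cite{SNFOptAlg}), all basis vectors $b_1, \dots, b_n$ by the explicit formula of Theorem \ref{tailing_group_basis_th}, their additive inverses $\ominus b_k$, and the associated $t$-coordinates $t_{b_k}$. This is a one-time $\poly(\phi)$ preprocessing producing rationals of polynomial bit-size, with denominators dividing $\Delta$. Then I would enumerate the mixed-radix tuples $(i_1, \dots, i_n)$ with $i_k \in \intint[0]{2 S_{kk}-1}$ in a reflected Gray-code order, so that consecutive tuples differ in exactly one coordinate by $\pm 1$. Each transition corresponds to applying a single group operation, $\oplus b_k$ or $\ominus b_k$, to the current element, yielding $\abs{\GC_I} - 1 = O(\abs{\GC_I})$ group operations in total, as required.

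To obtain the bound $O(\abs{\GC_I} \cdot n)$ on rational operations, I would represent each element not by its ambient $\ZZ^n$-vector but by the pair $(t_y, (i_1,\dots,i_n))$, where $t_y \in [-1,1)^n$ is the $t$-coordinate; this pair still uniquely identifies the element. In the $t$-coordinate representation, the update corresponding to $\oplus b_k$ reduces to
$$
t_y \;\longleftarrow\; \bigl\lfloor t_v + t_y + t_{b_k} \bigr\rceil,
$$
which is a componentwise addition followed by a componentwise reduction modulo $(2\ZZ)^n$, for a total of $O(n)$ rational operations per step. All intermediate numerators and denominators remain bounded in terms of $\Delta$ and the bit-size of $v$, so they fit in polynomial size throughout the enumeration.

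The main obstacle is precisely this choice of representation: the naive implementation that keeps the ambient $\ZZ^n$-vector and applies the defining formula $y \oplus z = A \cdot (t_v + \lfloor t_v + t_y + t_z \rceil)$ incurs a matrix-vector product with $A$ at every step, giving $\Theta(n^2)$ per group operation and a total of $O(\abs{\GC_I} \cdot n^2)$, which would miss the target by a factor of $n$. Switching to the $t$-coordinate avoids the matrix product entirely, while the conversion back to $\ZZ^n$, if required by a downstream routine, is a one-shot $O(n^2)$ computation that is charged to that routine rather than to the enumeration itself.
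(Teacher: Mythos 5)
Your proof is correct and follows essentially the same route as the paper: compute the SNF and the basis from Theorem \ref{tailing_group_basis_th}, obtain $\abs{\GC_I}=\prod_k 2S_{kk}=2^n\cdot\Delta$ from the unique representation by coefficient tuples, and enumerate those tuples incrementally so that each new element costs a single group operation of linear cost, giving $O(\abs{\GC_I})$ group operations and $O(\abs{\GC_I}\cdot n)$ rational operations. Your Gray-code ordering and the explicit $t$-coordinate representation simply make precise the two facts the paper asserts without elaboration (that the combinations can be enumerated with $O(\abs{\GC_I})$ applications of $\oplus$, and that $\oplus$ has linear complexity in a suitable $\QQ^n$ representation), so this is a welcome clarification rather than a different argument.
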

\begin{myproof}
    Due to A.~Storjohann \cite{SNFOptAlg}, the SNF $S$ of $A$, together with unimodular matrices $P$ and $Q$, can be computed by a polynomial-time algorithm. Consequently, due to Theorem \ref{tailing_group_basis_th}, the basis of $\GC_I$ can be constructed by a polynomial-time algorithm. Therefore, we can enumerate all the elements of $\GC_I$ by enumerating the combinations 
    $$
    i_1 \cdot b_1 \oplus \dots \oplus i_n \cdot b_n,\quad\text{where $i_j \in \intint[-S_{j j}]{S_{j j}-1}$, for $j \in \intint n$.}
    $$  Since any group element can be represented by a vector in $\QQ^n$ of a polynomial size and since the group operation $\oplus$ has a linear computational complexity, the total number of operations is $O\bigl(\abs{\GC_I} \cdot n\bigr)$.
\end{myproof}

\section{Preliminaries from Discrepancy Theory}\label{prediscrep_sec}

As noted in \Cref{proof_outline_sec}, we employ tools from discrepancy theory to prove our main result. Below we provide a brief list of required results and definitions.

\begin{definition}
For a matrix $A \in \RR^{m \times n}$, we define \emph{its discrepancy and its hereditary discrepancy} by the formulas
\begin{gather*}
\disc(A) = \min_{z \in \{-1,\, 1\}^n} \left\| A z  \right\|_\infty,\\
\herdisc(A) = \max_{\IC \subset \intint n} \disc(A_{* \IC}).
\end{gather*}
\end{definition}

% For our analysis, we require the following fundamental bounds on the hereditary discrepancy $\herdisc(A)$. 
% The seminal result of Spencer \cite{SixDeviations_Spencer} establishes that for any matrix $A \in \RR^{m \times n}$,
% \begin{equation}\label{SixDeviations_eq}
%     \disc(A) \leq 6 \cdot \|A\|_{\max} \cdot \sqrt{n}.
% \end{equation}
% Due to the works \cite{HerDisc} and \cite{SixDeviations_Spencer} of Lov\'asz,  Spencer, \& Vesztergombi and Spencer, it is known that
% \begin{equation}\label{SixDeviations_eq}
% \herdisc(A) \leq 2 \disc(A) \leq 12 \sqrt{k} \cdot \|A\|_{\max}.
% \end{equation}
% Due to Beck and Fiala \cite{DiscBeckBound}, the value of $\herdisc(A)$ is bounded by the $l_1$-norm of columns. More precisely,
% \begin{equation}\label{Beck_eq}
%     \herdisc(A) < \|A\|_{\infty}.
% \end{equation}
% Additionally, Beck and Fiala conjectured that $\herdisc(A) = O\bigl(\sqrt{ \|A\|_{\infty} }\bigr)$ and settling this has been an elusive open problem. The best known result in this direction is due to Banaszczyk \cite{BanaszDiscBound}:
% \begin{equation}\label{Banasz_eq}
%     \herdisc(A) = O\Bigl( \sqrt{\|A\|_{\infty} \cdot \log(n)} \Bigr).
% \end{equation}
The important matrix characteristic that is closely related to $\herdisc(A)$ is $\detlb(A)$. Due to Lov\'asz, Spencer, \& Vesztergombi \cite{HerDisc}, it can be defined as follows:
$$
\detlb(A) = \max\limits_{t \in \intint k} \sqrt[t]{\Delta_t(A)},
$$ and it was shown in \cite{HerDisc} that 
$
\herdisc(A) \geq (1/2) \cdot \detlb(A)
$. Matou\v{s}ek in \cite{DiscDetBound} showed that $\detlb(A)$ can be used to produce tight upper bounds on $\herdisc(A)$. The result of Matou\v{s}ek was improved by Jiang \& Reis in \cite{TightDiscDetBound}:
\begin{equation}\label{DiscDetBound_eq}
\disc(A) = O\Bigl( \detlb(A) \cdot \sqrt{\log k \cdot \log n} \Bigr).    
\end{equation} 

We must also recall an important property concerning the discrepancy of matrices $A \in \RR^{k \times n}$ when $k \leq n$.
\begin{lemma}[{ Alon\&Spencer \cite[Corollary 13.3.3]{AlonSpencerBook} }]\label{DiscLowk_lm}
    Suppose that $\disc(A_{* \IC}) \leq H$ for every subset $\IC \in \intint n$ with $\abs{\IC} \leq k$. Then, $\disc(A) \leq 2 H$.
\end{lemma}
Originally, this statement was proved only for the discrepancy of hypergraphs. However, it is straightforward to see from the original proof that it extends to matrices as well.
Combining Lemma \ref{DiscLowk_lm} with
the upper bound \eqref{DiscDetBound_eq}, 
% the upper bounds \eqref{SixDeviations_eq} and \eqref{DiscDetBound_eq}, 
we get
\begin{equation}\label{DiscDetBoundReduced_eq}
    \herdisc(A) = O\left(\log k \cdot \detlb(A)\right).
\end{equation}
% \begin{gather}
%     \herdisc(A) \leq 12 \cdot \abs{A}_{\max} \cdot \sqrt{k},\label{SixDeviationsReduced_eq}\\
%     \herdisc(A) = O\left(\log k \cdot \detlb(A)\right).\label{DiscDetBoundReduced_eq}
% \end{gather}

The following key Lemma, due to K.~Jansen \& L.~Rohwedder\cite{DiscConvILP}, connects results of the discrepancy theory with the theory of integer linear programs with a bounded co-dimension.
\begin{lemma}[K.~Jansen \& L.~Rohwedder \cite{DiscConvILP}]\label{disc_lm} Let $A \in \RR^{k \times n}$ with $\rank(A) = k$ and $x \in \ZZ^n_{\geq 0}$. Then there exists a vector $z \in \ZZ^n_{\geq 0}$ with
\begin{enumerate}
    \item $z \leq x$;
    \item $\frac{1}{6} \cdot \norm{x}_1 \leq \norm{z}_1 \leq \frac{5}{6} \cdot \norm{x}_1$;
    \item $
\norm{A(z - x/2)}_\infty \leq 2 \cdot \herdisc(A)
$.
\end{enumerate}
\end{lemma}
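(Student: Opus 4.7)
The plan is to reduce the problem to a signed-coloring statement on a column-expanded matrix, and then translate the signing back to the integer vector $z$. First, I would explode $x$ into a multiset of $N := \|x\|_1$ tokens, where coordinate $i$ contributes $x_i$ tokens each carrying the column $A_{*i}$. Let $B \in \ZZ^{k \times N}$ denote the matrix whose columns are these tokens. For any signing $\sigma \in \{-1,+1\}^N$, set $z_i$ to be the number of $+1$ tokens of label $i$; by construction $z \in \ZZ_{\geq 0}^n$ and $z \leq x$, and a direct computation gives the key identity
$$
A(2z - x) \;=\; \sum_{i=1}^n (2z_i - x_i)\,A_{*i} \;=\; B\sigma,
$$
so a bound on $\|B\sigma\|_\infty$ translates immediately into a bound on $\|A(z - x/2)\|_\infty$. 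Moreover, $\|z\|_1 = |\{j : \sigma_j = +1\}| = \tfrac{1}{2}\bigl(N + \sum_j \sigma_j\bigr)$, so the balance of the signing controls $\|z\|_1$ linearly.

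The core task is therefore to exhibit a signing $\sigma$ that is simultaneously (i) low-discrepancy, $\|B\sigma\|_\infty \leq 4\,\herdisc(A)$, and (ii) balanced, $\bigl|\sum_{j} \sigma_j\bigr| \leq \tfrac{2}{3}N$, which is exactly what is needed to place $\|z\|_1$ in $[N/6,\,5N/6]$. For (i) alone, the definition of hereditary discrepancy applied to $B$ would suffice, once I observe that $\herdisc(B) \leq \herdisc(A)$: duplicate tokens of the same label can be paired and assigned opposite signs ``for free'', reducing any sub-multiset of columns of $B$ to a subset of columns of $A$. The coupling of (i) with the balance requirement (ii) is the subtle piece. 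I would handle it by applying the discrepancy argument to $B$ augmented with an extra all-ones row; an $\ell_\infty$ bound on the augmented matrix then simultaneously controls $B\sigma$ and $\sum_j \sigma_j$. The hereditary discrepancy of the augmented matrix exceeds $\herdisc(A)$ only by a Spencer-type constant, and this constant is precisely the slack responsible for the factor of $2$ appearing in item~(3).

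Given such a $\sigma$, the vector $z$ defined above satisfies~(1) and~(2) by construction, and~(3) from $\|A(z - x/2)\|_\infty = \tfrac{1}{2}\|B\sigma\|_\infty \leq 2\,\herdisc(A)$. The main obstacle I anticipate is the joint control of discrepancy and balance in step~(ii): the infinity-norm bound alone is straightforward, but the extra ones-row must be folded in without letting the discrepancy blow up beyond a constant multiple of $\herdisc(A)$. An alternative route, should the ones-row augmentation prove too lossy, would be an iterative partial-coloring scheme that halves the uncolored set of tokens at each round, pays $O(\herdisc(A))$ per round with geometrically decreasing contributions, and uses the leftover undecided tokens at the end as degrees of freedom to steer $\sum_j \sigma_j$ into the required window. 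Either route spends the discrepancy budget of $A$ only up to a constant factor, matching the shape of the claimed bound.
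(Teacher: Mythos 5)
You have the right scaffolding, but note first that the paper itself contains no proof of this lemma: it is imported verbatim from Jansen \& Rohwedder \cite{DiscConvILP}, so your proposal must stand on its own. The parts of your plan that do stand are the token expansion of $x$ into $N=\norm{x}_1$ columns, the observation that duplicated columns can be paired with opposite signs so that the expanded matrix $B$ has hereditary discrepancy at most $\herdisc(A)$, and the identity $A(2z-x)=B\sigma$; you also correctly isolate the simultaneous control of discrepancy and balance as the crux. The gap is that your mechanism for that crux does not work. The assertion that appending the all-ones row to $B$ increases the hereditary discrepancy ``only by a Spencer-type constant'' is not a known fact and is not argued; it is exactly the content that would have to be proved. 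Moreover, even granting some bound of the form $\herdisc\bigl(\binom{B}{e^\top}\bigr)\le 2\herdisc(A)+O(1)$, your accounting fails in both parameter regimes: when $\herdisc(A)>N/6$, a discrepancy-scale bound on the ones-row gives $\abs{\sum_j\sigma_j}\le 4\herdisc(A)$, which says nothing about the needed inequality $\abs{\sum_j\sigma_j}\le\tfrac23 N$, so item~(2) is simply not delivered; when $\herdisc(A)$ is small, the whole argument rests on the unproven augmentation claim, and an additive constant does not translate into the multiplicative factor $2$ of item~(3). The fallback ``iterative partial colouring that steers $\sum_j\sigma_j$'' is a hope, not a proof: no mechanism is given by which the leftover tokens can repair an imbalance that may be of order $N$ without destroying the discrepancy bound.

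For comparison, a correct elementary argument (and the one the constants $1/6$, $5/6$, $2\herdisc(A)$ point to) does not use an appended row at all. Split the $N$ tokens arbitrarily into two groups $T_1,T_2$ of sizes $\lceil N/2\rceil,\lfloor N/2\rfloor$ and take a low-discrepancy $\{\pm\tfrac12\}$-colouring of each group separately; each colouring splits its group into two sides whose $A$-sum deviates from half that group's $A$-sum by at most $\herdisc(A)$. Taking one side from $T_1$ and one side from $T_2$ yields four candidate sets, every one of which satisfies $\norm{A(z-x/2)}_\infty\le 2\herdisc(A)$ --- this is where the factor $2$ really comes from: two colourings, not an augmented row. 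If the candidate cardinalities are $a+b$, $N-a-b$, $a+\lfloor N/2\rfloor-b$, $\lceil N/2\rceil-a+b$, a short case analysis shows that for $N\ge 3$ at least one of them lies in $[N/6,\,5N/6]$ (e.g.\ if $a+b<N/6$ then $a,b<N/6$ and $a+\lfloor N/2\rfloor-b$ lands in the window), while $N=2$ is handled directly from $\herdisc(A)\ge\norm{A}_{\max}/2$, and $N\le 1$ is degenerate. So the balance in item~(2) is obtained by counting over the four sign-combinations, not by a discrepancy bound on an all-ones row; this is the argument your proposal is missing.
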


\section{Proof of \Cref{genILP_main_th}}\label{genILP_main_th_proof}

Recall the definition of \Cref{genILP_main_th}.

\mainGenILPTh*

The proof consists of three parts: In the first part, we describe our dynamic programming algorithm; In the second part, we estimate the parameters of the dynamic program; Finally, in the third part, we put things together and provide the final computational complexity bound.  

\subsection{Dynamic Program}

In seminal work \cite{DiscConvILP}, K.~Jansen \& L.~Rohwedder provide a new class of dynamic programming algorithms for ILP problems, which uses results of the discrepancy theory and fast algorithms for tropical convolution on sequences. Our dynamic programming algorithm follows to the same pattern, but it also has sufficient differences, and we need to solve more general tropical and Boolean convolution problems on a special group ring. The algorithm is presented in the following theorem.

\begin{theorem}\label{DP_th}
Consider the \ref{GEN-ILP-SF} problem. Let $r = \abs{\GC}$ and $\rho \in \ZZ_{>0}$ be the value, such that $\norm{z^*}_1 \leq (6/5)^\rho$, for some optimal integer solution $z^*$ of the problem. 
Additionally, for a given nondegenerate $B \in \ZZ^{n \times n}$, let $M = B^{-1} \cdot A$, $\eta \in \ZZ_{\geq 1}$ be an upper bound on $\herdisc(M)$, and $\delta = \abs{\det B}$.
% Additionally, for a base $\BC \subseteq \intint n$ of $A$, let $M = (A_{\BC})^{-1} \cdot A$, $\eta \in \ZZ_{\geq 1}$ be an upper bound on $\herdisc(M)$, and $\delta = \abs{\det(A_{\BC})}$. 
Then the problem can be solved with
$$
\rho \cdot \tau^2 / 2^{\Omega\bigl(\sqrt{\log \tau}\bigr)} + O(n)
$$ operations with elements of $\ZZ^k \times \GC$, where $\tau = (16 \eta)^k \cdot r \cdot \delta$.
The feasibility variant of the problem can be solved with
$$
O(\rho \cdot \tau \cdot \log \tau + n) \quad \text{operations in $\ZZ^k \times \GC$.}
$$ 
\end{theorem}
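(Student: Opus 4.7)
The plan is to extend the dynamic programming framework of Jansen and Rohwedder \cite{DiscConvILP} in two ways: (i) we track the transformed coordinates $M x = (A_\BC)^{-1} A x$ so that the relevant discrepancy parameter is $\eta = \herdisc(M)$ (typically much smaller than $\herdisc(A)$); and (ii) we enlarge the signature space by the Abelian group $\GC$ to incorporate the additional equation $\sum_i x_i g_i = g_0$. The DP runs over levels $\ell = 1, \ldots, \rho$; at level $\ell$ we maintain a table $T_\ell$ indexed by a finite Abelian group $\SC := \GC_I \times \GC$ of order at most $\tau$, where $T_\ell[s]$ stores the maximum value $c^\top x$ over all $x \in \ZZ^n_{\geq 0}$ with $\|x\|_1 \leq (6/5)^\ell$ and signature $\sigma(x) = s$ (or $-\infty$ if no such $x$ exists). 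Since $\|z^*\|_1 \leq (6/5)^\rho$ for the optimum, the answer is read off from $T_\rho$ at the signature corresponding to $(b, g_0)$.

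The signature of a partial solution $x$ is $\sigma(x) = \bigl(\phi_{\GC_I}(\delta M x),\, \sum_i x_i g_i\bigr)$, where $\GC_I$ is the integer tiling group from Section \ref{tiling_group_sec} induced by the parallelepiped $v_0 + \lceil 8\eta \rceil A_\BC \cdot [-1, 1)^k$, centered at a point $v_0$ on the expected trajectory $t \cdot \delta A_\BC^{-1} b$, $t \in [0,1]$. Multiplication by $\delta$ makes $\delta M x$ integer, since $\delta (A_\BC)^{-1}$ is the adjugate of $A_\BC$. By Corollary \ref{tailing_group_card_cor} applied with the integer matrix $\lceil 8\eta \rceil A_\BC$, we get $|\GC_I| \leq 2^k \lceil 8\eta \rceil^k \delta = O\bigl((16\eta)^k \delta\bigr)$, so $|\SC| = O(\tau)$. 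Correctness of the quotient $\phi_{\GC_I}$ rests on Lemma \ref{tiling_injection_lm}: iterating Lemma \ref{disc_lm} on $M$ shows that, along any branch of the recursive splitting of any feasible $z^*$, the deviation of $\delta M x$ from its expected trajectory point telescopes to at most a constant multiple of $\eta \delta$ — inside a translate of the fundamental domain of $\GC_I$ — so $\phi_{\GC_I}$ is injective on the reachable signatures and no information is lost.

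The transition $T_\ell \to T_{\ell+1}$ is a single convolution in the group ring over $\SC$: by Lemma \ref{disc_lm}, any $x$ with $\|x\|_1 \leq (6/5)^{\ell+1}$ splits as $x = z + (x-z)$ with both $\|z\|_1, \|x-z\|_1 \leq \tfrac{5}{6}(6/5)^{\ell+1} = (6/5)^\ell$, so $T_{\ell+1}[s] = \max_{s_1 + s_2 = s} \bigl(T_\ell[s_1] + T_\ell[s_2]\bigr)$, which is exactly a $(\max,+)$-convolution on the finite Abelian group $\SC$. Over the tropical semiring, Theorem \ref{group_min_conv_th} evaluates one such convolution in $\tau^2 / 2^{\Omega(\sqrt{\log \tau})}$ operations in $\ZZ^k \times \GC$; over $\ZZ_2$, Corollary \ref{Z2Conv_cor} gives $O(\tau \log \tau)$. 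Performing $\rho$ such convolutions, together with an $O(n)$ initialization step seeding $T_1$ with the atomic states $\sigma(e_i)$, yields the two claimed bounds. The main obstacle is the injectivity step: one must establish carefully that $\delta M x - \delta A_\BC^{-1} b \cdot (\|x\|_1 / \|z^*\|_1)$ stays inside the box $\lceil 8\eta \rceil A_\BC \cdot [-1,1)^k$ throughout the DP. This uses a telescoping geometric series of per-split discrepancy errors $\leq 2\eta$ weighted by the shrinking norm ratios $(5/6)^i$, summing to $O(\eta)$ independently of $\rho$, which matches the scale of the chosen tiling box and produces precisely the factor $(16\eta)^k$ appearing in $\tau$.
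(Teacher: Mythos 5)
Your overall architecture is the same as the paper's (a halving DP in the style of Jansen--Rohwedder, Lemma \ref{disc_lm} applied to $M=(A_{\BC})^{-1}A$, one group-ring convolution per level evaluated via Theorem \ref{group_min_conv_th} for optimization and Corollary \ref{Z2Conv_cor} for feasibility), but the central technical step --- the choice of signature --- has a genuine gap. You quotient $\delta M x$ by the tiling group induced by $\lceil 8\eta\rceil A_{\BC}\cdot[-1,1)^k$. Lemma \ref{disc_lm} bounds $\norm{Mz-\tfrac12 Mx}_\infty$ by $2\eta$, so after multiplying by $\delta$ the reachable values of $\delta Mx$ deviate from the trajectory by $\Theta(\eta\delta)$ in the \emph{coordinate} $\ell_\infty$-norm: they fill a cube of volume $\Theta\bigl((\eta\delta)^k\bigr)$. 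A translate of the fundamental cell $\lceil 8\eta\rceil A_{\BC}\cdot[-1,1)^k$ has volume $O\bigl((16\eta)^k\delta\bigr)$, which is smaller by a factor $\approx\delta^{k-1}$; hence for $k\geq 2$ and large $\delta$ the reachable set cannot fit into any translate of the cell, Lemma \ref{tiling_injection_lm} does not apply, and $\phi_{\GC_I}$ is not injective on the reachable states (the construction is consistent only for $k=1$). This also makes your table semantics unsound for the optimization version: since $T_\ell[s]$ maximizes over \emph{all} $x$ with signature $s$, a solution with $Ax=b+\lambda$ for a nonzero lattice vector $\lambda$ but the same signature can contribute a larger objective, and the read-off at the class of $(b,g_0)$ may return an infeasible value. (A smaller slip: the maintained deviation radius comes from the recentering factor $\tfrac12$ in item 3 of Lemma \ref{disc_lm}, giving the fixed point $2\eta+\tfrac12\cdot 4\eta=4\eta$, not from weights $(5/6)^i$; and the centers must move with the level, not sit at a single $v_0$.)

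The repair is exactly what the paper does: drop the $\delta$-scaling and index DP states by the actual right-hand side $b'=Ax$ (already integer), whose reachable window at level $i$ is the $A_{\BC}$-shaped set $\MC(i,4\eta)=\ZZ^k\cap\bigl(2^{i-\rho}b+A_{\BC}\cdot[-4\eta,4\eta]^k\bigr)$ with level-dependent centers $2^{i-\rho}b$. The level-$i$ tiling group $\HC$ induced by $2^{i-\rho}b+(8\eta A_{\BC})\cdot[-1,1)^k$ has order $(16\eta)^k\delta$, the window $\MC(i-1,4\eta)$ fits in a translate of the cell, so Lemma \ref{tiling_injection_lm} gives a faithful embedding; one then stores $+\infty$ outside the image of the window and checks (using that any two admissible right-hand-side sums differ by at most $12\eta<16\eta$ in $A_{\BC}$-coordinates) that the convolution in $\RC[\HC\oplus\GC]$ reproduces the level-$i$ values exactly, rather than treating the quotient classes themselves as states. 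With this replacement your complexity accounting ($\rho$ levels, group of order $\tau=(16\eta)^k r\delta$, Theorem \ref{group_min_conv_th} resp. Corollary \ref{Z2Conv_cor} per level, $O(n)$ initialization) goes through and matches the claimed bounds.
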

\begin{myproof}
    % Denote $v^* = (A_{\BC})^{-1} b$ and define
    Denote $v^* = B^{-1} b$ and define
    \begin{multline*}
        \MC(i,\gamma) = \ZZ^k \cap \left\{ B x \colon \norm{x - 2^{i - \rho} v^*}_\infty \leq \gamma \right\} = \\
        \ZZ^k \cap \left( 2^{i-\rho} b + B \cdot [-\gamma, \gamma]^k \right) = \ZZ^k \cap \left( 2^{i-\rho} b + (\gamma B) \cdot [-1, 1]^k \right).
    \end{multline*} 
    For every $i \in \intint[0]{\rho}$, $b^\prime \in \MC(i, 4 \eta)$, and every $g^\prime \in \GC$, we solve the problem
    \begin{gather}
    c^\top x \to \min\notag\\
    \begin{cases}
    A x = b^{\prime}\\
    x_1 \cdot g_1 + \dots + x_n \cdot g_n = g^{\prime}\\
    \norm{x}_1 \leq \left(\frac{6}{5}\right)^i\\
    x \in \ZZ^n_{\geq 0}.
    \end{cases}\tag{$\DP(i,b^\prime,g^\prime)$}\label{norm_prob}
    \end{gather} 
    Following to the paper \cite{DiscConvILP} of K.~Jansen \& L.~Rohwedder, we iteratively derive solutions for $i$, using pairs of solutions for $i-1$. Finally, an optimal solution of the original problem can be derived from $\DP(i,b',g')$, taking $i := \rho$, $b' := b$, and $g' := g_0$.
    
    We decompose the computation into $\rho+1$ levels, where the $i$-th level means the solution of all problems of the type $\DP(i,\cdot,\cdot)$. The computation in the level $\DP(0,\cdot,\cdot)$ is straightforward, because its solutions correspond exactly to the elements $\binom{A_{* 1}}{g_1}, \binom{A_{* 2}}{g_2}, \dots, \binom{A_{* n}}{g_n}$. So, the $0$-th level can be computed with $O(n)$ operations in $\ZZ^k \times \GC$. 
    
    Fix some $i \geq 1$. Let us estimate the computational complexity of the level $\DP(i,\cdot,\cdot)$. For $b^\prime \in \MC(i, 4 \eta)$ and $g^\prime \in \GC$, let $x^*$ be an optimal solution of \ref{norm_prob}. The equality $A x^* = b^\prime$ can be rewritten as $M x^* = B^{-1} b^\prime$. By Lemma \ref{disc_lm}, there exists a vector $0 \leq z \leq x^*$ with $\norm{M z - \frac{1}{2} B^{-1} b^\prime}_{\infty} \leq 2 \eta$ and 
    $$
    \norm{z}_1 \leq \frac{5}{6} \cdot \norm{x^*}_1 \leq \frac{5}{6} \cdot \left(\frac{6}{5}\right)^i = \left(\frac{6}{5}\right)^{i -1},\quad\text{if $\norm{x^*}_1 > 1$,}
    $$
    or $\norm{z}_1 \leq \norm{x^*}_1 \leq 1 \leq (6/5)^{i-1}$, otherwise. The same holds for $x^* - z$. Therefore, $z$ is an optimal solution for $\DP(i-1,b^{\prime\prime}, g^{\prime\prime})$, where $b^{\prime\prime} = A z$ and $g^{\prime\prime} = z_1 \cdot g1 + z_2 \cdot g_2 + \ldots + z_n \cdot g_n$. Likewise, $x^* - z$ is an optimal solution for $\DP(i-1,b^\prime - b^{\prime\prime}, g^\prime - g^{\prime\prime})$. We claim that $b^{\prime\prime} \in \MC(i-1,4 \eta)$ and $b^{\prime} - b^{\prime\prime} \in \MC(i-1, 4 \eta)$. 
    Definitely, \begin{multline*}
        \norm{B^{-1} b^{\prime\prime} - 2^{(i-1)-\rho} v^*}_{\infty} = \norm{M z - \frac{1}{2} B^{-1} b^{\prime} + \frac{1}{2} B^{-1} b^{\prime} - 2^{(i-1)-\rho} v^*}_{\infty} \leq \\
        \leq \norm{M z - \frac{1}{2} B^{-1} b^{\prime}}_{\infty} + \norm{\frac{1}{2} B^{-1} b^{\prime} - 2^{(i-1)-\rho} v^*}_{\infty} \leq \\
        \leq 2 \cdot \eta + \frac{1}{2} \norm{B^{-1} b^{\prime} - 2^{i - \rho} v^*}_{\infty} \leq 4 \cdot \eta,
    \end{multline*}
    which proves that $b'' \in \MC(i-1, 4 \eta)$. The proof of the inclusion $b'-b'' \in \MC(i-1, 4 \eta)$ is completely similar. 
    
    This Claim implies that we can combine pairs of optimal solutions of $\DP(i-1, b^{\prime\prime}, g^{\prime\prime})$ and $\DP(i-1,b^\prime - b^{\prime\prime}, g^\prime - g^{\prime\prime})$, to compute an optimal solution for $\DP(i,b^\prime, g^\prime)$. More precisely, the following formula holds:
    \begin{equation}\label{DP_formula}
    \DP\bigl(i,b^\prime,g^\prime\bigr) = \min_{
    \substack{
    b^{\prime\prime} \in \MC(i-1,4 \eta)\\
    g^{\prime\prime} \in \GC
    }
    } \Bigl\{\DP\bigl(i-1,b^{\prime\prime},g^{\prime\prime}\bigr) + \DP\bigl(i-1,b^\prime-b^{\prime\prime},g^\prime-g^{\prime\prime}\bigr)\Bigr\}.
    \end{equation}
    Based on the formula \eqref{DP_formula}, let us show that the computation of the level $\DP(i,\cdot,\cdot)$ can be reduced to the convolution in a group ring $\RC_{(\min,+)}[\QC]$, where $\RC_{(\min,+)} = \bigl(\ZZ \cup \{+\infty\}, \min, +\bigr)$ is the $(\min,+)$-semiring and $\QC$ is a specially constructed group. 
    
    Consider the set $\HC$, defined by the formula
    \begin{multline*}
        \HC = \left\{ 2 B x \colon \bigl(x - 2^{i-1-\rho} v^*\bigr) \in [-4 \eta, 4 \eta)^k \right\} = \\
        = 2^{i - \rho} b + 2B \cdot [-4 \eta, 4 \eta)^k = 2^{i - \rho} b + (8 \eta B) \cdot [-1,1)^k.
    \end{multline*}
    Due to the facts, described in Section \ref{tiling_group_sec}, we can look at this set as a group, called the \emph{integer tiling group}.
    
    We claim that $\MC(i,4 \eta) \subseteq \HC$. Definitely, let $y \in \MC(i,4\eta)$, then $y = B x$, for $\norm{x - 2^{i - \rho} v^*}_{\infty} \leq 4 \eta$. The last is equivalent to $\norm{x/2 - 2^{i -1- \rho} v^*}_{\infty} \leq 2 \eta$. Since $y = 2 B (x/2)$, we have $y \in \HC$, which proves the Claim. Additionally, note that there exists a vector $t \in \QQ^n$, such that $t + \MC(i-1,4 \eta) \subseteq \HC$ (one can put $t = 2^{i - 1 - \rho} b$). Consequently, due to Lemma \ref{tiling_injection_lm}, the canonical homomorphism $\varphi := \phi_{\HC} \colon \ZZ^k \to \HC$ injectively maps the set $\MC(i-1,4 \eta)$ into $\HC$. Therefore, the map $\varphi^{-1}$ is correctly defined on the set $\varphi\bigl(\MC(i-1, 4\eta)\bigr)$. 
    % Let us denote the corresponding bijection between the sets $\MC(i-1, 4 \eta)$ and $\phi_{\HC}\bigl(\MC(i-1, 4 \eta)\bigr)$ by $\varphi$.
    
    Now, we construct the group $\QC$ as the direct sum $\QC = \HC \oplus \GC$ of the groups $\HC$ and $\GC$. To compute the level $\DP(i,\cdot,\cdot)$, we define $\alpha \in \RC_{(\min,+)}[\QC]$ by the formula
    $$
    \alpha(h + g) = \begin{cases}
        \DP\bigl(i-1,\varphi^{-1}(h),g\bigr),\quad\text{if $h \in \varphi\bigl(\MC(i-1,4 \eta)\bigr)$}\\
        +\infty,\quad\text{in the opposite case.}
    \end{cases}
    $$
    We claim that
    $$
    \DP(i,h,g) = \alpha \star \alpha(h + g),\quad\text{for $h \in \MC(i,4 \eta)$ and $g \in \GC$.}
    $$
    Definitely, for $h \in \MC(i,4 \eta)$ and $g \in \GC$, we have
    \begin{multline*}
        \alpha \star \alpha(h + g) = \min\limits_{q \in \QC} \bigl\{\alpha(q) + \alpha(h+g-q)\bigr\} = \\ = \min\limits_{\substack{h' \in \HC\\ g' \in \GC}} \bigl\{ \alpha(h'+g') + \alpha(h-h'+g-g') \bigr\} = \\
        = \min\limits_{\substack{h' \in \varphi(\MC(i-1,4 \eta))\\ g' \in \GC}} \bigl\{ \alpha(h'+g') + \alpha(h-h'+g-g') \bigr\} = \\ = \min\limits_{\substack{h' \in \varphi(\MC(i-1,4 \eta))\\ g' \in \GC}} \bigl\{ \DP\bigl(i-1,\varphi^{-1}(h'),g'\bigr) + \DP\bigl(i-1, \varphi^{-1}(h-h'),g-g'\bigr) \bigr\} = \\
        = \min\limits_{\substack{h' \in \MC(i-1,4 \eta)\\ g' \in \GC}} \bigl\{ \DP\bigl(i-1,\varphi^{-1}(\varphi(h')),g'\bigr) + \DP\bigl(i-1, \varphi^{-1}(h-\varphi(h')),g-g'\bigr) \bigr\} =\\
        = \min\limits_{\substack{h' \in \MC(i-1,4 \eta)\\ g' \in \GC}} \bigl\{ \DP\bigl(i-1,h',g'\bigr) + \DP\bigl(i-1, h-h',g-g'\bigr) \bigr\} = \DP(i,h,g),
    \end{multline*}
    where the equality of the last lines holds, because $h-h' \in \MC(i-1,4\eta)$ and $\varphi^{-1}(h - \varphi(h')) = \varphi^{-1}(\varphi(h) - \varphi(h')) = \varphi^{-1}(\varphi(h - h')) = h - h'$. Here, the equality $\varphi(h) = h$ holds, because $h \in \MC(i,4 \eta) \subseteq \HC$.   
    
    Therefore, assuming that $i \in \intint \rho$ is fixed, we have shown that computation of the level $\DP(i,\cdot,\cdot)$ can be reduced to the convolution $\alpha \star \alpha$ for a specially constructed $\alpha \in \RC_{(\min,+)}[\QC]$. Assuming that we know some basis of $\QC$, due to Theorem \ref{group_min_conv_th}, the convolution $\alpha \star \alpha$ can be computed with $O\Bigl(\abs{\QC}^2 /2^{\Omega\bigl(\sqrt{\log \abs{\QC}}\bigr)}\Bigr)$ operations. Due to Theorem \ref{tailing_group_basis_th}, a basis of $\HC$ can be computed by a polynomial-time algorithm, so the same fact hods for the whole group $\QC$. Hence, to finish the computational complexity analysis for the level $\DP(i,\cdot,\cdot)$, we need to estimate the value of $\abs{\QC}$. 
    
    Due to Corollary \ref{tailing_group_card_cor}, $\abs{\HC} = (16 \eta)^k \cdot \delta $ and, consequently, $\abs{\QC} = \tau = (16 \eta)^k \cdot r \cdot \delta$, where $\tau$ is the constant from the Theorems' definition. Therefore, the level $\DP(i,\cdot,\cdot)$ can be computed with $O\bigl(\tau^2 / 2^{\Omega(\sqrt{\log \tau})}\bigr)$ operations with integers and group elements in $\QC$. Note that elements of $\QC$ can be explicitly represented as elements of $\ZZ^k \times \GC$. Finally, since there are exactly $\rho+1$ levels, the total computational complexity becomes $\rho \cdot \tau^2 / 2^{\Omega(\sqrt{\log \tau})} + O(n)$, where $O(n)$ is the computational complexity of the $0$-th level.
    
    The proof for the feasibility variant of the problem \ref{GEN-ILP-SF} is completely similar, with the only difference that the convolution in the $(\min,+)$-semiring is replaced by the Boolean convolution. In other words, the convolution is defined with respect to the group algebra $\BB[\QC]$, where $\BB = \bigl(\{0,1\}, \vee, \wedge \bigr)$ is the Boolean semiring. Since we know a basis of $\QC$, due to Corollary \ref{Z2Conv_cor}, the convolution in $\BB[\QC]$ can be computed with $O(\tau \log \tau)$ operations. Therefore, the total computational complexity for the feasibility problem is $O(\rho \cdot \tau \cdot \log \tau + n)$.
\end{myproof}

\subsection{Estimating the Parameters \(\rho\), \(\eta\), and \(\delta\) of the Dynamic Program}

Let us first estimate the parameter $\rho$ of Theorem \ref{DP_th}. Due to the next Lemma, we can assume that $\rho = O\bigl(\log(k \cdot r \cdot \Delta)\bigr)$.
\begin{lemma}\label{rho_estimate_lm}
    Any instance of the problem \ref{GEN-ILP-SF} can be transformed to an equivalent instance with the following property: if the problem is feasible and bounded, then there exists an optimal solution $z^*$, such that $\norm{z^*}_1 = (k \cdot \Delta \cdot r)^{O(1)}$, where $r = \abs{\GC}$.
\end{lemma}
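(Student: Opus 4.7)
My plan is the standard \emph{LP-shift-plus-proximity} argument, combined with a column-doubling trick that keeps the problem in \ref{GEN-ILP-SF} form, and a linearization of the group constraint so that a classical proximity bound is applicable.

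First I would solve the LP relaxation (Definition \ref{BLPSF_def}). If it is infeasible, so is the original \ref{GEN-ILP-SF} and there is nothing to show. In the main, LP-bounded, case I pick an LP-optimal vertex $\bar x$ with base $\BC \subseteq \intint n$, so that $\bar x_\BC = A_\BC^{-1} b \geq \BZero$ and $\bar x_\NotBC = \BZero$. The corner case in which the LP is unbounded but the \ref{GEN-ILP-SF} is bounded (which can happen only thanks to the group constraint) is dealt with by a separate preprocessing step that truncates the recession cone using the group structure — a technical but standard manoeuvre.

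Second, I reformulate the instance by shifting around $\bar x$ and absorbing the shift into an enlarged column set. Put $\tilde A := (A \mid -A_\BC) \in \ZZ^{k \times (n+k)}$, $b' := b - A_\BC \lfloor \bar x_\BC \rfloor = A_\BC (\bar x_\BC - \lfloor \bar x_\BC \rfloor)$, $\tilde c := (c, -c_\BC)$, extend the group-element list by $\tilde g_{n+\ell} := -g_{i_\ell}$ for $\ell \in \intint k$, and update the group target to $g'_0 := g_0 - \sum_\ell \lfloor \bar x_{i_\ell} \rfloor \cdot g_{i_\ell}$. The assignment $x_j := y_j$ for $j \notin \BC$ and $x_{i_\ell} := \lfloor \bar x_{i_\ell} \rfloor + y_{i_\ell} - y_{n+\ell}$ sets up a correspondence that preserves feasibility, integrality, and objective value, after the canonical cancellation of paired positive coordinates $(y_{i_\ell}, y_{n+\ell})$ (which is cost-neutral because $\tilde c_{i_\ell} + \tilde c_{n+\ell} = 0$). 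Crucially, $\Delta(\tilde A) = \Delta$ since the new columns are sign-flips of columns of $A$, and $\|b'\|_\infty \leq k \Delta_1 \leq k \Delta$.

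Third, I bound $\|z^*\|_1$ of the reformulated instance. Put $v := \bar x_\BC - \lfloor \bar x_\BC \rfloor \in [0,1)^k$; then the vector with $v$ in the $\BC$-coordinates and zeros elsewhere is an LP-optimal vertex $\bar y$ of the new LP, so $\|\bar y\|_1 < k$ trivially. To incorporate the group constraint into the proximity argument, I linearise $\GC$ via its invariant factors $d_1, \dots, d_s$ (with $s \leq \lceil \log_2 r \rceil$ and $\prod_j d_j = r$): the constraint $\sum_i y_i \tilde g_i = g'_0$ decomposes into $s$ scalar congruences modulo $d_j$, which become standard equalities after adjoining $s$ sign-split integer slack variables. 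This yields an equivalent standard ILP with $k+s$ equality constraints and entry magnitudes bounded by $\max(\Delta_1, r) \leq \Delta r$. The Cook--Gerards--Schrijver--Seb\H{o} $\ell_\infty$-proximity theorem applied to this enlarged system gives an integer optimum $z^*$ with $\|z^* - \bar y\|_\infty \leq (n+k+2s) \cdot \Delta r$. Combining with the column-deduplication bound from Remark \ref{n_delta_bound_rm} ($n = O(k^2 \Delta^2 r)$), we obtain $\|z^*\|_1 \leq \|\bar y\|_1 + (n+k+2s) \cdot \|z^* - \bar y\|_\infty = (k \Delta r)^{O(1)}$, which is the claimed bound.

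The main obstacle is the proximity step in the presence of the group constraint: the classical Cook--Seb\H{o}-style $\ell_\infty$-proximity is stated for purely linear systems, so one has to linearise $\GC$ at the cost of $O(\log r)$ extra rows and a factor of $r$ in the coefficient magnitudes, and then check that the resulting $\ell_\infty$-proximity bound remains polynomial in $k, \Delta, r$; the passage from $\ell_\infty$ to $\ell_1$ then follows by multiplying by the polynomial column count afforded by the dedup preprocessing of Remark \ref{n_delta_bound_rm}. A secondary subtlety is the LP-unbounded / \ref{GEN-ILP-SF}-bounded corner case, which is disposed of separately via the recession-cone truncation mentioned in the first paragraph.
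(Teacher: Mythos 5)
Your overall plan (shift around an LP vertex, then bound the distance from the LP optimum to an integer optimum) is in the spirit of the paper's proof, but the step that carries all the weight --- the proximity bound in the presence of the group constraint --- does not work as you state it. First, you use $\Delta_1 \leq \Delta$ twice (``$\norm{b'}_\infty \leq k\Delta_1 \leq k\Delta$'' and ``entry magnitudes bounded by $\max(\Delta_1,r) \leq \Delta r$''), but this inequality is false in general: lower-order minors, including single entries, are not controlled by the rank-order minors (e.g.\ the $2\times 2$ matrix with rows $(M,1)$ and $(M-1,1)$ has $\Delta = 1$ but $\Delta_1 = M$). Second, the Cook--Gerards--Schrijver--Seb\H{o} theorem bounds $\norm{z^*-\bar y}_\infty$ by (number of variables) times the \emph{maximum absolute subdeterminant of all orders} of the constraint matrix, not its maximum entry. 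After you adjoin the $s$ congruence rows and the sign-split slack columns, the relevant minors mix rows of $A$ with the congruence rows; Hadamard-type estimates give factors like $\Delta_1^{k}$, lower-order minors of $A$ (again uncontrolled by $\Delta$), and $(k+\log r)^{(k+\log r)/2}\cdot r$, none of which is $(k\Delta r)^{O(1)}$. So the claimed bound $\norm{z^*-\bar y}_\infty \leq (n+k+2s)\cdot\Delta r$ does not follow, and even a corrected application of that theorem would not yield the polynomial bound the lemma asserts.

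The missing ingredient is a proximity result stated directly in terms of $\Delta = \Delta_{k}(A)$, $k$, and $r$ for the group-constrained standard-form problem. This is exactly what the paper imports: Corollary~2 of \cite{OnCanonicalProblems_Grib} gives an optimal integer solution $\hat z$ with $\norm{v - \hat z}_1 \leq \chi = O\bigl(k^2 \cdot (r\Delta)\cdot \sqrt[k]{r\Delta}\bigr)$ from an optimal LP vertex $v$, after which the Jansen--Rohwedder change of variables $x' = x - y$ with $y_i$ chosen from $\lceil v_i\rceil$ and $\chi$ (no mirrored columns $-A_{\BC}$ are needed, since $\hat z \geq y$ is guaranteed by the proximity bound and $v$ has at most $k$ nonzero entries) gives the equivalent instance with $\norm{z^*}_1 = O(k\chi)$. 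Unless you prove such a $\Delta$-parameterized proximity statement yourself, your linearization route does not close. A minor further point: the ``LP unbounded but \ref{GEN-ILP-SF} bounded'' corner case you defer to a truncation argument is in fact vacuous --- if the integer problem is feasible and the relaxation has an improving ray, an integer improving ray scaled by $r$ preserves the group constraint, so the integer problem is unbounded too --- but this does not repair the main gap above.
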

\begin{myproof}
    Let $v$ be an optimal vertex solution of the relaxation, which can be found by a polynomial-time algorithm. Due to \cite[Corollary~2]{OnCanonicalProblems_Grib}, there exists an optimal solution $\hat z$ of the original problem, such that $\norm{v-\hat z}_1 \leq \chi$, for $\chi = O\bigl(k^2 \cdot (r \Delta) \cdot \sqrt[k]{r \Delta}\bigr)$. Following to \cite{DiscConvILP}, we change the variables $x^\prime = x - y$, where $y_i = \min\{0, \lceil v_i \rceil - \chi\}$. Note that $\hat z \geq y$, and since $v$ has at most $k$ non-zero components, we have $\|\hat z - y\|_1 = O(k \cdot \chi)$. Since $\hat z \geq y$, after the change of variables, the original problem transforms to an equivalent instance of the problem \ref{GEN-ILP-SF} with an optimal solution $z^* = \hat z - y$, satisfying
    $$
    \norm{z^*}_1 = O(k \cdot \chi) = (k \cdot \Delta \cdot r)^{O(1)}.
    $$
\end{myproof}

To give a good bound for the parameter $\eta$, we use the inequality \eqref{DiscDetBoundReduced_eq}.
% To give a good bound for the parameter $\eta$, we use the inequalities \eqref{SixDeviationsReduced_eq} and \eqref{DiscDetBoundReduced_eq}. 
To this end, we seek for a base $\BC$ of $A$, which will simultaneously minimize the values $\Delta_i\bigl(M(\BC)\bigr)$, for $i \in \intint k$, where $M(\BC) := A_{\BC}^{-1} \cdot A$. Taking $\BC$, such that $\abs{\det(A_{\BC})} = \Delta$, we can make $\Delta_i\bigl(M(\BC)\bigr) = 1$, for all $i \in \intint k$. But it is an NP-hard problem to compute this $\BC$. Instead, we will settle for an approximate solution that can be obtained by a polynomial-time algorithm. The following Theorem, due to A.~Nikolov, gives an asymptotically optimal approximation ratio. 
\begin{theorem}[A.~Nikolov \cite{LargestSimplex_Nikolov}]\label{maxdet_apr_th}
    Let $A \in \ZZ^{k \times n}$, $\rank(A) = k$ and $\Delta := \Delta(A)$. Then there exists a deterministic polynomial-time algorithm that computes a base $\BC$ of $A$ with $\Delta / \abs{\det(A_{\BC})} \leq e^k$.
\end{theorem}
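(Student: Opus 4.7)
The plan is to attack the problem via a convex relaxation of subdeterminant maximization followed by a determinantal rounding procedure, following the strategy of A.~Nikolov. Squaring the target, it suffices to produce a $k$-subset $\BC$ of column indices with $\det(A_{\BC})^2 \geq \Delta^2 / e^{2k}$. Writing $a_i$ for the $i$-th column of $A$, the quantity $\det(\sum_{i \in \BC} a_i a_i^\top)$ coincides with $\det(A_{\BC})^2$ and admits the natural convex relaxation
\begin{equation*}
\max \; \log\det\Bigl(\sum_{i=1}^n x_i \, a_i a_i^\top\Bigr) \quad \text{subject to} \quad \sum_{i=1}^n x_i = k, \; 0 \leq x_i \leq 1,
\end{equation*}
whose objective is concave by the well-known concavity of $\log\det$ on the positive-semidefinite cone. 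The characteristic vector of an optimum base is feasible and attains value $2\log\Delta$, so the relaxation optimum is an upper bound on $2\log\Delta$. The program is polynomial-time solvable to arbitrary accuracy, e.g.\ via the ellipsoid method with a separation/gradient oracle for $\log\det$, or an interior-point method for the log-determinant cone.

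Next, I would round the optimal fractional $x^\star$ to a $0/1$ vector by pipage-type rounding: while $x$ still has two fractional coordinates $x_i, x_j$, move along the direction $e_i - e_j$ until one of the coordinates becomes $0$ or $1$, picking the sign that loses least in the objective. Along every such line the function $F(t) = \det\bigl(M + t(a_i a_i^\top - a_j a_j^\top)\bigr)$ is at most quadratic in $t$ by the matrix-determinant lemma for rank-two updates, which makes the per-step loss explicitly controllable. A telescoping argument over at most $n$ rounding steps then produces an overall multiplicative loss bounded by $e^{2k}$. Hence the returned $0/1$ vector is the indicator of a $k$-set $\BC$ with $\det(A_{\BC})^2 \geq \Delta^2/e^{2k}$, equivalently $\abs{\det(A_{\BC})} \geq \Delta/e^k$; in particular $\det(A_{\BC}) \neq 0$, so $\BC$ is indeed a base.

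The hard part will be the rounding analysis: proving that at every step at least one of the two exchange directions preserves $F$ up to the claimed factor, so that the cumulative loss matches $e^{2k}$. This calls for expanding $F(t)$ via the rank-two update formula, decomposing the coefficients via principal minors of $M$, and then running an averaging argument over the fractional-support pairs $\{i,j\}$. The remaining ingredients, namely concavity of $\log\det$, polynomial-time solvability, and control of the bit-complexity of $x^\star$, are standard and much less delicate. An attractive alternative to pipage rounding is to draw $\BC$ from the determinantal measure induced by $x^\star$ and derandomise by conditional expectations; the approximation guarantee obtained this way is again $e^k$, and the core multilinear-algebra inequality is essentially the same.
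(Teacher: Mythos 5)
You cannot be compared against a proof in this paper, because the paper contains none: Theorem \ref{maxdet_apr_th} is imported verbatim from Nikolov \cite{LargestSimplex_Nikolov} and used as a black box. So the only meaningful comparison is with Nikolov's own argument, and your general strategy (the concave $\log\det$ relaxation over $\{\sum_i x_i = k,\ 0 \le x_i \le 1\}$, followed by a rounding step, with the alternative of sampling from a determinantal measure and derandomising by conditional expectations) is indeed the right family of ideas; the second alternative you mention in passing is essentially what Nikolov does.

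The genuine gap is that the whole content of the theorem — the claim that rounding loses only a factor $e^{O(k)}$ — is precisely the step you assert rather than prove, and your primary (pipage) route would not go through as described. Along an exchange direction $e_i - e_j$ the map $x \mapsto \log\det\bigl(\sum_i x_i a_i a_i^\top\bigr)$ is \emph{concave} (it is concave along every line), so moving to an endpoint of the segment can strictly decrease the objective and there is no a priori bound on the per-step loss; worse, even granting a constant multiplicative loss per step, telescoping over up to $n$ steps yields a loss exponential in $n$, not in $k$, which is useless since $n$ may vastly exceed $k$. The known way to obtain an $e^{O(k)}$ guarantee is a global, not per-step, argument: by Cauchy--Binet, $\det\bigl(\sum_i x^\star_i a_i a_i^\top\bigr) = \sum_{\abs{S}=k} \det(A_{S})^2 \prod_{i \in S} x^\star_i$, and since $\sum_{\abs{S}=k} \prod_{i\in S} x^\star_i \le \bigl(\sum_i x^\star_i\bigr)^k/k! = k^k/k! \le e^k$, some $k$-set $S$ satisfies $\det(A_S)^2 \ge e^{-k}\,\Delta^2$; turning this into a deterministic polynomial-time algorithm then requires showing that the conditional expectations of $\det(A_S)^2$ under the induced measure are efficiently computable (again via Cauchy--Binet/characteristic-polynomial coefficients), which you gesture at but do not carry out. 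As written, the proposal is a plan whose crux — the $e^{O(k)}$ rounding/integrality-gap inequality and its constructive derandomisation — is left unestablished, so it does not yet constitute a proof of the cited theorem.
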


The next Lemma uses an algorithm, due to A.~Nikolov, to compute a relatively good base $\BC$.
\begin{lemma}\label{detlb_submat_lm}
Let $A \in \ZZ^{k \times n}$, $\rank(A) = k$, and $\Delta := \Delta(A)$. 
% For a base $\BC \subseteq \intint n$ of $A$, denote $M(\BC) := (A_{\BC})^{-1} \cdot A$. 
Then there exists a base $\BC$, such that
\begin{enumerate}
    % \item $\Delta_1\bigl(M(\BC)\bigr) \leq 1$;
    
    \item for each $i \in \intint k$, $\Delta_i\bigl(M(\BC)\bigr) \leq e^{i+1}$;

    \item the base $\BC$ can be computed by an algorithm with the computational complexity bound
    $$
    O\bigl( k \cdot 2^k \cdot T_{\text{apr}} \bigr),
    $$ where $T_{\text{apr}}$ is the computational complexity of the algorithm of Theorem \ref{maxdet_apr_th} with an input $A$. Writing the complexity bound, we make an additional assumption that the approximation problem is harder than the matrix inversion.
\end{enumerate}
\end{lemma}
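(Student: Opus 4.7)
The plan is to iteratively refine an initial Nikolov base via sub-problem approximations. The crux is the classical multi-column Cramer/Jacobi minor identity for $M = M(\BC) = A_\BC^{-1} A$: for any bijection $\sigma \colon \intint{k} \to \BC$ and any $\IC \subseteq \intint{k}$, $\JC \subseteq \intint{n}$ with $\abs{\IC} = \abs{\JC} = i$,
$$
\abs{\det(M_{\IC \JC})} = \abs{\det(A_{\BC'})} / \abs{\det(A_\BC)}, \quad \BC' := (\BC \setminus \sigma(\IC)) \cup \JC,
$$
whenever $\BC'$ is a base; otherwise the minor vanishes or reduces to a lower-order case by column expansion. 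A clean reformulation: $\Delta_i(M(\BC)) \cdot \abs{\det(A_\BC)}$ equals the maximum of $\abs{\det(A_{\BC'})}$ over bases $\BC'$ with $\abs{\BC \cap \BC'} \geq k - i$. Hence, to ensure $\Delta_i(M(\BC)) \leq e^{i+1}$ for all $i \in \intint{k}$, it suffices to find $\BC$ such that no base $\BC'$ differing from $\BC$ in at most $i$ columns satisfies $\abs{\det(A_{\BC'})} > e^{i+1} \cdot \abs{\det(A_\BC)}$.

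The second ingredient is an approximate solver for the sub-problem: given a partial base $\BC_1 \subseteq \intint{n}$ with $\abs{\BC_1} = k - i$ and $\rank(A_{\BC_1}) = k - i$, maximize $\abs{\det(A_{\BC_1 \cup \JC})}$ over $\abs{\JC} = i$. Eliminating the frozen columns $A_{\BC_1}$ by $\QQ$-valued row operations and rescaling produces an integer $i \times n$ quotient matrix $\tilde A$ whose $i \times i$ minors are proportional to $\abs{\det(A_{\BC_1 \cup \JC})}$. Applying Theorem \ref{maxdet_apr_th} to $\tilde A$ yields $\JC^*$ within factor $e^i$ of the sub-problem optimum.

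The algorithm is: initialize $\BC$ via one call to Theorem \ref{maxdet_apr_th} on $A$, obtaining $\abs{\det(A_\BC)} \geq \Delta / e^k$. Then, in an outer loop, iterate over all $2^k$ subsets $\IC \subseteq \intint{k}$; for each $\IC$, apply the sub-problem reduction plus one Nikolov call to obtain a candidate swap $\BC \to \BC'$. Whenever the candidate satisfies $\abs{\det(A_{\BC'})} > e \cdot \abs{\det(A_\BC)}$, update $\BC \leftarrow \BC'$ and restart the outer loop; otherwise halt. Each successful swap multiplies $\abs{\det(A_\BC)}$ by more than $e$, and $\abs{\det(A_\BC)} / \Delta$ lives in $[e^{-k}, 1]$, so the loop restarts at most $k$ times, giving $O(k \cdot 2^k)$ Nikolov calls in total.

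Correctness follows because, at termination, for every $\IC$ of size $i$ the Nikolov candidate satisfies $\abs{\det(A_{\BC'})} \leq e \cdot \abs{\det(A_\BC)}$; by the $e^i$-approximation guarantee of Theorem \ref{maxdet_apr_th}, the true sub-problem optimum is at most $e^{i+1} \cdot \abs{\det(A_\BC)}$. The minor identity then gives $\Delta_i(M(\BC)) \leq e^{i+1}$, as required. The main technical obstacle I expect is the integer rescaling of $\tilde A$ in the second step — the natural quotient matrix is rational, and one must bound the scaling factor so that Nikolov's $e^i$-approximation on the scaled integer matrix still corresponds to an $e^i$-approximation for the original sub-problem. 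This is a routine but careful calculation using the determinant of a non-singular $(k-i) \times (k-i)$ sub-matrix of $A_{\BC_1}$.
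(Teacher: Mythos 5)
Your proposal is correct and follows essentially the same route as the paper: initialize with one Nikolov call, enumerate all $2^k$ subsets, apply the $e^i$-approximation to the quotient matrix (which is exactly the paper's row-submatrix $M_{\JC *}$ of $M(\BC)=A_{\BC}^{-1}A$, up to a uniform scaling), swap whenever the determinant grows by more than a factor $e$, and bound the number of passes by $k$ via the $[e^{-k},1]$ range of $\abs{\det(A_{\BC})}/\Delta$, yielding the same $e^{i+1}$ bound and $O(k\cdot 2^k\cdot T_{apr})$ complexity. The rescaling issue you flag is indeed routine (a uniform scaling multiplies all order-$i$ minors by the same factor, preserving the argmax and the approximation ratio), and the paper handles it implicitly by working with $M(\BC)$ directly.
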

\begin{myproof}
    In the initial step, we compute a base $\BC$, such that $\Delta\bigl(M(\BC)\bigr) \leq e^k$, using Theorem \ref{maxdet_apr_th}.  
    Next, we repeatedly perform the following iterations:
    \begin{algorithmic}[1]
    \State $M \gets M(\BC)$
    \For{$\JC \subseteq \intint k$}
        \State $i \gets \abs{\JC}$
        \State using Theorem \ref{maxdet_apr_th}, compute a base $\IC$ of $M_{\JC *}$, such that $\abs{\det(M_{\JC \IC})} \cdot e^i \geq \Delta_i(M_{\JC *})$
        \If{$\abs{\det(M_{\JC \IC})} > e$}
            \State $\BC \gets \BC \setminus \JC \cup \IC$
            \State \textbf{break}
        \EndIf
    \EndFor
    \end{algorithmic}
    Note that $\bigl(M(\BC)\bigr)_{\BC} = I$, where $I$ is the $k \times k$ identity matrix.
    Hence, if the condition $\abs{\det(M_{\JC \IC})} > e$ will be satisfied, for some $\IC$ and $\JC$, then the value of $\abs{\det(A_{\BC})}$ will grow at least by $e$. Therefore, since initially $ e^{-k} \cdot \Delta(A) \leq \abs{\det(A_{\BC})} \leq \Delta(A)$, it is sufficient to run the described procedure exactly $k$ times. More precisely, we can stop at the moment, when the cycle in Line 2 will be completely finished without calling the $\textbf{break}$-operator of Line 7. After that the condition $\Delta_i\bigl(M(\BC)\bigr) \leq e^{i+1}$ will be satisfied, for all $i \in \intint k$. Clearly, the total computational complexity is bounded by $O(k \cdot 2^k \cdot T_{\text{apr}})$.
\end{myproof}

Assume that the algorithm of Lemma \ref{detlb_submat_lm} has been applied to the matrix $A$ setting $B = A_{\BC}$ for the resulting base $\BC$. Then
$$
\detlb\bigl(M(\BC)\bigr) \leq \max\limits_{t \in \intint k} e^{\frac{t+1}{t}} = e^2,
$$
hence, due to the inequality \eqref{DiscDetBoundReduced_eq},
\begin{equation*}\label{MB_DiscDet_bound}
    \herdisc\bigl(M(\BC)\bigr) = O(\log k).    
\end{equation*}
% Similarly, due to the inequality \eqref{SixDeviationsReduced_eq},
% \begin{equation*}\label{MB_SixDev_bound}
%     \herdisc\bigl(M(\BC)\bigr) = O\bigl(\sqrt{k}\bigr).    
% \end{equation*}

Note that $\abs{\det(A_{\BC})} \geq \Delta/e^k$. Therefore, recalling Lemma \ref{rho_estimate_lm} and assuming that an $2^k \cdot \poly(\phi)$-time pre-processing is done, we can assume that 
\begin{gather}
    \rho = O\bigl(\log(k\cdot r \cdot \Delta)\bigr),\notag\\
    \eta = O(\log k),\label{eta_delta_bounds}\\
    \delta \geq \Delta/e^k.\notag
    % \eta = O\Bigl(\min\bigl\{\sqrt{k}, \sqrt{\log k \cdot \log n}\bigr\}\Bigr),\label{eta_delta_bounds}\\
\end{gather}

\subsection{Putting Things Together}

Let us consider the computational complexity guaranties of our dynamic programming algorithm, given by Theorem \ref{DP_th}. Due to \eqref{eta_delta_bounds}, assuming that an $2^{O(k)} \cdot \poly(\phi)$-time preprocessing has been done, we can assume that 
$$
    \eta = O(\log k),\;\rho = O\bigl(\log(k\cdot r \cdot \Delta)\bigr)
$$ 
and $\delta \geq \Delta/e^k$. Recall that $\tau = (16 \eta)^k \cdot r \cdot \delta$. Since $\delta \geq \Delta/e^k$ and $\eta \geq 1$, we have $\tau = O(\log k)^k \cdot r \cdot \Delta$ and $\tau \geq 2^{(4-\ln 2)k} \cdot r \cdot \Delta \geq r \cdot \Delta$. Hiding the term $2^{O(k)} \cdot \poly(\phi)$, the total number of operations in $\ZZ^k \times \GC$ can be estimated by
\begin{multline*}
    \rho \cdot \tau^2 / 2^{\Omega\bigl(\sqrt{\log \tau}\bigr)} = \log(k r \Delta) \cdot O(\log k)^{2k} \cdot (r \Delta)^2 / 2^{\Omega\bigl(\sqrt{\log(r\Delta)}\bigr)} =  \\
    = O(\log k)^{2k} \cdot (r \Delta)^2 / 2^{\Omega\bigl(\sqrt{\log(r \Delta)}\bigr)}.
\end{multline*}
For the feasibility problem, we have
\begin{multline*}
    \rho \cdot \tau \cdot \log \tau = \log(k r \Delta) \cdot O(\log k)^{k} \cdot (r\Delta) \cdot \log(r \Delta)\\
    = O(\log k)^{k} \cdot (r\Delta) \cdot \log^2(r \Delta).
\end{multline*}
The above formulas satisfy the desired computational complexity bounds, which finishes the proof of Theorem \ref{genILP_main_th}.

\backmatter

\addcontentsline{toc}{section}{Acknowledgements}
\section*{Acknowledgements}

Main results of our work described in Section \ref{intro_sec} were supported by the Ministry of Science and Higher Education of the Russian Federation (Goszadaniye) No. 075-03-2024-117, project No. FSMG-2024-0025. Secondary results of the Sections \ref{app_sec} and \ref{conv_sec} were prepared within the framework of the Basic Research Program at the National Research University Higher School of Economics (HSE).

\begin{appendices}

\section{Proof of Lemma \ref{group_isomorphism_lm}}\label{group_isomorphism_sec}

Let us check that the map $\phi$ is correctly defined, i.e. $\phi(z) \in \GC'$, for each $z \in \GC_I$. Definitely, since $z = A \cdot (t_v + t_z) = A \cdot (q + \lfloor t_v + t_z \rceil)$, for some $q \in \ZZ^n$, it follows that $A \cdot \lfloor t_v + t_z \rceil$ is integer. Consequently, since $\lfloor \cdot \rceil \in [-1,1)^n$ by its definition, we have $\phi(z) \in \GC'$.

Similarly, let us check that $\phi(z') \in \GC'$, for each $z' \in \GC'$. Again, it is only sufficient to prove that $\phi^{-1}(z') \in \ZZ^n$. We have
\begin{multline*}
    \phi^{-1}(z') = A \cdot \bigl( t_v + \lceil t_v - t_{z'} \rfloor \bigr) = \\
    = A \cdot \bigl( t_v + (t_v - t_{z'}) + \lceil t_v - t_{z'} \rfloor - (t_v - t_{z'}) \bigr) = \\
    = A \cdot (q + t_{z'}),\quad\text{for some $q \in (2\cdot\ZZ)^n$.}
\end{multline*}
Therefore, since $z' = A \cdot t_{z'}$ is integer, the same holds for $\phi^{-1}(z')$. 

Let us check that $\phi \circ \phi^{-1} = \phi^{-1} \circ \phi = 1$, as a consequence, it means that $\phi$ is a bijection. For $z = A \cdot (t_v  +t_z) \in \GC_I$, we have
\begin{multline*}
    \phi^{-1}(\phi(z)) = A \cdot \Bigl( t_v + \bigl\lceil t_v - \lfloor t_v + t_z \rceil \bigr\rfloor \Bigr) = A \cdot \Bigl( t_v + \bigl\lceil t_v + (t_v+t_z) - \lfloor t_v + t_z \rceil - (t_v + t_z) \bigr\rfloor \Bigr) = \\
    = A \cdot \Bigl( t_v + \bigl\lceil q - t_z \bigr\rfloor \Bigr),\quad\text{for some $q \in (2\cdot\ZZ)^n$.}
\end{multline*}
Due to the relation \eqref{round_fun_prop_2}, we have
\begin{equation*}
    A \cdot \Bigl( t_v + \bigl\lceil q - t_z \bigr\rfloor \Bigr) = A \cdot \bigl(t_v + \lceil t_z \rfloor\bigr) = A \cdot \bigl(t_v + t_z\bigr) = z,
\end{equation*}
where the second equality holds, because $t_z \in [-1,1)^n$. Similarly, for $z' = A \cdot t_{z'} \in \GC'$, we have
\begin{multline*}
    \phi(\phi^{-1}(z')) = A \cdot \Bigl\lfloor t_v + \bigl\lceil t_v - t_{z'} \bigr\rfloor \Bigr\rceil = A \cdot \Bigl\lfloor t_v + (t_v - t_{z'}) + \bigl\lceil t_v - t_{z'} \bigr\rfloor - (t_v - t_{z'}) \Bigr\rceil =\\
    = A \cdot \bigl\lfloor q + t_{z'} \bigr\rceil,\quad\text{for some $q \in (2\cdot\ZZ)^n$}.
\end{multline*}
Due to the relation \eqref{round_fun_prop_1}, we have
\begin{equation*}
    A \cdot \bigl\lfloor q + t_{z'} \bigr\rceil = A \cdot \lfloor t_{z'} \rceil = A \cdot t_{z'} = z',
\end{equation*}
where the second equality holds, because $t_{z'} \in [-1,1)^n$.

Finally, let us check that $\phi$ is a homomorphism:
\begin{multline*}
    \phi(y \oplus_{\GC_I} z) = \phi\Bigl(A \cdot \bigl(t_v + \lfloor t_v + t_y + t_z \rceil\bigr)\Bigr) = A \cdot \bigl\lfloor t_v + \lfloor t_v + t_y + t_z \rceil \bigr\rceil = \\
    = A \cdot \bigl\lfloor 2t_v + t_y + t_z \bigr\rceil = A \cdot \bigl\lfloor \lfloor t_v + t_y \rceil + \lfloor t_v + t_z \rceil \bigr\rceil = \phi(y) \oplus_{\GC'} \phi(z),
\end{multline*}
where the third and forth equalities use the property \eqref{round_fun_prop_3}.

% \section{Proof of Lemma \ref{tiling_injection_lm}}\label{tiling_injection_sec}
% Let us show that the restriction of $\phi$ onto $\WC$ is injective. Choose arbitrary and different $y,z \in \WC$. Denote $y' = y + t$ and $z' = z + t$. Note that $\phi(y') \ominus \phi(z') = \phi(y) \ominus \phi(z)$. Since $y',z' \in t + \WC \subseteq \PC$, we have $\phi(y') = y'$ and $\phi(z') = z'$. Finally, 
% \begin{equation*}
%     \phi(y)\ominus\phi(z) = \phi(y') \ominus \phi(z') = y' - z' \not= 0,
% \end{equation*}
% which proves the injectivity.

\section{Error Analysis for the Generalized DFT With Respect to the Word-RAM Model}

\subsection{Preliminaries}
%\begin{theorem}[Khinchin]\label{chfr_th}
%  Let $a \in \QQ$ and $n \in \ZZ_{>0}$ be given input values. Then there exists a polynomial-time algorithm returning a rational value $p/q$ such that
%  \begin{enumerate}
%    \item $q < n$;
%    \item the value $\abs{a - p/q}$ is minimized between the all $p/q$ with $q < n$;
%    \item $\abs{a - p/q} \leq \frac{1}{q n}$;
%    \item arithmetic cost of the algorithm is $O(\log(n))$.
%  \end{enumerate}
%\end{theorem}

The following Lemma is very useful for rounding of rationals.
\begin{lemma}\label{Reps_lm}
  Let $a \in \QQ$ and $\varepsilon \in \QQ \cap (0,1)$ be given input numbers. Then, there exists a polynomial-time algorithm that returns a rational number $p/q$ with $q = \lceil 1/\varepsilon \rceil$,
  %represented as $b = z + p/q$, where $z = \lfloor a \rfloor$, 
  such that
  \begin{enumerate}
    \item $\abs{a - p/q} \leq \varepsilon$;
    \item $\size(p/q) = O\bigl(\log\bigl(\abs{a}+1\bigr)+\log(1/\varepsilon)\bigr)$;
    \item the algorithm needs $O(1)$ operations with rational numbers of the size $O\bigl(\size(a) + \size(\varepsilon)\bigr)$.
  \end{enumerate}
\end{lemma}
\begin{myproof}
    Assume that $a$ is represented as $a = p'/q'$ for co-prime integers $p'$ and $q'$. It is sufficient to put $p := \lfloor \frac{p' \cdot q}{q'} \rfloor$. Definitely,
    \begin{multline*}
      \abs{p'/q'-p/q} = \frac{1}{q} \abs{\frac{p' \cdot q}{q'} - \bigl\lfloor \frac{p' \cdot q}{q'} \bigr\rfloor} \leq 1/q = 1/\lceil 1/\varepsilon \rceil \leq \varepsilon.\\
    \end{multline*}
    Let us prove the second Claim. Note that $\size(q) = O\bigl(\log(1/\varepsilon)\bigr)$. Represent $p$ as $p = z \cdot q + r$, for $z \in \ZZ$ and $r \in \intint[0]{q-1}$. Since $\abs{a - z - r/q} = \abs{a - p/q} \leq \varepsilon < 1$, we have $\abs{z} \leq \abs{a} + \abs{a-z} \leq \abs{a} +2$. Finally, $\size(p/q) = O\bigl(\size(z)+\size(q)\bigr) = O\bigl(\log\bigl(\abs{a} + 1\bigr)+\log(1/\varepsilon)\bigr)$.
\end{myproof}

It is very useful to have a variant of Lemma \ref{Reps_lm} for complex numbers.
\begin{lemma}\label{Ceps_lm}
  Let $c = a + i \cdot b \in \CCQ$ and $\varepsilon \in \QQ \cap (0,1)$ be given input numbers. Then, there exists a polynomial-time algorithm that returns a number $z = x + i \cdot y \in \CCQ$, such that
  \begin{enumerate}
    \item $\abs{c-z} \leq \varepsilon$;
    \item $\size(z) = O\bigl(\log\bigl(\abs{c}+1\bigr)+\log(1/\varepsilon)\bigr)$;
    \item The algorithm needs $O(1)$ operations with rational numbers of the size $O\bigl(\size(c) + \size(\varepsilon)\bigr)$.
  \end{enumerate}
\end{lemma}
\begin{myproof}
    Let us use Lemma \ref{Reps_lm} separately to $a$ and $b$ with the accuracy $\varepsilon/2$. Let $x$ and $y$ be the output values. Consequently, $\abs{c-z} = \sqrt{(x-a)^2 + (y - b)^2} \leq \sqrt{\varepsilon^2/2} < \varepsilon$. Due to the construction, we have $\size(z) = O\bigl(\log\bigl(\abs{a}+1\bigr) + \log\bigl(\abs{b}+1\bigr)+\log(1/\varepsilon)\bigr)$. Clearly, $\max\bigl\{\abs{a},\abs{b}\bigr\} \leq \abs{c}$. Therefore, $\log\bigl(\abs{a}+1\bigr) + \log\bigl(\abs{b}+1\bigr) = O\bigl( \log\bigl(  \abs{c} \bigr) \bigr)$, which proves the Lemma.
\end{myproof}

The following Lemma is useful, when we need to approximate a finite sum of rational complex numbers.
\begin{lemma}\label{sum_appr_lm}
  Let $m \in \ZZ_{\geq 1}$, a rational complex number $\varepsilon$ with $0 < \varepsilon < 1/2^m$ and $\{a_j \in \CCQ\}_{j \in \intint[0]{m-1}}$ be given input numbers. Additionally, denote $S = \sum_{j} a_j$, $\gamma = \max_{j} \size(a_j)$, and assume that $\abs{a_j} \leq C$, for some absolute constant $C$ and all $j$. Then, there exists a polynomial-time algorithm that returns a number $S' \in \CCQ$, such that
  \begin{enumerate}
    \item $\abs{S - S'} \leq \varepsilon$;
    \item the algorithm needs $O(m)$ operations with rational numbers of the size $O\bigl(m + \gamma + \size(\varepsilon)\bigr)$;
    \item $\size(S') = O\bigl(\log m + \log(1/\varepsilon)\bigr)$.
  \end{enumerate}
\end{lemma}
\begin{myproof}
    Denote the partial sums of $S$ by $S_j$. We compute their approximate values $S'_j$ in the following way: starting from $S'_0 = a_0$, we compute $S'_j$, applying Lemma \ref{Ceps_lm} to $S'_{j-1} + a_j$ with the accuracy $\varepsilon/2^{m-j+1}$. Let us prove by induction that $\abs{S_j - S'_j} \leq \varepsilon/2^{m-j}$, which will also prove the first Claim. Definitely,
    \begin{multline*}
        \abs{S_j - S'_j} \leq \abs{S_j - S'_{j-1} - a_j} + \abs{S'_{j-1} + a_j - S'_j} \leq \\
        \leq \abs{S_{j-1} - S'_{j-1}} + \varepsilon/2^{m-j+1} \leq \varepsilon/2^{m-j}.
    \end{multline*}
  
    Since $\abs{S_j - S'_j} < 1$, we have $\abs{S'_j} \leq \abs{S_j} + 1 \leq C \cdot m + 1$. Therefore, due to Lemma \ref{Ceps_lm} and construction of $S'_j$, we have $\size(S'_j) = O\bigl(\log m + \log(1/\varepsilon)\bigr)$. All together, it proves the second and third Claims.
\end{myproof}

\subsection{The DFT in Cyclic Groups}

Denoting 
$$
    \epsilon_n = e^{i \cdot \frac{2 \pi}{n}},
$$ the following Lemma helps to approximate $\epsilon^k_n$, for any $k \in \intint[0]{n-1}$. 
\begin{lemma}\label{root_exp_appr_lm}
  Let $n \in \ZZ_{\geq 2}$, $k \in \intint[0]{n-1}$, and $\varepsilon \in \QQ \cap (0,1)$ be given input numbers. Then, there exists a polynomial-time algorithm that returns a number $z \in \CCQ$, such that
  \begin{enumerate}
    \item $\abs{z - \epsilon_n^k} \leq \varepsilon$;
    \item the algorithm needs $O\bigl(\log(1/\varepsilon)\bigr)$ operations with rational numbers of the size $O\bigl(\size(n) + \size(\varepsilon)\bigr)$;
    \item $\size(z) = O\bigl(\log(1/\varepsilon)\bigr)$.
  \end{enumerate}
\end{lemma}
\begin{myproof}
    Using the Taylor expansion for $e^x$, we can write $\epsilon_n^k$ in the form of absolutely convergent series:
    \begin{equation}\label{kroot_series}
        \epsilon_n^k = e^{i \cdot \frac{2 \pi k}{n}} = \sum\limits_{j \geq 0} i^j \cdot \left(\frac{2 \pi k}{n}\right)^j \cdot \frac{1}{j!}. 
    \end{equation}
    Denoting the partial sum of the first $j$ terms of \eqref{kroot_series} by $S_j$, we have
    \begin{equation*}
        \abs{\epsilon_n^k - S_{j_0}} \leq \sum\limits_{j \geq j_0} \frac{(2 \pi)^j}{j!}.
    \end{equation*}
    Due to Stirling's approximation, it follows that, for a sufficiently big constant $C$ and $m := \bigl\lceil C \cdot \log_2(1/\varepsilon)\bigr\rceil$, the inequality $\abs{\epsilon_n^k - S_{m}} \leq \varepsilon/3$ will hold. 

    Unfortunately, it is not enough to compute $S_{m}$ by a straight way, because the size of $S_{m}$ and intermediate variables will become too large. Instead, denoting the $j$-th term of \eqref{kroot_series} by $a_j$, let us use the following auxiliary Lemma.
    \begin{lemma}\label{acoeff_appr_lm}
        Let $\varepsilon \in \QQ \cap (0,1)$ and $m \in \ZZ_{\geq 1}$ be the input numbers. Then, there exists a polynomial-time algorithm that returns a sequence $\{a'_j \in \CCQ\}_{j \in \intint[0]{m-1}}$, such that
        \begin{enumerate}
            \item $\abs{a'_j - a_j} \leq \varepsilon$, for any $j \in \intint[0]{m-1}$;
            \item the algorithm needs $O(m)$ operations with rational numbers of the size $O\bigl(\size(n) + \size(\varepsilon)\bigr)$;
            \item $\size(a'_j) = O\bigl(\log(1/\varepsilon)\bigr)$, for any $j \in \intint[0]{n-1}$.
        \end{enumerate}
    \end{lemma}
    \begin{myproof}
        Let $\pi'$ be a rational approximation of $\pi$ with accuracy $\varepsilon/C$, for a sufficiently large constant $C$,  which can be constructed without any difficulties. We calculate $a'_j$ in the following way: starting from $a'_0 = 1$, we calculate $a'_j$, applying Lemma \ref{Ceps_lm} to $\frac{2 \pi' k}{j n} \cdot a_{j-1}$ with the accuracy $\varepsilon/C$. Let us prove the first Claim, using the induction principle. Denoting $\delta_{\pi} = \pi - \pi'$ and $\delta_j = a_j - a'_j$, we have 
        \begin{multline}\label{acoeff_appr}
            \abs{\delta_j} \leq \abs{a_j - \frac{2 \pi' k}{j n} \cdot a'_{j-1}} + \abs{a'_j - \frac{2 \pi' k}{j n} \cdot a'_{j-1}} \leq \\
            \leq \abs{\frac{2 \pi k}{j n} \cdot a_{j-1} - \frac{2 \pi' k}{j n} \cdot a'_{j-1}} + \varepsilon/C \leq \\
             \leq \frac{2}{j} \cdot \Bigl( \pi \cdot \abs{\delta_{j-1}} + \abs{a_{j-1}} \cdot \abs{\delta_{\pi}} + \abs{\delta_{\pi}} \cdot \abs{\delta_{j-1}} \Bigr) + \varepsilon/C.
        \end{multline}
        Let $j_0 \geq C$ be chosen, such that $\abs{a_{j}} \leq 1$, for all $j \geq j_0$. Then, due to \eqref{acoeff_appr}, for all $j \geq j_0$, we have
        \begin{multline*}
            \abs{\delta_j} \leq \frac{2}{j_0} \cdot \Bigl( \pi \cdot \varepsilon + \varepsilon/C + \varepsilon^2/C \Bigr) + \varepsilon/C \leq \frac{2\cdot (\pi+2)}{C} \cdot \varepsilon + \varepsilon/C \leq \varepsilon \cdot \frac{13}{C}, \\
        \end{multline*} 
        which proves the first Claim for $j \geq j_0$, putting $C \geq 13$. Since $\abs{a_j}$ is bounded, $\delta_0 = 0$, and, due to \eqref{acoeff_appr}, the Claim with respect to smaller values of $j$ can be satisfied, taking sufficiently large $C$.
        
        Due to Lemma \ref{Ceps_lm} and due to the construction of $a'_j$, the third Claim is also satisfied. Due to the third Claim, the values $a'_j$ can be computed with $O(m)$ operations with rational numbers of the size $O\bigl(\size(n) + \size(\varepsilon)\bigr)$, which proves the second Claim and this Lemma.
    \end{myproof}

    Let us continue the proof of Lemma \ref{root_exp_appr_lm}. Using Lemma \ref{acoeff_appr_lm}, we compute the sequence $\{a'_j\}_{j \in \intint[0]{m-1}}$ with the accuracy $\varepsilon/(3 m)$, which can be done, using $O(m)$ operations with rational numbers of the size $O\bigl(\size(n) + \size(\varepsilon)\bigr)$. Next, we use Lemma \ref{sum_appr_lm} to compute the resulting value $z$ as an approximation of the sum $S'_m = \sum_{i \in \intint[0]{m-1}} a'_j$ with the accuracy $\varepsilon/3$. Due to Lemma \ref{sum_appr_lm}, the last step needs $O(m)$ operations with rational numbers of the size $O\bigl(\log m + \size(\varepsilon)\bigr) = O\bigl(\size(\varepsilon)\bigr)$. Let us check that $\abs{\epsilon_n^k - z} \leq \varepsilon$. Definitely,
    \begin{multline*}
      \abs{\epsilon_n^k - z} = \abs{\epsilon_n^k - S'_m} + \abs{S'_m - z} \leq \\
      \leq \abs{\epsilon_n^k - S_m} + \abs{S_m - S'_m} + \varepsilon/3 \leq \varepsilon/3 + m \cdot \frac{\varepsilon}{3m} + \varepsilon/3 \leq \varepsilon,
    \end{multline*}
    which finishes the proof.
\end{myproof}

Denote the Vandermonde matrix with respect to the values $(\epsilon_n^0, \epsilon_n^1, \dots, \epsilon_n^{n-1})$ by $V_n$. That is, $V_n \in \CC^{n \times n}$ and 
$$
    \bigl(V_n\bigr)_{i j} = \epsilon_n^{i j}, \quad \text{for $i,j \in \intint[0]{n-1}$}.
$$
Note that the matrix-vector product $V_n \cdot v$ corresponds (with respect to natural ordering of elements in $\GC$ and $\widehat\GC$ and choosing a natural basis in $\CC^n$) to the DFT of $v$ as a member of $\CC[\CCal_n]$, where $\CCal_n$ denotes the cyclic group of order $n$. The seminal Cooley\&Tukey algorithm \cite{CooleyTukey} gives an $O\bigl(n \cdot (p_1 + p_2 + \dots + p_s) \bigr)$-time algorithm in exact complex arithmetic to compute $V_n \cdot v$, where $n = p_1 \cdot p_2 \cdot \ldots \cdot p_s$ is the prime factorisation of $n$. The most popular implementation of the Cooley \& Tukey algorithm assumes that $n = 2^s$. In Algorithm \ref{CT_exact_alg}, we refer to a standard recursive implementation of this case.
\begin{algorithm}
	\caption{Recursive implementation of the Culey\&Tukey algorithm for $n = 2^s$}
    \label{CT_exact_alg}
    \begin{algorithmic}[1]
    \Require Input vector $v \in \CC^n$, for $n = 2^s$;
    \Ensure Output vector $\hat v = V_n \cdot v$;
    \Procedure{CT}{$v$}
        \State $n \gets \dim(v)$;
        \If{$n = 1$}
            \State return $v$;
        \Else
            %\State Let $v_{even}$ and $v_{odd}$ be the vectors composed of the components of $v$ with the even and odd indices respectively 
            \State $v_{even} \gets (v_0, v_2, \dots, v_{n-2})$;
            \State $v_{odd} \gets (v_1, v_3, \dots, v_{n-1})$;
            \State $u_{even} \gets \Call{CT}{v_{even}}$;
            \State $u_{odd} \gets \Call{CT}{v_{odd}}$;
            \State $\hat v \gets \BZero_n$;
            \For{$k \in \intint[0]{n/2-1}$}
                \State $\hat v_{k} \gets (u_{even})_k + \epsilon_{n}^k \cdot (u_{odd})_k$;
                \State $\hat v_{n/2 + k} \gets (u_{even})_k - \epsilon_{n}^k \cdot (u_{odd})_k$;
            \EndFor
            \State \textbf{return} $\hat v$;
        \EndIf
    \EndProcedure
    \end{algorithmic}
\end{algorithm}

\begin{remark}[Approximation of $\epsilon_n^k$]\label{root_appr_rm}
  Before presenting our approximate version of the Cooley\&Tukey algorithm for an input vector $v' \in \CCQ^n$ with $n=2^s$ and an accuracy $\varepsilon \in \QQ \cap (0,1)$, we need to provide good approximations of the values $\epsilon_j^k$, for all $j \in \{1, 2, 4, \dots, 2^s\}$ and $k \in \intint[0]{j-1}$. Denoting the maximum size of components of $v'$ by $\gamma$, we choose an accuracy $\varepsilon' = \varepsilon \cdot \frac{1}{n \cdot 2^{C \cdot \gamma}}$, for a sufficiently large constant $C$. Let us denote the resulting approximate version of $\epsilon_j^k$ with the accuracy $\varepsilon'$ by $c_{j,k}$. 
  
  Due to Lemma \ref{root_exp_appr_lm}, the values of $c_{n,k}$, for all $k \in \intint[0]{n-1}$, can be computed with $O\bigl( n \cdot \bigl(\log n + \log(1/\varepsilon')\bigr)\bigr) = O\bigl(n \cdot \bigl( \log n + \gamma + \log(1/\varepsilon)\bigr)\bigr)$ operations with rational numbers of the size $O\bigl(\log  n + \size(\varepsilon)\bigr)$. The values of $c_{j,k}$, for all $j \in \{1,2,4,\dots,2^s\}$ and $k \in \intint[0]{j-1}$, can be easily derived from the values $c_{n,k}$, so it does not change the complexity. Additionally, note that $\size(c_{j,k}) = O\bigl(\log n + \log(1/\varepsilon)\bigr)$.
\end{remark}

Our approximate variation of the Cooley-Tukey algorithm, described in Algorithm \ref{CT_appr_alg}, differs from the original one just by using approximate calculations of the type $z = a + b \cdot \epsilon^k_j$. Additionally, we use Lemma \ref{Ceps_lm} to guaranty that intermediate variables will have a bounded bit-encoding size.
\begin{algorithm}
    \caption{Approximate version of Algorithm \ref{CT_exact_alg}}
    \label{CT_appr_alg}
    \begin{algorithmic}[1]
        \Require An input vector $v \in \CCQ^n$, for $n = 2^s$, and an input accuracy $\varepsilon \in \QQ \cap (0,1)$;
        \Ensure An output vector $\hat v \in \CCQ$, which is an approximate version of $V_n \cdot v$;
        \Procedure{ApproxCT}{$v,\varepsilon$}
        \State $n \gets \dim(v)$;
        \If{$n = 1$}
            \State return $v$;
        \Else
            \State $v_{even} \gets (v_0, v_2, \dots, v_{n-2})$;
            \State $v_{odd} \gets (v_1, v_3, \dots, v_{n-1})$;
            \State $u_{even} \gets \Call{ApproxCT}{v_{even},\varepsilon}$;
            \State $u_{odd} \gets \Call{ApproxCT}{v_{odd},\varepsilon}$;
            \State $\hat v \gets \BZero_n$;
            \For{$k \in \intint[0]{n/2-1}$}
                \State $\hat v_{k} \gets (u_{even})_k + c_{n,k} \cdot (u_{odd})_k$;
                \State $\hat v_{n/2 + k} \gets (u_{even})_k - c_{n,k} \cdot (u_{odd})_k$;
                \State Apply Lemma \ref{Ceps_lm} to $\hat v_{k}$ and $\hat v_{n/2 + k}$ with the accuracy $\varepsilon$; \label{ApproxCT:line:prunning}
            \EndFor
            \State \textbf{return} $\hat v$;
        \EndIf
        \EndProcedure
    \end{algorithmic}
\end{algorithm}

\begin{lemma}\label{CT_approx_lm}
  Let $v' \in \CCQ^n$ with $n = 2^s$ and a rational number $\varepsilon$ with $\varepsilon < (1/5)^s$ be given input numbers. Additionally, assume that there exists $v \in \CC^n$, such that $\norm{v-v'}_{\infty} \leq \varepsilon$. Then, there exists an Word-RAM approximate implementation of the Cooley-Tukey algorithm that returns a vector $\hat v'$, such that
  \begin{enumerate}
    \item $\norm{\hat v'-\hat v}_{\infty} \leq 5^s \cdot \varepsilon$, where $\hat v = V_n \cdot v$; 
    \item the algorithm needs $O(n \cdot s )$ operations with rational numbers of the size $O\bigl(s + \log(1/\varepsilon) + \gamma\bigr)$, where $\gamma = \max_{i} \size(v'_i)$ is the maximum of component sizes of $v'$.
  \end{enumerate} 
\end{lemma}
\begin{myproof}
    Roughly speaking, the exact version of the Cooley\&Tukey algorithm (Algorithm \ref{CT_exact_alg}) consists of $s$ levels. Fixing some level, each value $z$ of this level is calculated from two values $a,b$ of the previous level by the formula $z = a + b \cdot \epsilon_{j}^k$, for $j = 2^{l-1}$ and $k \in \intint[0]{j-1}$, according to Algorithm \ref{CT_exact_alg}. For the level $l \in \intint s$, let us denote the values, computed on this level by $\LC_l$. Let us show that $\abs{z} = n \cdot 2^{O(\gamma)}$, for any $z \in \LC_l$ and $l \in \intint s$. Definitely, for $z = a + b \cdot \epsilon^k_j \in \LC_l$, where $a,b \in \LC_{l-1}$, since $\abs{z} \leq \abs{a} + \abs{b}$, it directly follows from the induction principle that $\abs{z} \leq 2^{l} \cdot \norm{v}_{\infty} \leq n \cdot \norm{v}_{\infty}$. Since $\norm{v - v'}_{\infty} \leq \varepsilon < 1$, we have $\norm{v}_{\infty} \leq \norm{v'}_\infty + 1 = 2^{O(\gamma)}$. Therefore, we get that $\abs{z} = n \cdot 2^{O(\gamma)}$. 
  
  Similarly, let us denote the values, calculated on the $l$-th level of our approximate algorithm (Algorithm \ref{CT_appr_alg}) by $\LC'_l$, for any $l \in \intint s$. For an arbitrary $l \in \intint s$, let $z = a + b \cdot \epsilon_{j}^k \in \LC_l$ with $a,b \in \LC_{l-1}$, according to Algorithm \ref{CT_exact_alg}. Similarly, let $y = a' + c_{j,k} \cdot b'$ with $a',b' \in \LC'_{l-1}$ and $z' \in \LC'_l$ be computed from $y$, using Lemma \ref{Ceps_lm}, according to Algorithm \ref{CT_appr_alg} (see the line~\ref{ApproxCT:line:prunning}). Note that $z',a',b'$ are approximate versions of $z,a,b$. We need to prove the following Claims, using the induction principle:
  \begin{enumerate}
    \item $\abs{z - z'} \leq 5^{l} \cdot \varepsilon$; 
    \item $\size(z') = O\bigl(\gamma + \log(1/\varepsilon)\bigr)$;
    \item all the values of the level $\LC'_l$ can be computed with $O(n)$ operations with rational numbers of the size $O\bigl(\gamma + \log(1/\varepsilon)\bigr)$.
  \end{enumerate}
  Denote $\delta_a = a - a'$, $\delta_b = b - b'$, $\delta_z = z - z'$, $\delta_y = z - y$ and $\delta_c = \epsilon_j^k - c_{j,k}$. By induction, we have that $\abs{\delta_a} \leq 5^{l-1} \cdot \varepsilon$ and $\abs{\delta_b} \leq 5^{l-1} \cdot \varepsilon$. To prove the first Claim, we need to show that $\abs{\delta_z} \leq 5^{l} \cdot \varepsilon$. We have
  \begin{equation*}
    \abs{\delta_y} \leq \abs{\delta_a} + \abs{b \cdot \epsilon_j^k - b' \cdot c_{j,k}} \leq \abs{\delta_a} + \abs{\delta_c} \cdot \abs{\delta_b} + \abs{\delta_c} \cdot \abs{b} + \abs{\delta_b}.
  \end{equation*}
  Due to Remark \ref{root_appr_rm}, by the construction of $c_{j,k}$, we have $\abs{\delta_c} \leq \varepsilon \cdot \frac{1}{n \cdot 2^{C \cdot \gamma}}$. Recall that $\abs{b} = n \cdot 2^{O(\gamma)}$. Consequently, taking a sufficiently large $C$, we can assume that $\abs{\delta_c} \cdot \abs{b} \leq \varepsilon$. Consequently,
  \begin{equation*}
    \abs{\delta_y} \leq 5^{l-1} \cdot \varepsilon + \frac{5^{l-1}}{n \cdot 2^{C \cdot \gamma}} \cdot \varepsilon^2 + \varepsilon + 5^{l-1} \cdot \varepsilon \leq 4 \cdot 5^{l} \cdot \varepsilon. \\
  \end{equation*}
  By Lemma \ref{Ceps_lm}, $\abs{z'-y} \leq \varepsilon$. Consequently, 
  $$
  \abs{\delta_z} \leq \abs{z - y} + \abs{y - z'} \leq 4 \cdot 5^{l-1} \cdot \varepsilon + \varepsilon \leq 5^{l} \cdot \varepsilon, 
  $$ which proves the first Claim. Let us prove the second Claim. We have
  $$
  \abs{y} \leq \abs{z} + \abs{\delta_z} = n \cdot 2^{O(\gamma)} + 1 = n \cdot 2^{O(\gamma)}.
  $$ By Lemma \ref{Ceps_lm}, 
  $$
    \size(z') = O\bigl(\log\bigl(\abs{y}+1\bigr) + \log(1/\varepsilon)\bigr) = O\bigl(\gamma + \log n + \log(1/\varepsilon)\bigr),
  $$ which proves the second Claim. Due to Lemma \ref{Ceps_lm} and the structure of Algorithm \ref{CT_appr_alg}, we need only $O(1)$ operations to compute the value $z'$ from $a'$, $b'$ and $c_{j,k}$. The size of these values is bounded by $O\bigl( \log n + \gamma + \log(1/\varepsilon)\bigr)$, due to the second Claim and Remark \ref{root_appr_rm}. Since $\abs{\JC'_l} = n$, this reasoning proves the third Claim. Since all the Claims are proven, it finishes the proof of our Lemma.
\end{myproof}

Following Bluestein \cite{BluesteinDFT} and Baum, Clausen \& Tietz \cite{AbelianFFTImproved_inC}, the next Lemma gives an efficient way to multiply $V_n \cdot v$ for general values of $n$.
\begin{lemma}\label{cyclic_DFT_lm}
  Let $n \in \ZZ_{\geq 2}$, $v' \in \CCQ^n$ and a rational number $\varepsilon$ with $0 < \varepsilon < 1/n^{C}$, for a sufficiently large absolute constant $C$, be given input numbers. Additionally, assume that there exists $v \in \CC^n$, such that $\norm{v-v'}_{\infty} \leq \varepsilon$. Then, there exists a polynomial-time algorithm that returns a vector $\hat v'$, such that
  \begin{enumerate}
    \item $\norm{\hat v -\hat v'}_{\infty} = n^{O(1)} \cdot \varepsilon$, where $\hat v = V_n \cdot v$; 
    \item the algorithm needs $O(n \log n)$ operations with rational numbers of the size $O\bigl(\size(n) + \log(1/\varepsilon) + \gamma\bigr)$, where $\gamma = \max_{i} \size(v'_i)$ is the maximum of component sizes of $v'$.
  \end{enumerate} 
\end{lemma}
\begin{myproof}
    Following the description of Blustein's method \cite{BluesteinDFT}, given by Baum, Clausen \& Tietz \cite{AbelianFFTImproved_inC}, we rewrite the formula 
  $$
    \hat v_k = \sum\limits_{0 \leq j < n} v_j \cdot \varepsilon_n^{k j},
  $$ using $2 k j = k^2 + j^2 - (k - j)^2$, to
  $$
    \hat v_k = \epsilon_n^{k^2/2} \cdot \sum\limits_{0 \leq j < n} v_j \cdot \epsilon_n^{j^2/2} \cdot \epsilon_n^{-(k-j)^2/2}.
  $$
  Denoting $a_j = v_j \cdot \epsilon_{2n}^{j^2}$ and $b_k = \epsilon_{2n}^{-k^2}$, we obtain
  $$
  \epsilon_{2n}^{-k^2} \cdot \hat v_k = \sum\limits_{0 \leq j< n} a_j \cdot b_{k-j}.
  $$ In other words, the computation of $\hat v$ can be reduced to the cyclic convolution $a \star b$ of length $n$. It turns out that it is more efficient to simulate this convolution with respect to a cyclic convolution of increased length. Following Bluestein's original approach, let $N = 2^{\lceil \log_2 2n \rceil}$ and $\CCal_N = \langle g \rangle$ be a cyclic group of order $N$ with a generator $g$. Define $\alpha, \beta \in \CC[\CCal_N]$ in the following way:
  \begin{gather*}
    \alpha = \sum\limits_{0 \leq j < n} a_j \cdot g^j, \\
    \beta = \sum\limits_{0 \leq j < n} b_j \cdot \bigl( g^j + (-1)^n \cdot g^{N-n+j} \bigr).
  \end{gather*}
  Noticing that $N \geq 2n$, $g^N = 1$ and $b_{j+n} = (-1)^n \cdot b_{j}$, we have
  \begin{multline*}
    \alpha \star \beta = \sum\limits_{0 \leq k < n} \left( \sum\limits_{\substack{0 \leq i,j < n\\i+j = k}} a_i \cdot b_j + (-1)^n \cdot \sum\limits_{\substack{0 \leq i,j < n\\i+j-n = k}} a_i \cdot b_j \right) \cdot g^k  + g^n \cdot \bigl( \ldots \bigr) = \\
    = \sum\limits_{0 \leq k < n} \left( \sum\limits_{0 \leq j  \leq k} a_j \cdot b_{k-j} + (-1)^n \cdot \sum\limits_{k+1 \leq j < n} a_j \cdot b_{k-j+n} \right) \cdot g^k  + g^n \cdot \bigl( \ldots \bigr) = \\
    = \sum\limits_{0 \leq k < n} \left( \sum\limits_{0\leq j < n} a_j \cdot b_{k-j}\right) \cdot g^k  + g^n \cdot \bigl( \ldots \bigr).
  \end{multline*} 
  Hence, the values of $\alpha \star \beta$ in the points $g^k$ with $k \in \intint[0]{n-1}$ are exactly $\epsilon_{2n}^{-k^2} \cdot \hat v_k$.
  
  Choosing natural ordering and basis in $\GC$, $\widehat\GC$, $\CC^n$ and identifying $\alpha$, $\beta$, $\alpha \star \beta$ with the corresponding vectors in $\CC^n$, we can write $\alpha \star \beta = V_n^{-1}\bigl((V_n \cdot \alpha) \cdot (V_n \cdot \beta)\bigr)$. Therefore, the computation of $\hat v$ in the exact complex arithmetic can be done with $O(N \log N) = O(n \log n)$ operations, using the exact version of the Cooley\&Tukey algorithm (Algorithm \ref{CT_exact_alg}).
  
  Now, let us construct an approximate Word-RAM version of the described method. Firstly, we calculate approximate versions $a'_j = v'_j \cdot c_{2n,j^2}$ and $b'_k = c_{2n,-k^2}$ of $a_j$ and $b_k$. Here, as in the proof of previous Theorem \ref{CT_approx_lm}, the values $c_{j,k}$ are approximate versions of $\epsilon_j^k$ with the accuracy $\varepsilon \cdot \frac{1}{n \cdot 2^{C \cdot \gamma}}$, for a sufficiently large absolute constant $C$ (see Remark \ref{root_appr_rm}). Note that $\size(a'_j) = O\bigl(\gamma + \log n + \log(1/\varepsilon)\bigr)$ and, due to Remark \ref{root_appr_rm}, the same holds for $b'_j$. Denote $\delta_c = \max_{k \in \intint[0]{2n-1}} \abs{\epsilon_{2n}^k - c_{2n,k}}$, $\delta_v = \norm{v - v'}_\infty$, $\delta_a = \norm{a - a'}_\infty$, and $\delta_b = \norm{b - b'}_\infty$. Since $\norm{v}_\infty \leq 2^\gamma$ and due to Remark \ref{root_appr_rm}, we have
  \begin{gather*}
    \delta_a \leq \delta_v + \norm{v}_{\infty} \cdot \delta_c + \delta_v \cdot \delta_c \leq 3 \varepsilon, \quad \text{and} \\
    \delta_b \leq \delta_c \leq \frac{\varepsilon}{n \cdot 2^{C \cdot \gamma}}, \quad \text{for a sufficiently large absolute constant $C$}.
  \end{gather*}
  
   Construct the vectors $\alpha', \beta' \in \CCQ[\CCal_N]$ from the vectors $a'$ and $b'$ in the same way as the vectors $\alpha, \beta$ were constructed from $a$ and $b$. Denoting $\delta_\alpha = \norm{\alpha-\alpha'}_{\infty}$ and $\delta_\beta = \norm{\beta-\beta'}_{\infty}$, by construction, we have $\delta_\alpha \leq \delta_a \leq 3 \varepsilon$ and $\delta_\beta \leq \delta_b \leq \varepsilon$. Let $\hat \alpha, \hat \beta \in \CC[\widehat{\GC}]$ be the images of the DFT for $\alpha, \beta$. Next, construct the vectors $\hat \alpha'$ and $\hat \beta'$, using the approximate version of the Cooley\&Tukey algorithm (Algorithm \ref{CT_appr_alg}) with input vectors $\alpha'$ and $\beta'$, respectively, and the accuracy $\varepsilon$. Due to Lemma \ref{CT_approx_lm}, this step costs of $O(N \log N) = O(n \log n)$ operations with rational numbers of the size $O\bigl(\log N + \gamma + \log(1/\varepsilon)\bigr)$. Denote $\delta_{\hat \alpha} = \norm{\hat \alpha - \hat \alpha'}$ and $\delta_{\hat \beta} = \norm{\hat \beta - \hat \beta'}$. Due to Lemma \ref{CT_approx_lm}, we have
   \begin{gather*}
     \delta_{\hat \alpha} \leq 5^{\log_2(N)}\cdot \delta_\alpha = 3 \cdot 5^{\lceil \log_2(2n) \rceil} \cdot \varepsilon = n^{O(1)} \cdot \varepsilon, \quad \text{and} \\
     \delta_{\hat \beta} = 5^{\log_2(N)}\cdot \delta_\beta = n^{O(1)} \cdot \varepsilon/2^{C \cdot \gamma}.
   \end{gather*}
   
   Now, we calculate $\hat \alpha' \cdot \hat \beta'$, which costs of $O(N)$ operations with rational numbers of the size $O\bigl(\log n + \gamma + \log(1/\varepsilon)\bigr)$. Denoting $\delta_{\hat \alpha, \hat \beta} = \norm{\hat \alpha \cdot \hat \beta - \hat\alpha' \cdot \hat \beta '}_\infty$, we have
   \begin{multline*}
       \delta_{\hat \alpha, \hat \beta} \leq \delta_{\hat \alpha} \cdot \norm{\hat \beta}_\infty + \delta_{\hat \beta} \cdot \norm{\hat \alpha}_\infty + \delta_{\hat \alpha} \cdot \delta_{\hat \beta} = \\
       = n^{O(1)} \cdot N \cdot \varepsilon + \frac{n^{O(1)} \cdot \varepsilon}{2^{C \cdot \gamma}} \cdot N \cdot 2^\gamma + \frac{n^{O(1)} \cdot \varepsilon^2}{2^{C\cdot\gamma}} = n^{O(1)} \cdot \varepsilon.
   \end{multline*}
   Denote $\psi = \alpha \star \beta$. Finally, we compute resulting approximation $\psi'$ of $\psi$, applying Lemma \ref{CT_approx_lm} to $\hat \alpha' \cdot \hat \beta'$ with the accuracy $\varepsilon$ and dividing the resulting value by $N$. Since the size of elements of $\hat \alpha' \cdot \hat \beta'$ is bounded by $O\bigl(\log n + \gamma + \log(1/\varepsilon)\bigr)$, and due to Lemma \ref{CT_approx_lm}, last step costs of $O(N \log N) = O(n \log n)$ operations with the rational numbers of the same size. Denoting $\delta_\psi = \norm{\psi' - \psi}_{\infty}$, we have
   $$
   \delta_\psi = 5^{\log_2(N)} \cdot \delta_{\hat \alpha, \hat \beta} = n^{O(1)} \cdot \varepsilon.
   $$
   To extract the vector $\hat v'$, which is an approximate version of $\hat v$, we take the first $n$ values of the vector $\psi'$ and multiply them by $c_{2n,k^2}$, for $k \in \intint[0]{n-1}$. 
   
   Finally, let us estimate an error in the resulting vector $\hat v'$:
   \begin{multline*}
     \norm{\hat v - \hat v'}_{\infty} \leq \delta_c \cdot \norm{\psi}_{\infty} + \delta_\psi + \delta_\psi \cdot \delta_c = \\
     \frac{\varepsilon}{n \cdot 2^{C\cdot\gamma}} \cdot n \cdot 2^\gamma + n^{O(1)} \cdot \varepsilon + n^{O(1)} \cdot \varepsilon^2 /2^{C\cdot\gamma} = n^{O(1)} \cdot \varepsilon,
   \end{multline*}
   which finishes the proof.
\end{myproof}

\begin{remark}[Complexity to compute the DFT $N$-times]\label{cyclic_DFT_rm}
    Assume that Lemma \ref{cyclic_DFT_lm} was used $N$ times for different input vectors $v \in \CCQ^n$ with the same accuracy $\varepsilon$. Let us estimate the total complexity together with construction of the approximated values $c_{j,k}$ of $\epsilon_j^k$, for $j \in \intint[0]{n-1}$ and $k \in \intint[0]{j-1}$. Due to Lemma \ref{cyclic_DFT_lm} and Remark \ref{root_appr_rm}, we need
    $$
    O\Bigl(n \cdot \bigl( N \cdot \log n + \gamma + \log(1/\varepsilon)\bigr)\Bigr)
    $$ operations with rational numbers of the size
    $$
    O\bigl(\log n + \gamma + \size(\varepsilon)\bigr).
    $$
\end{remark}

\subsection{Proof of Theorem \ref{group_DFT_th}}\label{group_DFT_proof}

% Finally, let us present the proof of Theorem \ref{group_DFT_th}.
\begin{myproof}
    As it was already noted in Subsection \ref{generalized_DFT_subs}, taking a natural order of elements in $\GC$ and $\widehat{\GC}$ and choosing the standard basis in $\CC^n$, we can identify the function $\alpha \in \CC[\GC]$ with the corresponding vectors in $\CC^n$. Moreover, the DFT can be identified with the matrix-vector multiplication $W_{\GC} \cdot \alpha$. For $A \in \CC^{n \times n}$ and $B \in \CC^{m \times m}$, let $A \otimes B \in \CC^{(nm) \times (nm)}$ be their \emph{Kronecker product}. It can be directly checked that, if $\GC = \HC \oplus \QC$, then $W_{\GC} = W_{\HC} \otimes W_{\QC}$. Consequently, the DFT in $\CC[\GC]$ can be reduced to $\abs{\HC}$ uses of the DFT transform in $\CC[\QC]$ plus $\abs{\QC}$ uses of the DFT transform in $\CC[\HC]$. This approach is known as the Good\&Thomas FFT.

  For any $g \in \GC$, denote by $\chi_g$ the isomorphic image of $g$, corresponding to the mentioned isomorphism (see Section \ref{generalized_DFT_subs}). between $\GC$ and $\widehat\GC$ in accordance with the chosen order of elements. In these notations, the precise description of the multiplication $\hat\alpha = W_{\GC} \cdot \alpha = \bigl( W_{\HC} \otimes W_{\QC} \bigr) \cdot \alpha$ is given in Algorithm \ref{kron_prod_alg}.
  % and denoted by $AbelianDFT(\QC \oplus \HC, \alpha)$ on an input vector $\alpha$. 
  \begin{algorithm}[h]
      \caption{The DFT for the group $\QC\oplus\HC$}
      \label{kron_prod_alg}
      \begin{algorithmic}[1]
      \Require Abelian groups $\HC$ and $\QC$, equipped by the matrices $W_{\HC}$ and $W_{\QC}$, an input vector $\alpha \in \CC[\QC\oplus\HC]$; 
      \Ensure Output vector $\hat \alpha = W_{\GC} \cdot \alpha = \bigl( W_{\HC} \otimes W_{\QC} \bigr) \cdot \alpha$;
      \Procedure{AbelianDFT}{$\QC\oplus\HC,\alpha$}
          \For{$h \in \HC$}
            \State Compose a vector $\alpha_h \in \CC[\QC]$ by the rule: $\alpha_h(q) \gets \alpha(h+q)$, for $q \in \QC$;
          \EndFor
          \For{$h \in \HC$}
            \State Apply the DFT with respect to $\CC[\QC]$: $\hat \alpha_h \gets W_{\QC} \cdot \alpha_h$;
          \EndFor
          \For{$q \in \QC$}
            \State Compose a vector $\beta_q \in \CC[\HC]$ by the rule: $\beta_q(h) = \hat \alpha_h(\chi_q)$, for $h \in \HC$;
          \EndFor
          \For{$q \in \QC$}
            \State Apply the DFT with respect to $\CC[\HC]$: $\hat \beta_q \gets W_{\HC} \cdot \beta_q$;
          \EndFor
          \For{$q \in \QC$, $h \in \HC$}
            \State $\hat \alpha(\chi_q \cdot \chi_h) \gets \hat \beta_q(\chi_h)$;
          \EndFor
          \State \textbf{return} $\hat \alpha$;
      \EndProcedure
      \end{algorithmic}
  \end{algorithm}

  Let us prove by induction that Theorem \ref{group_DFT_th} holds for an arbitrary Abelian group, consisting of $s$ or fewer elements in its basis. The precise formulation is follows: {\it There exist absolute constants $I_1$, $I_2$, $I_3$, and $I_4$, such that the following five Claims hold: 
  \begin{enumerate}
    \item[(1)] for any $\varepsilon < 1/n^{I_1}$, $\norm{\hat \alpha - \hat \alpha'}_{\infty} \leq \abs{\GC}^{I_1} \cdot \varepsilon$;
    \item[(2)] the size of intermediate variables is bounded by $I_2 \cdot \bigl( \log_2 \abs{\GC} + \gamma + \size(\varepsilon) \bigr)$;
    \item[(3)] if the algorithm is used $N$ times on different inputs from $\CCQ[\GC]$, the total arithmetic cost is bounded by $I_3 \cdot \abs{\GC} \cdot \bigl(N \cdot \log_2\abs{\GC} + \gamma + \log_2(1/\varepsilon)\bigr)$;
    \item[(4)] the size of elements of the resulting vector $\hat\alpha'$ is bounded by $I_4 \cdot \bigl( \log_2 \abs{\GC} + \gamma + \size(\varepsilon) \bigr)$;
    \item[(5)] the constant $I_4$ is independent on $I_1$, $I_2$, and $I_3$.
  \end{enumerate}
  } %Note that the statement is trivial for $s = 1$.
  
  Assume that the group $\GC$ is represented by a basis $b_1, b_2, \dots, b_s$. Denote $\HC = \langle b_1 \rangle$ and $\QC = \langle b_2 \rangle \oplus \dots \oplus \langle b_s \rangle$. The proof uses an approximate version of the Algorithm \ref{kron_prod_alg}, described as Algorithm \ref{appr_kron_prod_alg}, which returns approximate version $\hat \alpha'$ of $\hat \alpha = W_{\GC} \cdot \alpha$.
  \begin{algorithm}
    \caption{Approximate version of Algorithm \ref{kron_prod_alg}}
      \label{appr_kron_prod_alg}
      \begin{algorithmic}[1]
        \Require Cyclic group $\HC$ and Abelian group $\QC$, represented by their bases, an input vector $\alpha' \in \CCQ[\QC\oplus\HC]$, an input accuracy $\varepsilon \in \QQ\cap(0,1)$; 
        \Ensure Output vector $\hat \alpha'$, which is an approximate version of $\hat \alpha$;
        \Procedure{ApproxAbelianDFT}{$\QC\oplus\HC,\alpha',\varepsilon$}
        \For{$h \in \HC$}
            \State Compose a vector $\alpha'_h \in \CCQ[\QC]$ by the rule: $\alpha'_h(q) \gets \alpha'(h+q)$, for $q \in \QC$;
          \EndFor
          \For{$h \in \HC$}
            \State $y_h \gets \Call{ApproxAbelianDFT}{\QC,\alpha'_h,\varepsilon/2}$;
            \State Use Lemma \ref{Ceps_lm} to $y_h$ with the input accuracy $\varepsilon/2$. Denote the resulting vector by $\hat \alpha'_h$;
          \EndFor
          \For{$q \in \QC$}
            \State Compose a vector $\beta'_q \in \CCQ[\HC]$ by the rule: $\beta'_q(h) = \hat \alpha'_h(\chi_q)$, for $h \in \HC$;
          \EndFor
          \For{$q \in \QC$}
            \State Use approximate algorithm for the cyclic convolution in $\CC[\HC]$, described in Lemma \ref{cyclic_DFT_lm}, to $\beta'_q$ with the input accuracy $\abs{\QC}^{I_1} \cdot \varepsilon$. Denote the resulting vector by $\hat \beta'_q$; 
          \EndFor
          \For{$q \in \QC$, $h \in \HC$}
            \State $\hat \alpha'(\chi_q \cdot \chi_h) \gets \hat \beta'_q(\chi_h)$;
          \EndFor
          \State \textbf{return} $\hat \alpha'$;
      \EndProcedure
      \end{algorithmic}
  \end{algorithm}

    According to Algorithms \ref{kron_prod_alg} and \ref{appr_kron_prod_alg}, we denote the approximate versions of the variables $\{\alpha, \hat \alpha, \alpha_h, \hat \alpha_h, \beta_q, \hat \beta_q \}$ by $\{\alpha', \hat \alpha', \alpha'_h, \hat \alpha'_h, \beta'_q, \hat \beta'_q \}$. We split the analysis of Algorithm \ref{appr_kron_prod_alg} into 4 parts: {\bf accuracy (the $(1)$-th Claim), size of the intermediate variables (the $(2)$-th Claim), number of operations (the $(3)$-th Claim), size of elements of an output vector (the $(4)$-th and $(5)$-th Claims)}.

    {\bf Accuracy analysis.}
   Due to proposed scheme, for each $h \in \HC$, we have
   $$
   \norm{\hat\alpha_h - \hat\alpha'_h}_{\infty} \leq \norm{\hat\alpha_h - y_h }_{\infty} + \norm{y_h - \hat\alpha'_h}_{\infty} \leq \abs{\QC}^{I_1} \cdot \varepsilon.
   $$
   Clearly, the same holds for $\beta'_q$, for each $q \in \QC$:
   $$
   \norm{\beta_q - \beta'_q}_\infty \leq \abs{\QC}^{I_1} \cdot \varepsilon.
   $$
   By Lemma \ref{cyclic_DFT_lm}, there exists an absolute constant $C_1$, such that
   \begin{multline*}
       \norm{\hat\beta_q - \hat\beta'_q}_\infty \leq \norm{\beta_q - \beta'_q}_\infty \cdot \abs{\HC}^{C_1} \leq \abs{\QC}^{I_1} \cdot \abs{\HC}^{C_1} \cdot \varepsilon \leq \\
       \leq \abs{\GC}^{I_1} \cdot \varepsilon \quad \text{(taking $I_1 > C_1$).}
   \end{multline*}
   Consequently,
   $$
   \norm{\hat\alpha - \hat\alpha'} \leq \abs{\GC}^{I_1} \cdot \varepsilon,
   $$
   which satisfies the Claim (1) and finishes the accuracy analysis.

    {\bf Size of the intermediate variables.}
    Due to the induction hypothesis, the size of intermediate variables inside the Lines 1--6 of Algorithm \ref{appr_kron_prod_alg} is bounded by $I_2 \cdot \bigl(\log_2\abs{\QC} + \gamma + \size(\varepsilon)\bigr)$. Due to the Claim (4) of the induction hypothesis and Lemma \ref{Ceps_lm}, the size of variables inside the Line 7 is bounded by $(I_4 + C_2) \cdot \bigl(\log_2\abs{\QC} + \gamma + \size(\varepsilon)\bigr)$, for a sufficiently big absolute constant $C_2$. Note that, for each $q \in \QC$, $\norm{\beta'_q}_\infty \leq \norm{\beta'_q - \beta_q}_\infty + \norm{\beta_q}_\infty \leq 1 + \abs{\QC} \cdot 2^\gamma = \abs{\QC} \cdot 2^{O(\gamma)}$. Hence, due to the Line 7 and Lemma \ref{Ceps_lm}, for any $q \in \QC$, the size of elements of $\beta'_q$ is bounded by $O\bigl(\log_2\abs{\QC} + \gamma + \size(\varepsilon)\bigr)$.
    Therefore, due to Lemma \ref{cyclic_DFT_lm} and choosing sufficiently large value of $C_2$, the size of intermediate variables in the steps 8--18 can be bounded by $C_2 \cdot \bigl(\log\abs{\GC} + \gamma + \size(\varepsilon)\bigr)$.
    Finally, due to proposed reasoning, taking $I_2 > I_4+C_2$, we satisfy the Claim (2) and finish the analysis of the size of intermediate variables.

    {\bf Number of operations.} Assume that Algorithm \ref{appr_kron_prod_alg} is used $N$ times on different inputs from $\CCQ[\QC\oplus\HC]$. Due to the induction hypothesis, the Lines 1--6 need totally \begin{equation}\label{QC_recursive_complexity}
    I_3 \cdot \abs{\QC} \cdot \bigl(N \cdot \abs{\HC} \cdot \log_2\abs{\QC} + \gamma + \log_2(1/\varepsilon)\bigr)
    \end{equation}
    operations. Due to Lemma \ref{Ceps_lm}, the Line 7 needs totally $O\bigl(N \cdot \abs{\QC} \cdot \abs{\HC}\bigr)$ operations. Due to Remark \ref{cyclic_DFT_rm}, the Lines 8--18 cost
    $$
    C_3 \cdot \abs{\HC} \cdot \bigl(N \cdot \abs{\QC} \cdot \log_2\abs{\HC} + \gamma_\beta + \log_2(1/\varepsilon') \bigr) \Bigr)
    $$ operations, where $C_3$ is a sufficiently large absolute constant, $\gamma_\beta$ is the maximum size of elements in $\beta'_q$, for $q \in \QC$, and $\varepsilon' = \abs{\QC}^{I_1} \cdot \varepsilon$. Choosing a sufficiently large value of $C_3$, we can assume that the total computational cost of the Line 7 is hidden inside $C_3$.
    Since $\gamma_\beta \leq C_4 \cdot (\log_2\abs{\QC} + \gamma)$, for some absolute constant $C_4$, the computational cost of the Lines 8--18 can be estimated by
    \begin{multline}
        C_3 \cdot \abs{\HC} \cdot \bigl(N \cdot \abs{\QC} \cdot \log_2\abs{\HC} + \gamma + (C_4-I_1)\cdot \log_2 \abs{\QC} + \log_2(1/\varepsilon) \bigr) \leq \notag\\
        \leq C_3 \cdot \abs{\HC} \cdot \bigl(N \cdot \abs{\QC} \cdot \log_2\abs{\HC} + \gamma + \log_2(1/\varepsilon) \bigr),\quad \text{taking $I_1 > C_4$.} \label{HC_recursive_complexity}
    \end{multline}
    Taking $I_3 > C_3$, since $\abs{\QC} \leq \abs{\GC}/2$ and $\abs{\HC} \leq \abs{\GC}/2$, the total complexity can be estimated by
    \begin{multline*}
        I_3 \cdot N \cdot  \abs{\GC} \cdot \log_2\abs{\GC} + I_3 \cdot \bigl(\gamma + \log_2(1/\varepsilon)\bigr) \cdot \bigl(\abs{\QC} + \abs{\HC}\bigr) \leq \\
        \leq I_3 \cdot  \abs{\GC} \cdot \bigl(N \cdot \log_2\abs{\GC} + \gamma + \log_2(1/\varepsilon)\bigr),
    \end{multline*}
    which proves the Claim (3) and finishes the arithmetic complexity analysis.

    {\bf Output size.}
    Due to Lemma \ref{cyclic_DFT_lm}, the output size is bounded by $C_5 \cdot \bigl(\log_2(\HC) + \gamma_\beta + \log_2(1/\varepsilon')\bigr) \leq (C_5+C_4) \cdot (\log_2(\GC) + \gamma + \log_2(1/\varepsilon)\bigr)$, for some absolute constant $C_5$. Taking $I_4 > C_4 + C_5$, we satisfy the Claims (4) and (5), which finishes analysis of output size. Therefore, all the Claims are satisfied, which finishes the proof of the theorem.
\end{myproof}

\end{appendices}

\addcontentsline{toc}{section}{References}
\bibliography{parts/grib_biblio}

\end{document}